\def\sphere{\mathrm{sphere}}
\def\sign{\mathrm{sign}}
\def\inner#1#2{{\langle #1,#2 \rangle}}
\def\dt{\mathrm{d}t}
\def\diag{\mathrm{diag}}
\def\Mult{\mathrm{Mult}}
\def\JS{\mathrm{JS}}
\def\JB{\mathrm{JB}}
\def\IS{\mathrm{IS}}
\def\KL{\mathrm{KL}}
\def\Bhat{\mathrm{Bhat}}
\def\dmu{\mathrm{d}\mu}
\def\Pr{\mathrm{Pr}}
\def\Euc{\mathrm{Euc}}
\def\Burg{\mathrm{Burg}}
\def\Shannon{\mathrm{Shannon}}
\def\st{\ :\ }
\def\vectwo#1#2{\left[\begin{array}{l}#1 \cr #2 \end{array}\right]}
\def\dettwotwo#1#2#3#4{\left|\begin{array}{ll}#1 & #2\cr #3 & #4\end{array}\right|}
\def\sqr{\mathrm{sqr}}
\def\eqdef{:=}
\def\bbR{\mathbb{R}}
\def\vectwo#1#2{\left[\begin{array}{l}#1 \cr #2 \end{array}\right]}
\def\dettwotwo#1#2#3#4{\left|\begin{array}{ll}#1 & #2\cr #3 & #4\end{array}\right|}
\def\QED{\ensuremath{{\square}}}
\def\markatright#1{\leavevmode\unskip\nobreak\quad\hspace*{\fill}{#1}}
\newenvironment{proof}
 {\begin{trivlist}\item[\hskip\labelsep{\bf Proof.}]}
 {\markatright{\QED}\end{trivlist}}
\def\primal{\textcolor{red}{$\bullet$}}
\def\dual{\textcolor{blue}{$\bullet$}}
\newtheorem{proposition}{Proposition}
\newtheorem{theorem}{Theorem}
\newtheorem{property}{Property}
\title{On geodesic triangles with right angles in a dually flat space\footnote{This work was published in~\cite{BregmanManifold-PIGTA-2021}. This report further contains a section on Bregman balls and Bregman spheres in \S\ref{sec:BBall}.}}
\author{Frank Nielsen\footnote{E-mail: {\tt Frank.Nielsen@acm.org}. Web: \url{https://FrankNielsen.github.io/}}\\ Sony Computer Science Laboratories, Inc.\\ Tokyo, Japan}
\date{}
\begin{document}
\maketitle

\begin{abstract}
The dualistic structure of statistical manifolds in information geometry yields  eight types of geodesic triangles passing through three given points, the triangle vertices.
The interior angles of geodesic triangles can sum up to $\pi$ like in Euclidean/Mahalanobis flat geometry, or exhibit otherwise angle excesses or angle defects.
In this paper, we initiate the study of geodesic triangles in dually flat spaces, termed Bregman manifolds, where a generalized Pythagorean theorem holds. We consider non-self dual Bregman manifolds  since Mahalanobis self-dual manifolds amount to Euclidean geometry.
First, we show how to construct geodesic triangles with either one, two, or three interior right angles, whenever it is possible.
Second, we report a construction of triples of points for which the dual Pythagorean theorems hold simultaneously at a point, yielding two dual pairs of dual-type geodesics with right angles at that point.
\end{abstract}

\noindent {\bf Keywords}: Dually flat space, Bregman divergence, geodesic triangle, right angle triangle, Pythagorean theorem, angle excess/defect, Mahalanobis manifold, Itakura-Saito manifold, (extended) Kullback-Leibler manifold, multinoulli manifold.

\section{Introduction and motivation}

In Euclidean geometry, it is well-known that the sum of the three interior angles of any triangle {\em always} sum up to $\pi$ 
(Figure~\ref{fig:triangleEuc}), and that  Euclidean triangles can have {\em at most} one right angle ($\frac{\pi}{2}$ radians or $90^{\circ}$).
These facts are not true anymore in hyperbolic geometry nor in spherical geometry~\cite{NonEucl-2012}, where the total sum of the interior angles of a triangle may vary~\cite{Wolf-1972}:

\begin{itemize}
\item In hyperbolic geometry, hyperbolic triangles have always   {\em angle defects}, 
meaning that the total sum of the three interior angles of any hyperbolic triangle is always strictly less than $\pi$.
In the extreme case of hyperbolic ideal triangles, the total sums of their interior angles vanish since the interior angle at an ideal vertex is always $0$. Figure~\ref{fig:curvatureTriangle} displays a right angle hyperbolic triangle and a hyperbolic ideal triangle.

\item In spherical geometry, spherical triangles have always {\em angle excesses}, 
meaning that the sum of interior angles of any spherical triangle is always strictly greater than $\pi$, and is provably upper bounded by $3\pi$ radians or $540$ degrees. 
Moreover, there exist spherical triangles with one, two, or three right angles, as depicted in
Figure~\ref{fig:curvatureTriangle}.
\end{itemize}

More generally, in Riemannian geometry~\cite{Rie-1997}, the angle excess or defect of a geodesic triangle reflects the total curvature enclosed by the geodesic triangle via the Gauss-Bonnet formula;
The general Gauss-Bonnet formula states that the integral of the scalar
curvature of a closed surface is $2\pi$ times the Euler characteristic $\chi$ of that surface (a topological characteristic).
The hyperbolic geometry, Euclidean geometry and  spherical geometry can be studied under the framework of Riemannian geometry,
as a manifold of constant negative curvature, a flat manifold, and a manifold of constant positive curvature, respectively.

\begin{figure}
\centering
\includegraphics[width=0.8\textwidth]{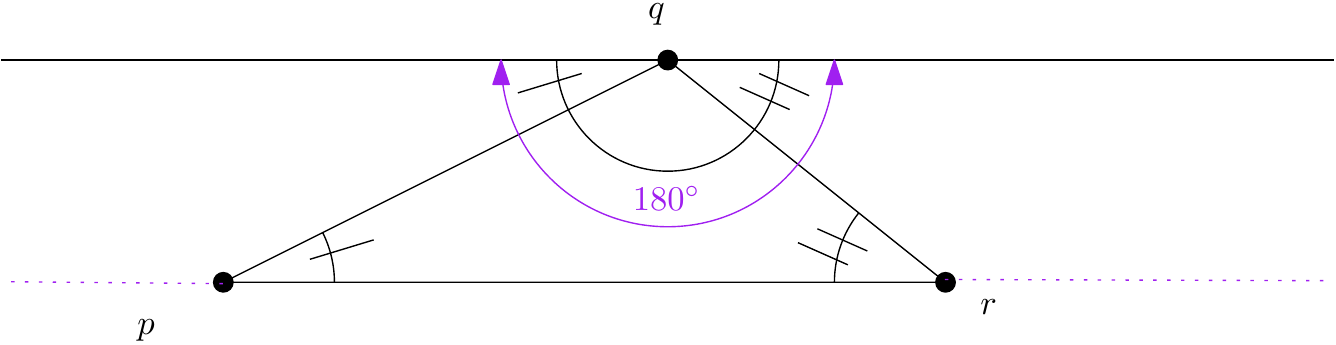}

\caption{Proof without words: The sum of the three interior angles of any triangle in Euclidean geometry always sum up to $\pi$ ($180$ degrees). Euclidean triangles dot not show angle defect nor angle excess: The Euclidean space is flat.
\label{fig:triangleEuc}}

\end{figure}

In this work, we consider the {\em dualistic structure} $(M,g,\nabla,\nabla^*)$ of information geometry~\cite{Eguchi-1992,IG-2016} which can be derived from the statistical manifold structure of Lauritzen~\cite{StatMfd-1987} $(M,g,C)$:
Namely, a simply connected smooth manifold $M$ is equipped with two dual torsion-free affine connections $\nabla$ and $\nabla^*$ such that these connections are coupled with the metric tensor $g$, meaning that their induced dual parallel transport preserves the metric~\cite{EIG-2018}.  We shall describe in details that dualistic structure in~\S\ref{sec:BM}.

Any distinct pair of points $p$ and $q$ on the manifold $M$ can either be joined by a primal $\nabla$-geodesic arc $\gamma_{pq}$ or by a dual $\nabla^*$-geodesic arc $\gamma_{pq}^*$. 
Thus any triple of points $(p,q,r)$ of $M$ can be connected pairwise using one of $2\times \binom{3}{2}=6$ primal/dual geodesic arcs. 
Let us parameterize\footnote{More precisely, a geodesic $\gamma_{pq}^\nabla(t)$  with respect to an affine connection $\nabla$ 
satisfies $\nabla_{\dot\gamma_{pq}} \dot\gamma_{pq}=0$. A $\nabla$-geodesic is an autoparallel curve at it is invariant by affine reparameterization of $t$ (i.e., $t'=at+b$).} these primal/dual geodesic arcs by $\gamma_{pq}(t)$ and  $\gamma_{pq}^*(t)$ 
so that $\gamma_{pq}(0)=\gamma_{pq}^*(0)=p$ 
and $\gamma_{pq}(1)=\gamma_{pq}^*(1)=q$.
Denote by $v_{pq}\eqdef \left.\frac{d}{\dt}\gamma_{pq}(t)\right|_{t=0} = \dot\gamma_{pq}(0)$ and
 $v_{pq}^*\eqdef \eqdef \left.\frac{d}{\dt}\gamma_{pq}^*(t)\right|_{t=0} = \dot\gamma_{pq}(0)$ the tangent vectors of the tangent plane $T_p$ to the primal and dual geodesics, respectively.
Two vectors $u,v\in T_p$ are orthogonal in $T_p$ (denoted notationally by $u\perp_p v$) iff. $g_p(u,v)=0$: 
\begin{equation}
u\perp_p v \Leftrightarrow g_p(u,v)=0.
\end{equation}
More generally, two smooth curves $c_1(t)$ and $c_2(t)$ on the manifold are said orthogonal at a point $p=c_1(t_1)=c_2(t_2)$ 
iff. $g_p(\dot c_1(t_1),\dot c_2(t_2))=0$.

\begin{figure}
\centering

\begin{tabular}{cc}
\includegraphics[width=0.25\textwidth]{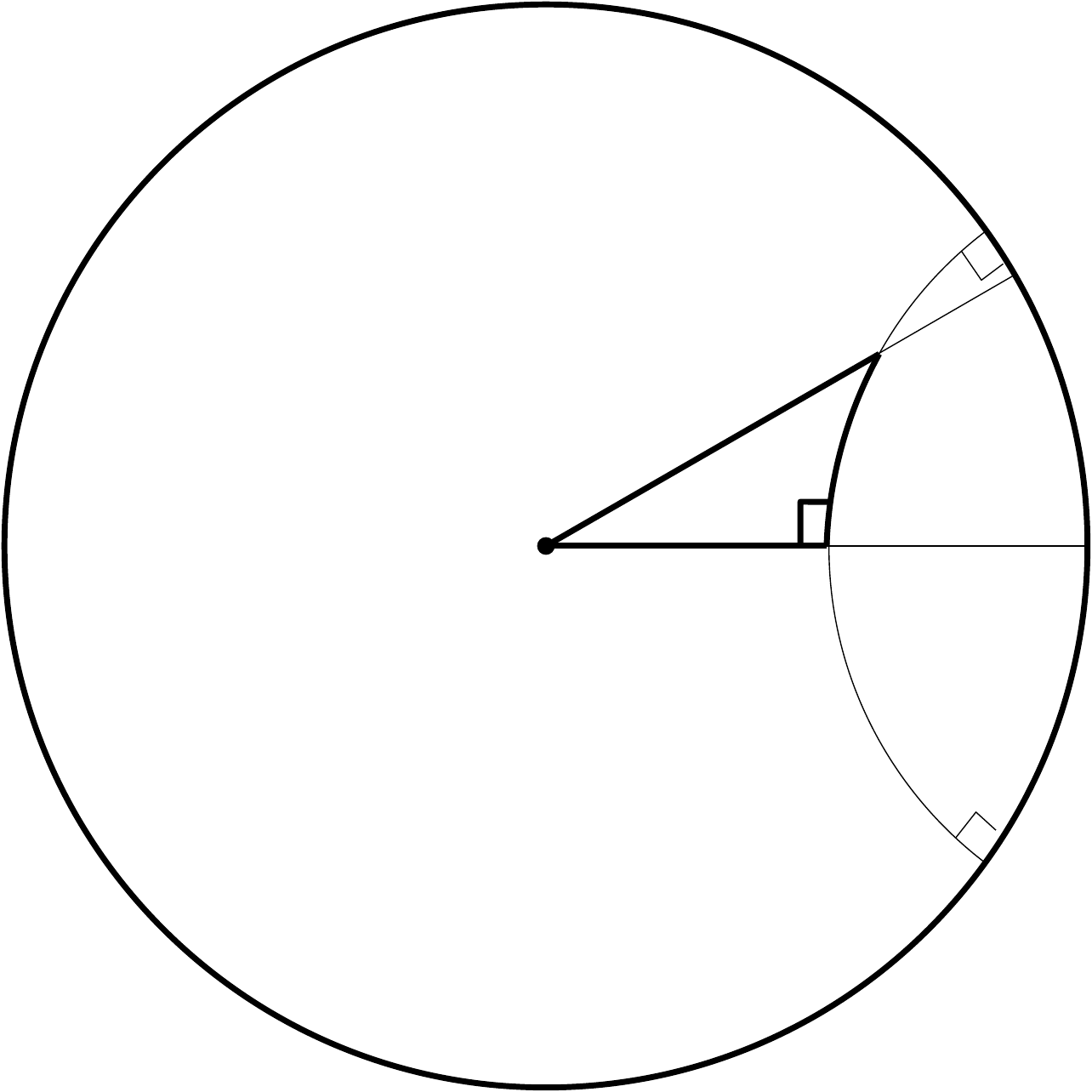}&
\includegraphics[width=0.25\textwidth]{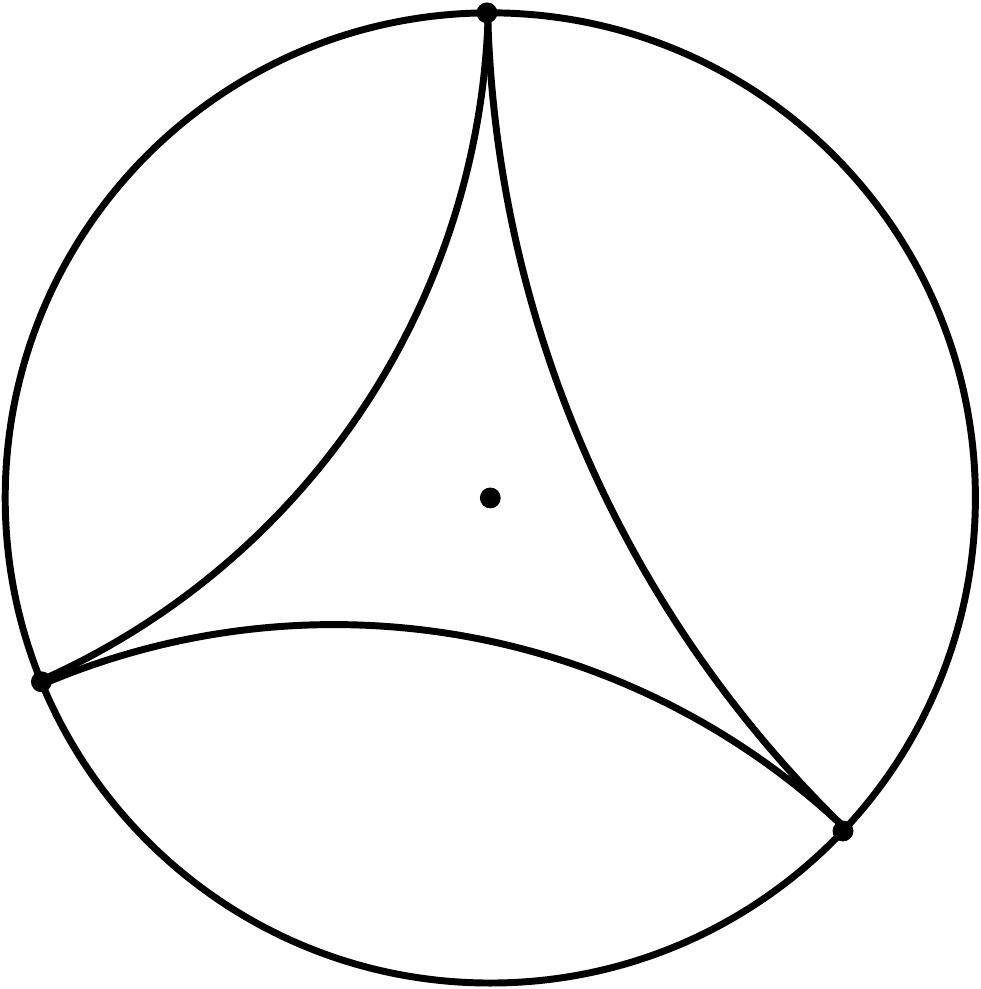}\\
(a) & (b)
\end{tabular}

\begin{tabular}{ccc}
\includegraphics[width=0.25\textwidth]{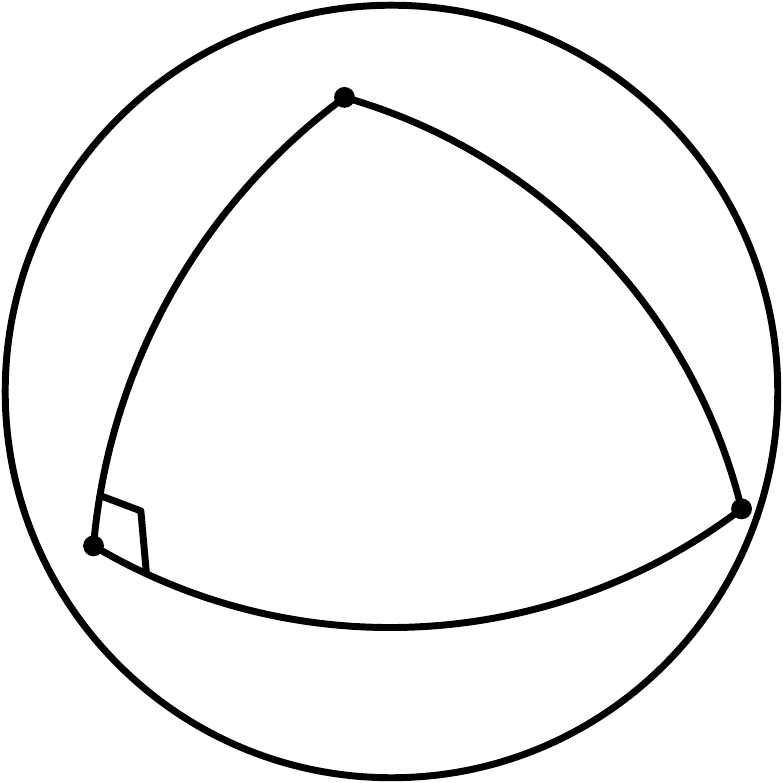}&
\includegraphics[width=0.25\textwidth]{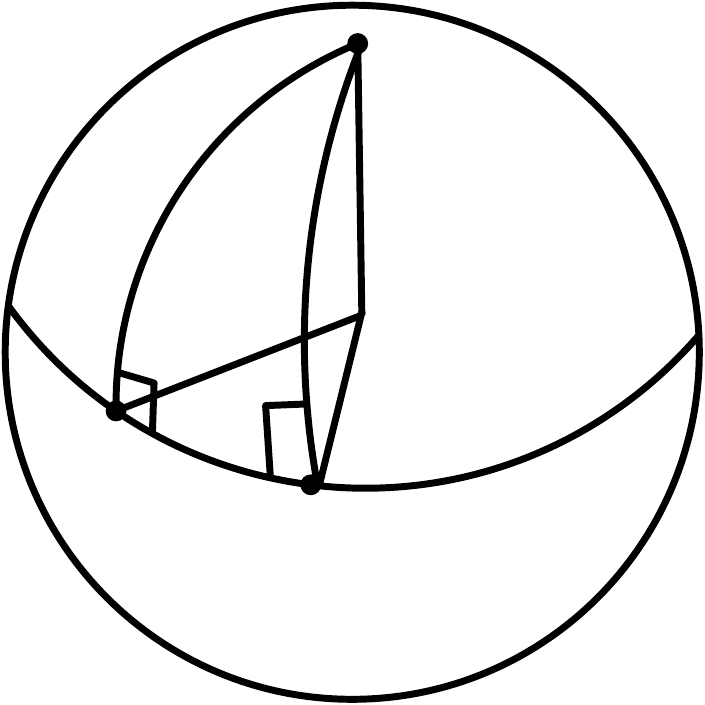}&
\includegraphics[width=0.25\textwidth]{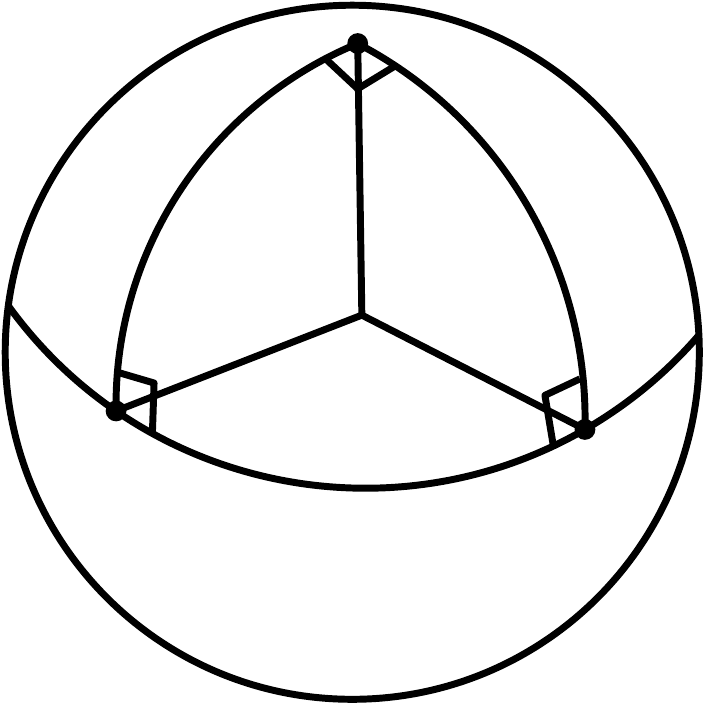}\\
(c) & (d) & (e)
\end{tabular}

 \caption{In hyperbolic geometry, triangles have angle defects: Visualization in the Poincar\'e conformal disk model.
(a) hyperbolic triangle with one right angle, (b) hyperbolic ideal triangle with all interior angles equal to zero (and area always $\pi$).
In spherical geometry, triangles have angle excesses: Visualization of spherical triangles on the unit 3D sphere.
(c) spherical triangle with one right angle, (d) spherical triangle with two right angles, and (e) spherical triangle with three right angles.
\label{fig:curvatureTriangle}
}
\end{figure}

A geodesic triangle $T$ passing through three points $p$, $q$, and $r$ (i.e., the triangle vertices) is a triangle with edges linking two vertices defined either by  primal or dual geodesic arcs.
Thus the dualistic structure of information geometry yields $2^3=8$ {\em types} of geodesic triangles passing through any three distinct given points $p$, $q$ and $r$.
Let ``p'' or \primal{} stands for {\em primal} and ``d'' or \dual{} stands for {\em dual}. 
A primal geodesic arc $\gamma_{pq}$ and a dual geodesic arc $\gamma_{pq}^*$  
may also be written $\gamma_{pq}^{\mbox{\primal{}}}$  
and $\gamma_{pq}^{\mbox{\dual{}}}$, respectively.
The $8$ types of geodesic triangles are:  
\primal\primal\primal{} ppp, 
\primal\primal\dual{} ppd, 
\primal\dual\dual{} pdd, 
\primal\dual\primal{} pdp, 
\dual\dual\dual{} ddd,  
\dual\dual\primal{} ddp, 
\dual\primal\primal{} dpp, and 
\dual\primal\dual{} dpd. 
We can group these eight types of geodesic triangles into two groups of four geodesic triangles each as shown in Table~\ref{tab:type}.

\begin{table}
\begin{center}
\begin{tabular}{|l|l|}\hline
Geodesic triangle $T$ & Dual geodesic triangle $T^*$\\ \hline
$T=\gamma_{pq}\gamma_{qr}\gamma_{rp}$ (type \primal\primal\primal{} ppp  : a $\nabla$-triangle) & $T^*=\gamma_{pq}^*\gamma_{qr}^*\gamma_{rp}^*$ (type \dual\dual\dual{} ddd: a $\nabla^*$-triangle)\\
$T=\gamma_{pq}\gamma_{qr}\gamma_{rp}^*$ (type \primal\primal\dual{} ppd) & $T^*=\gamma_{pq}^*\gamma_{qr}^*\gamma_{rp}$ (type \dual\dual\primal{} ddp)\\
$T=\gamma_{pq}\gamma_{qr}^*\gamma_{rp}^*$ (type \primal\dual\dual{} pdd) & $T^*=\gamma_{pq}^*\gamma_{qr}\gamma_{rp}$ (type \dual\primal\primal{}  dpp) \\
$T=\gamma_{pq}\gamma_{qr}^*\gamma_{rp}$ (type \primal\dual\primal{} pdp) & $T^*=\gamma_{pq}^*\gamma_{qr}\gamma_{rp}^*$ (type \dual\primal\dual{} dpd) \\ \hline
\end{tabular}
\end{center}
\caption{Eight types of geodesic triangles paired by duality.\label{tab:type}}
\end{table}

A geodesic triangle $T^*$ is dual to another geodesic triangle $T$ iff. each geodesic arc of $T^*$ is corresponding to a dual geodesic arc of $T$.
For example, the triangle  $T^*=\gamma_{pq}^*\gamma_{qr}\gamma_{rp}$ (type \dual\primal\primal{}  dpp)  is dual to the triangle $T=\gamma_{pq}\gamma_{qr}^*\gamma_{rp}^*$ (type \primal\dual\dual{} pdd).
At each triangle vertex, we have  $4$ geodesic arcs defining $\binom{4}{2}=6$ angles.
A triple of points defines $8$ geodesic triangles with $4\times 3=12$ interior angles.\footnote{We do consider at a triangle vertex only pairs of geodesics with interior angles linking the two other triangle vertices.}
The interior angle $\alpha_p(c_{pq},c_{pr})$ of a geodesic triangle $T=c_{pq}c_{qr}c_{rp}$ (where $c_{ab}$ stands  either  for $\gamma_{ab}$ or for $\gamma_{ab}^*$) at $p$ is measured according to the metric tensor $g$ as follows:
\begin{equation}\label{eq:angle}
\alpha_p(c_{pq},c_{pr}) = \arccos \left( \frac{g_p(\dot c_{pq}(0), \dot c_{pr}(0))}{\|\dot c_{pq}(0)\|_p \  \|\dot c_{pr}(0)\|_p} \right) 
=\alpha_p(c_{pr},c_{pq}),
\end{equation}
where $\|v\|_p=\sqrt{g_p(v,v)}$ measures the length of vector $v\in T_p$.
The total sum $\alpha(T)$ of the three interior angles of a  geodesic triangle $T=c_{pq}c_{qr}c_{rp}$ is defined by
\begin{equation}\label{eq:totalangle}
\alpha(T) \eqdef  \alpha_p(c_{pq},c_{pr}) + \alpha_q(c_{qp},c_{qr}) + \alpha_r(c_{rp},c_{rq}).
\end{equation}

We may specify a geodesic triangle as follows:
 $T_E(p,q,r)$ where $p,q,r\in M$  are the three triangle vertices (points on $M$), and $E\in(t_1\in\{p,d\},t_2\in\{p,d\},t_3\in\{p,d\})$ is the type of primal/dual geodesic edges of the triangle so that $c_{ab}^{t_i}=\gamma_{ab}$ if $t_i=p$ and $c_{ab}^{t_i}=\gamma_{ab}^*$ if $t_i=d$.
A geodesic triangle $T=T_{E}(p,q,r)$ with all primal geodesics arcs ($E=(p,p,p)$) is called a {\em $\nabla$-triangle}, and its dual geodesic triangle $T^*=T_{E^*}(p,q,r)$ (with all dual geodesic arcs, i.e., $E^*=(d,d,d)$) is called a {\em $\nabla^*$-triangle}. 

By considering duality of geodesic triangles and permutations of the triple of points $p$, $q$ and $r$, we may reduce the study of interior angles between a pair of geodesics at a vertex to three types of primal geodesic triangles (type ppp, pdp, and pdd) with their corresponding dual geodesic triangles (type ddd, dpd and dpp, respectively).

A particular case of information-geometric manifolds are {\em dually flat spaces}~\cite{IG-2016} induced by a strictly convex and $C^3$ function $F$ called the {\em potential function}. 
In a dually flat space $M$, there are two global dual affine coordinate systems $\theta$ and $\eta$  (with $M=\{ p \st \theta(p)\in\mathrm{dom}(F)\}$),
and a generalization of the Pythagoras' theorem holds~\cite{IG-2016}. 
The divergence $D_F(p:q)$ (potentially oriented distance, $D_F(p:q)\not =D_F(q:p)$) between two points $p$ and $q$ can be expressed using the Bregman divergence $B_F$ on their primal coordinates:
\begin{equation}
D_F(p:q) = B_F(\theta(p):\theta(q))= F(\theta(p))-F(\theta(q))-(\theta(p)-\theta(q))^\top\nabla F(\theta(q)).
\end{equation}
Whenever it is clear from context, we shall write $D$ instead of $D_F$ for conciseness.

Since the canonical divergence of a dually flat space $(M,g,\nabla,\nabla^*)$  amounts to a Bregman divergence~\cite{IG-2016}, we shall also call these dually flat spaces {\em Bregman manifolds} $(M,F)$ in the remainder.
When the Bregman generator is $F_\Euc(\theta)=\frac{1}{2}\theta^\top\theta$, we recover the Euclidean geometry (a self-dual Bregman manifold)
 where the {\em Euclidean divergence} is half of the squared of the Euclidean distance.
Beware that the Euclidean distance is a metric distance but the Euclidean divergence is not. 

Figure~\ref{fig:geotriangles} displays the $8$ types of geodesic triangles in a Bregman manifold obtained for 
the 2D generator $F_\Burg(\theta)=-\log\theta^1-\log\theta^2$ (called Burg negentropy~\cite{BDcentroid-2009}) with corresponding Bregman divergence called the Itakura-Saito divergence~\cite{BDcentroid-2009,BVD-2010}.

\begin{figure}
\centering
\begin{tabular}{|c|c|}\hline
\includegraphics[width=0.4\textwidth]{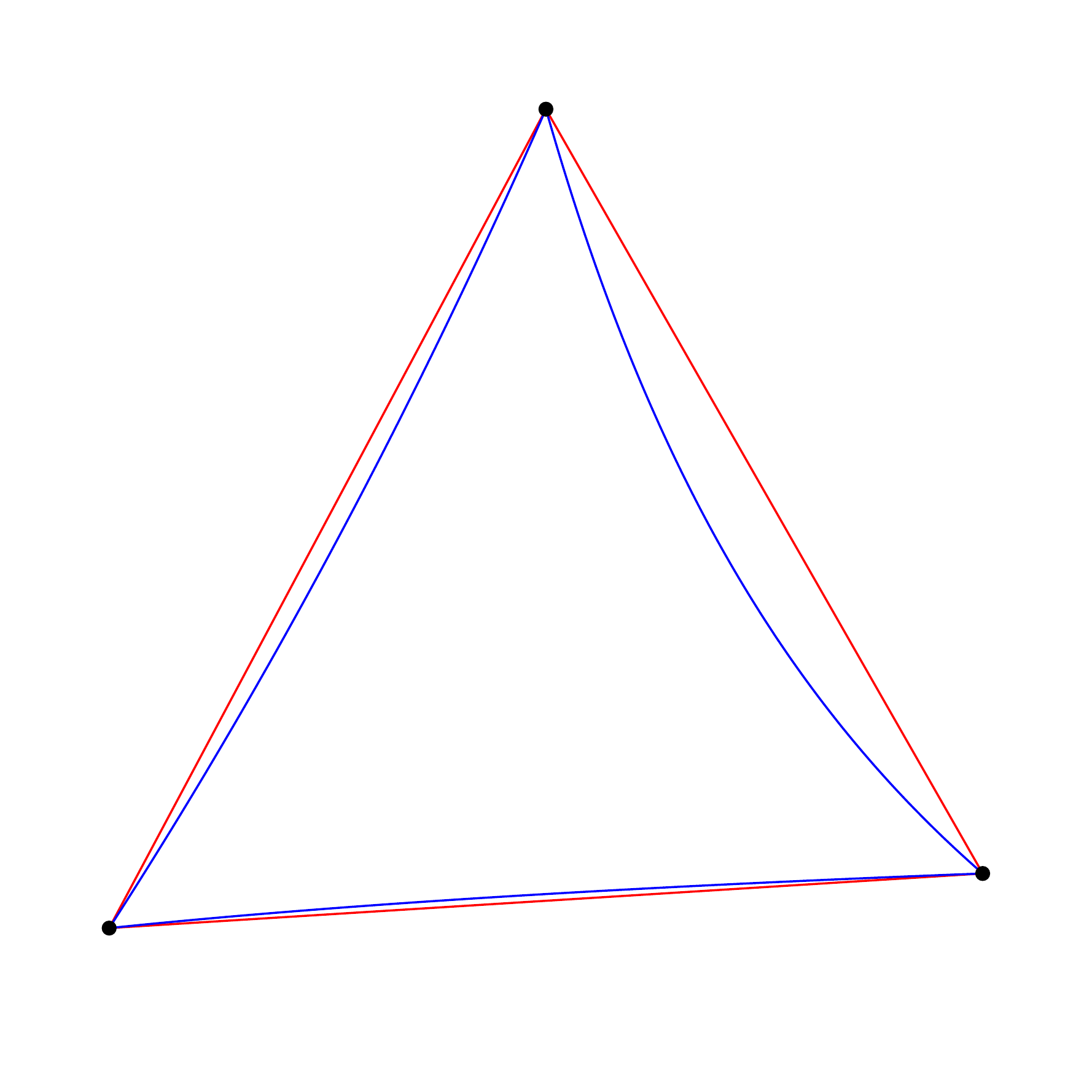}&
\includegraphics[width=0.4\textwidth]{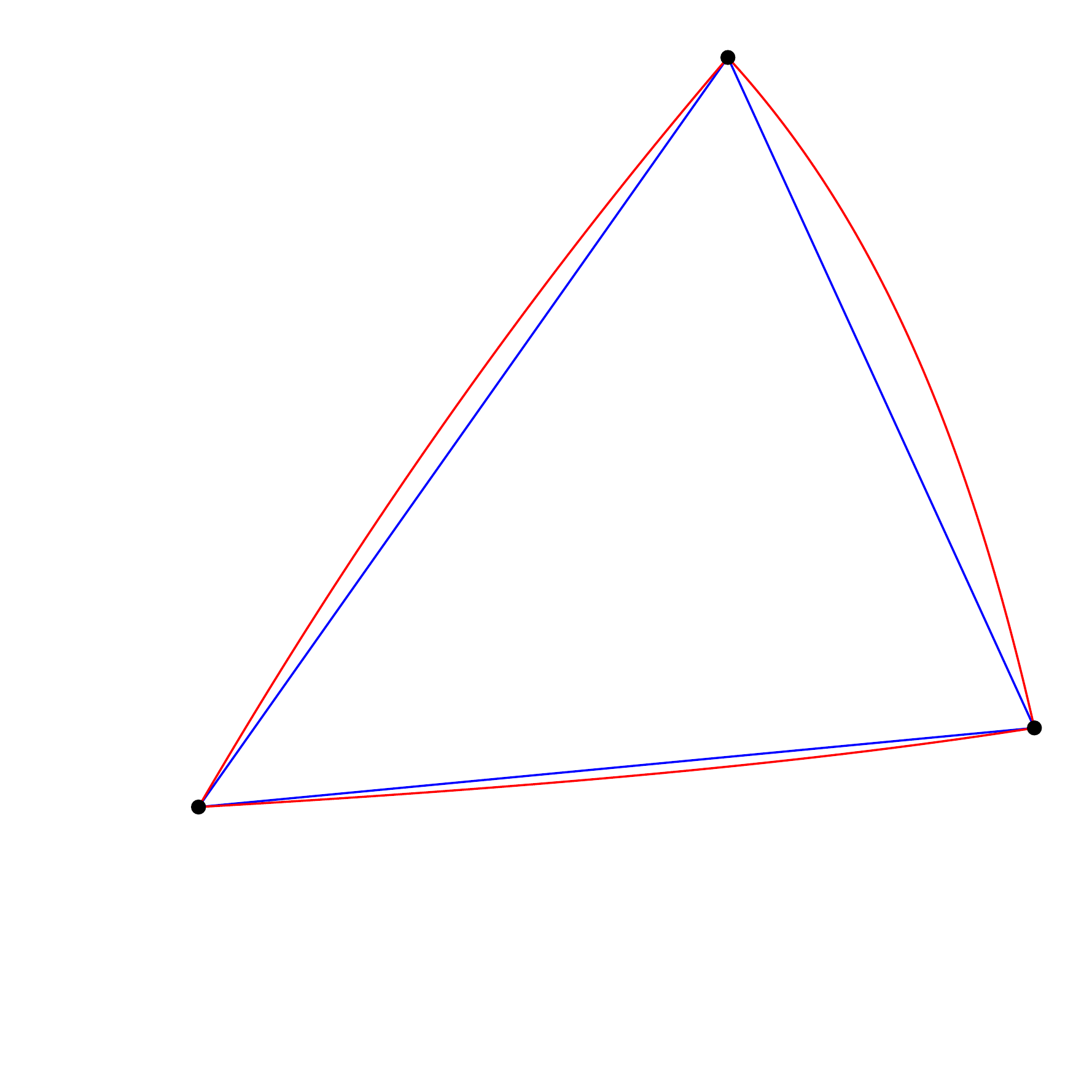} \\ \hline
$\theta$-coordinates & $\eta$-coordinates \\ \hline
\end{tabular}

\caption{
 The $2\times 3=6$ potential geodesic arcs (edges) of a geodesic triangle $T$ visualized both in the primal $\theta$-coordinate system (left) and in the dual $\eta$-coordinate system (right).
The vertices are $\theta(p)=(0.55,0.575)$, $\theta(q)=(0.75,0.95)$ and $\theta(r)=(0.95,0.6)$ for the Bregman manifold defined by the 2D Burg negentropy: $F_\Burg(\theta^1,\theta^2)=-\log(\theta^1)-\log(\theta^2)$.
Primal geodesic arcs are shown in red and are visualized as straight line segments in the $\theta$-coordinate system.
Dual geodesic arcs are shown in blue and are visualized as straight line segments in the $\eta$-coordinate system.
\label{fig:allgeodesics}
}
\end{figure}

\def\ttt{0.20}
\begin{sidewaysfigure}
\centering
\begin{tabular}{l|cccc}
$\theta$ & \includegraphics[width=\ttt\textwidth]{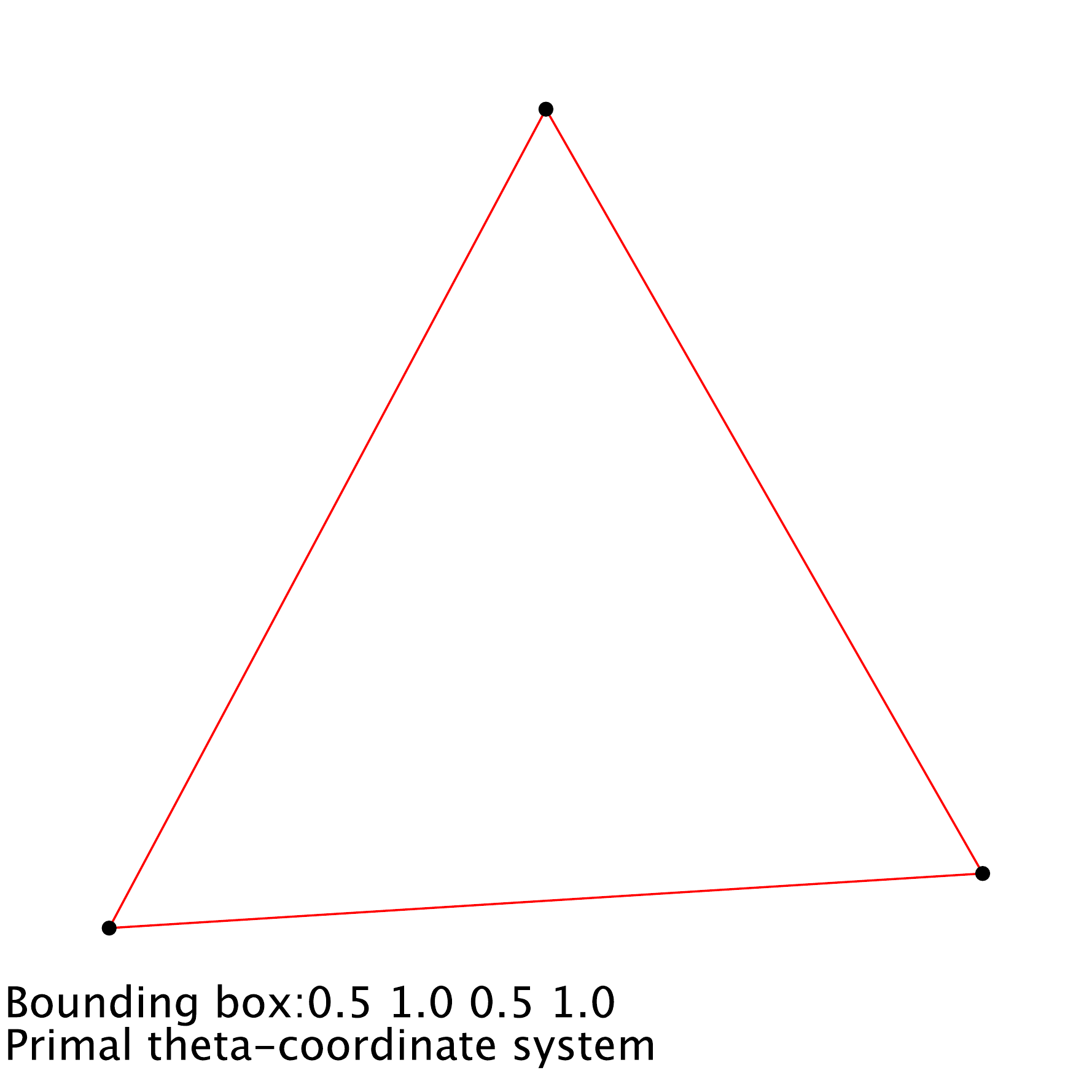}&
\includegraphics[width=\ttt\textwidth]{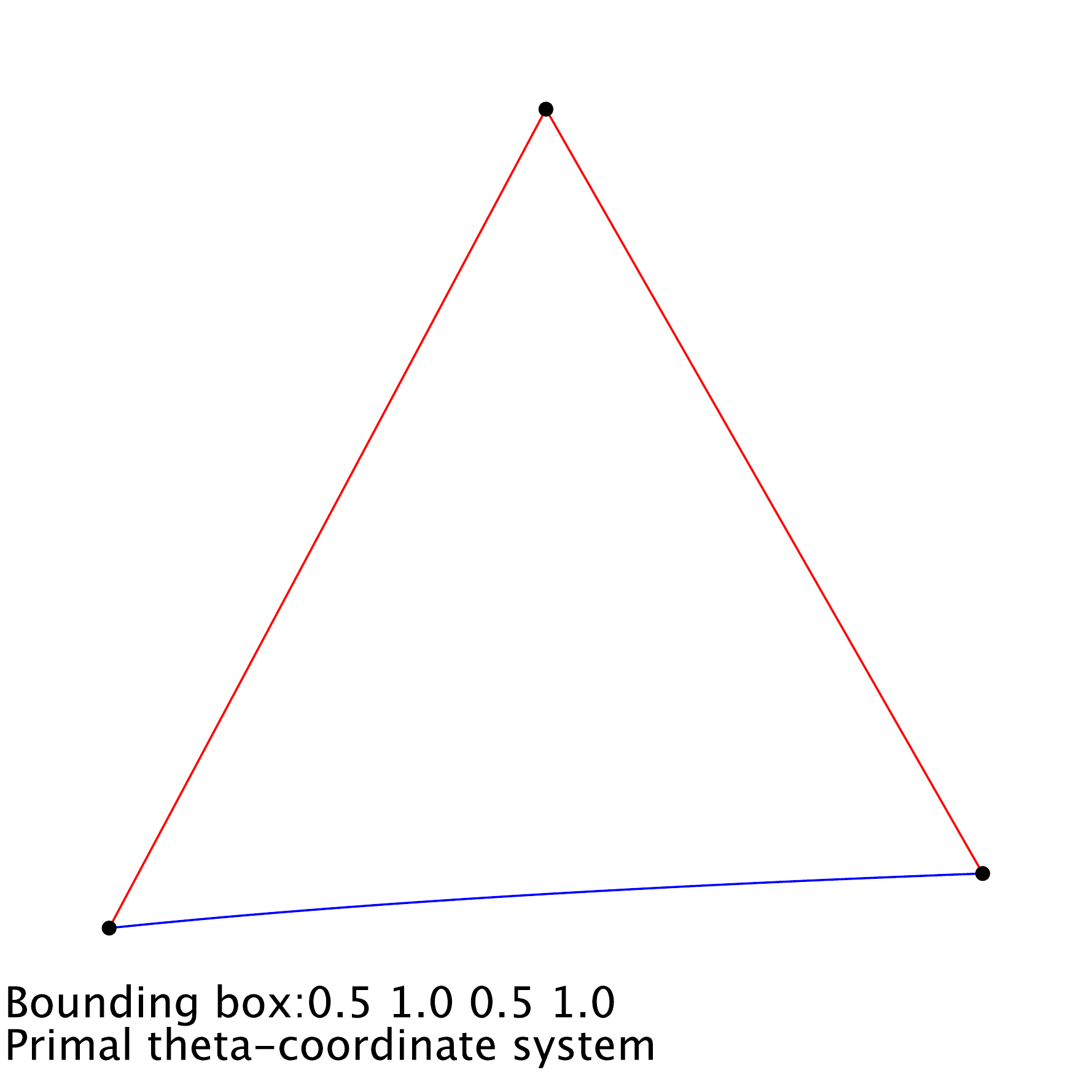}&
\includegraphics[width=\ttt\textwidth]{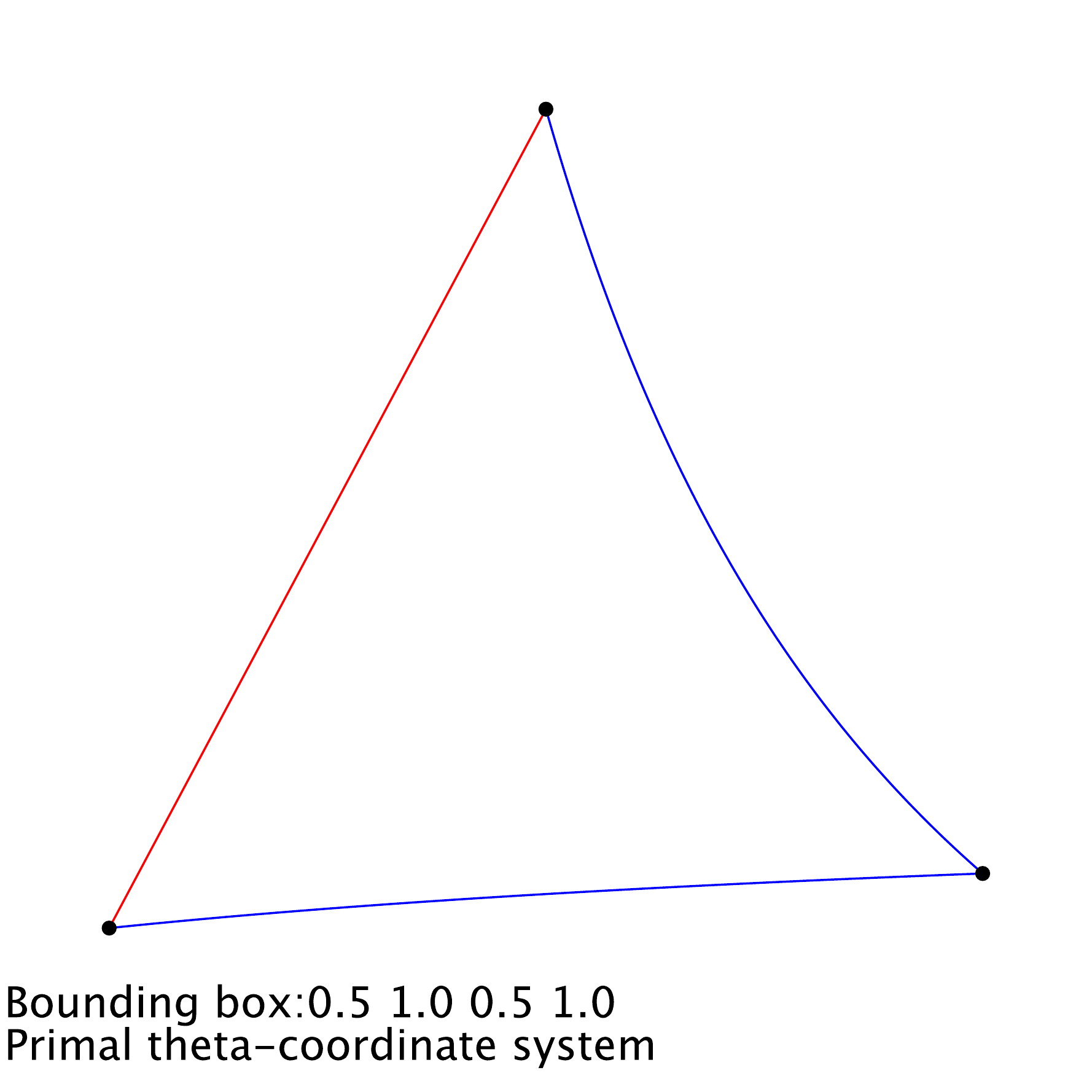}&
\includegraphics[width=\ttt\textwidth]{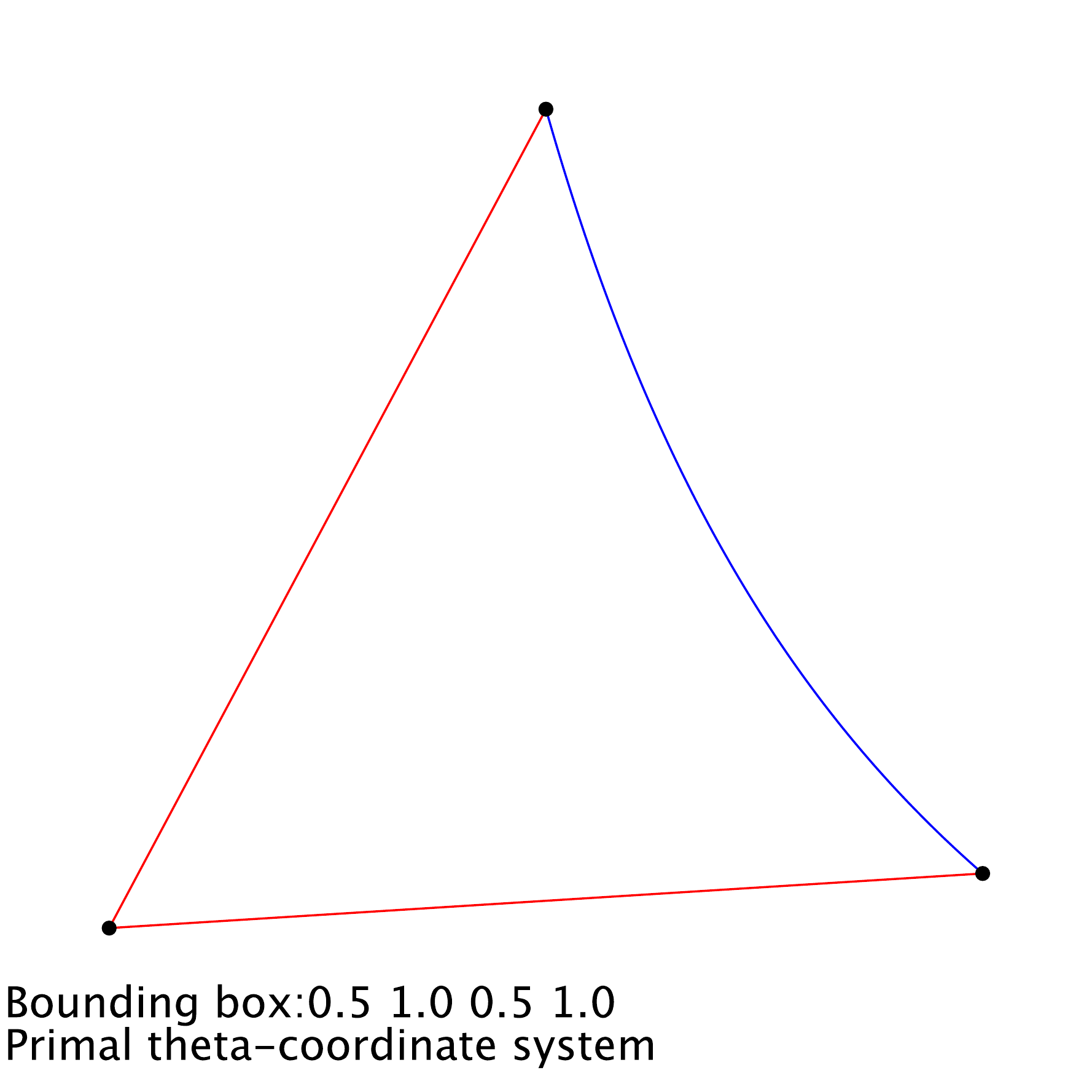}\\
$\eta$ &\includegraphics[width=\ttt\textwidth]{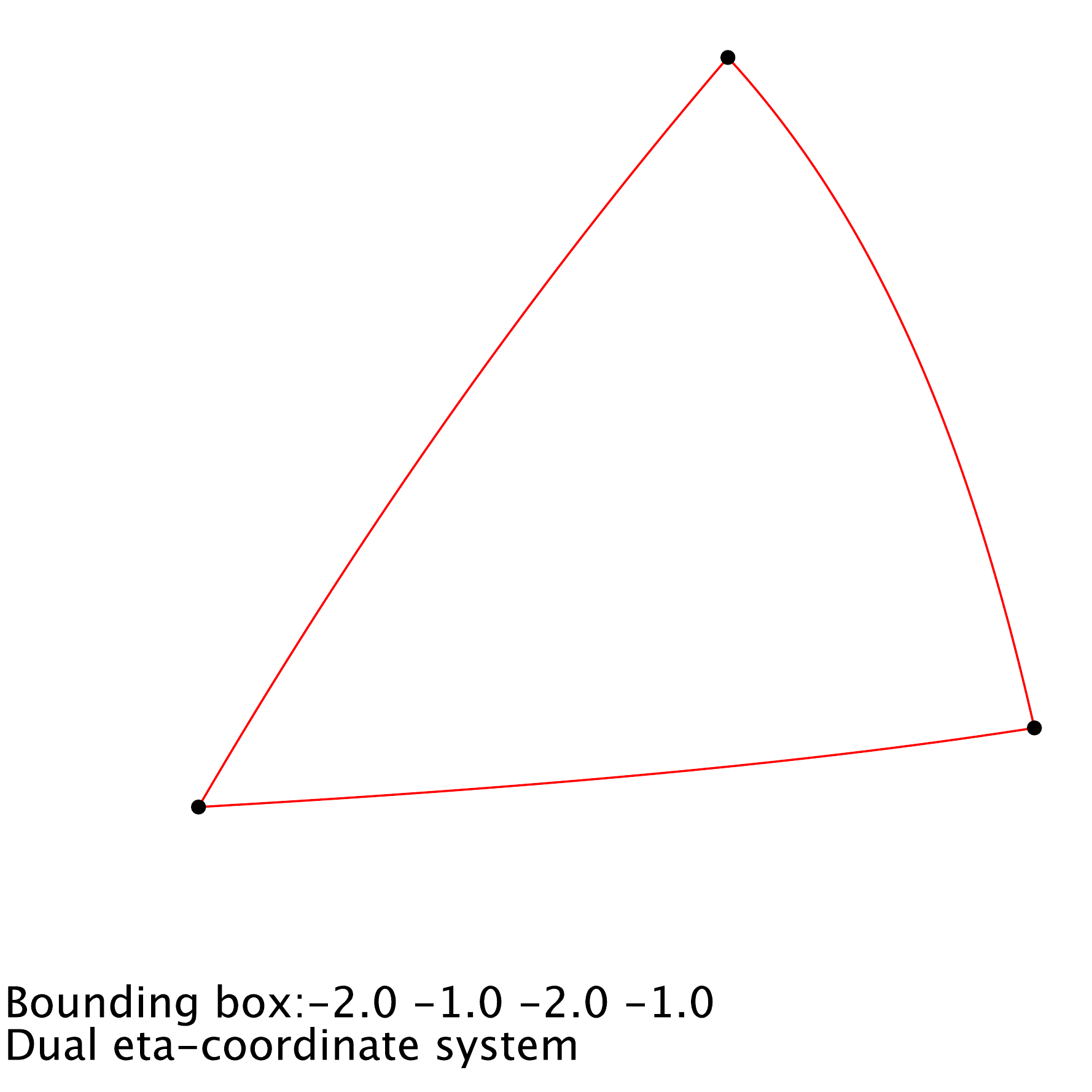}&
\includegraphics[width=\ttt\textwidth]{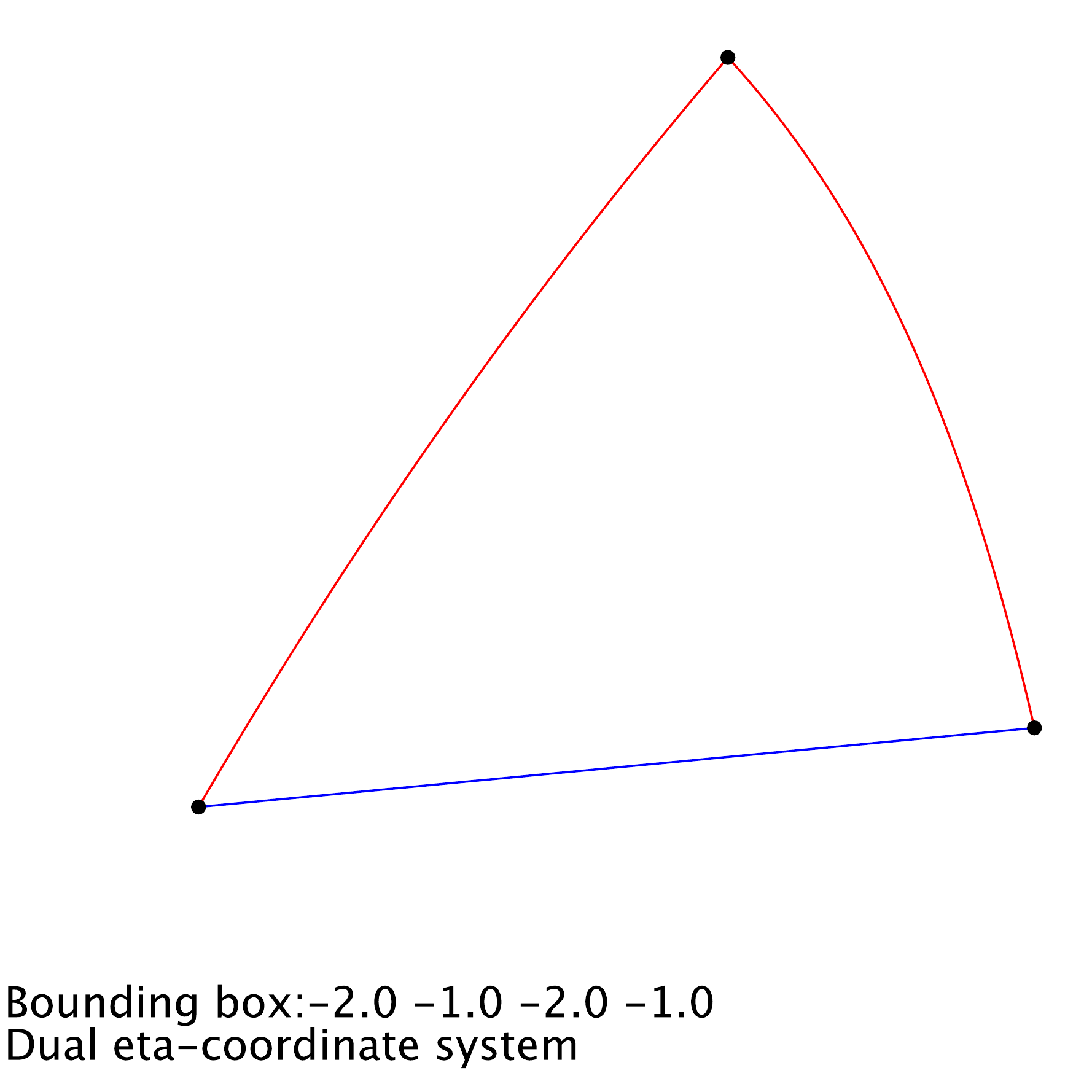}&
\includegraphics[width=\ttt\textwidth]{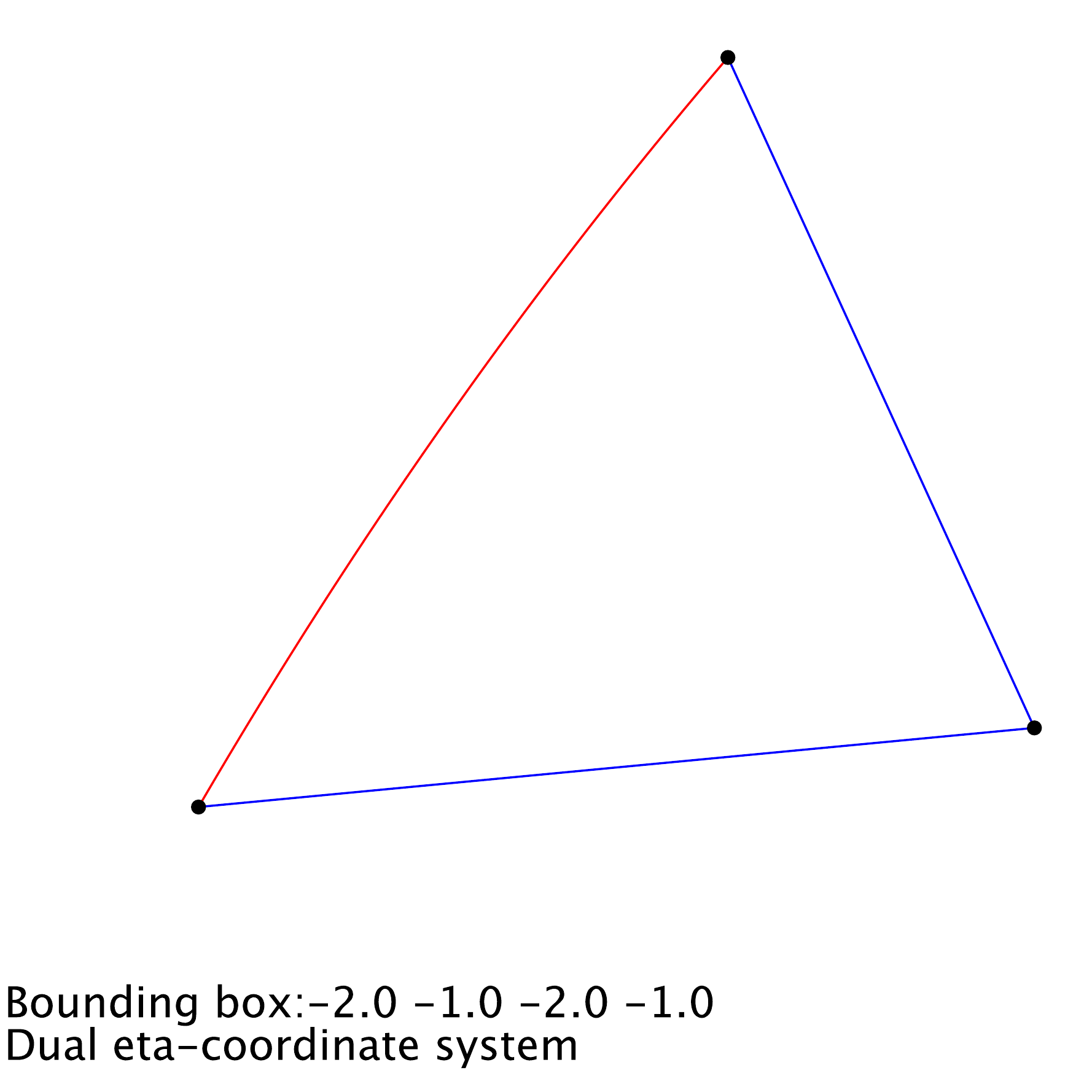}&
\includegraphics[width=\ttt\textwidth]{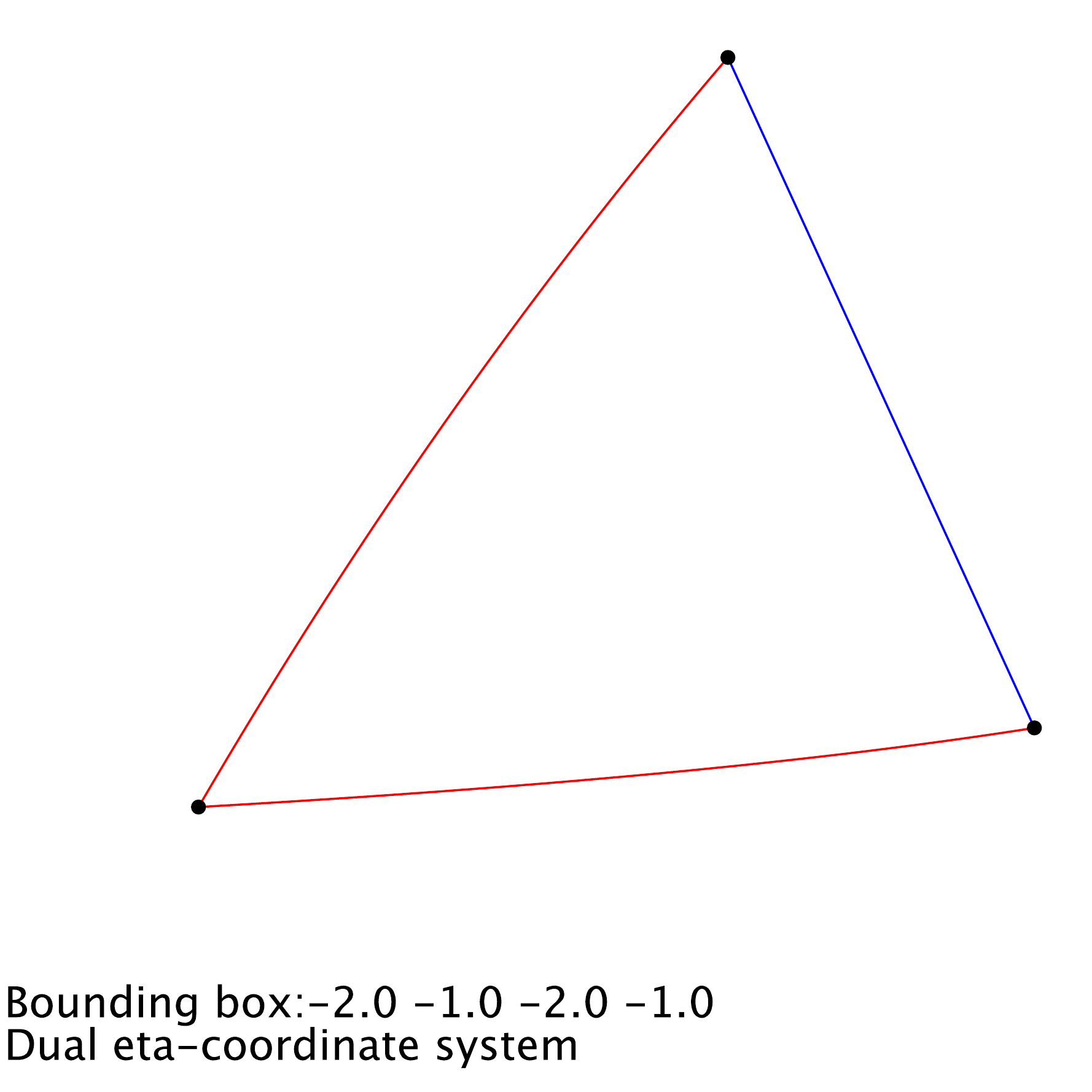}
\end{tabular}

\noindent\rule{\textwidth}{1pt}

\begin{tabular}{l|cccc}
$\theta$ &  \includegraphics[width=\ttt\textwidth]{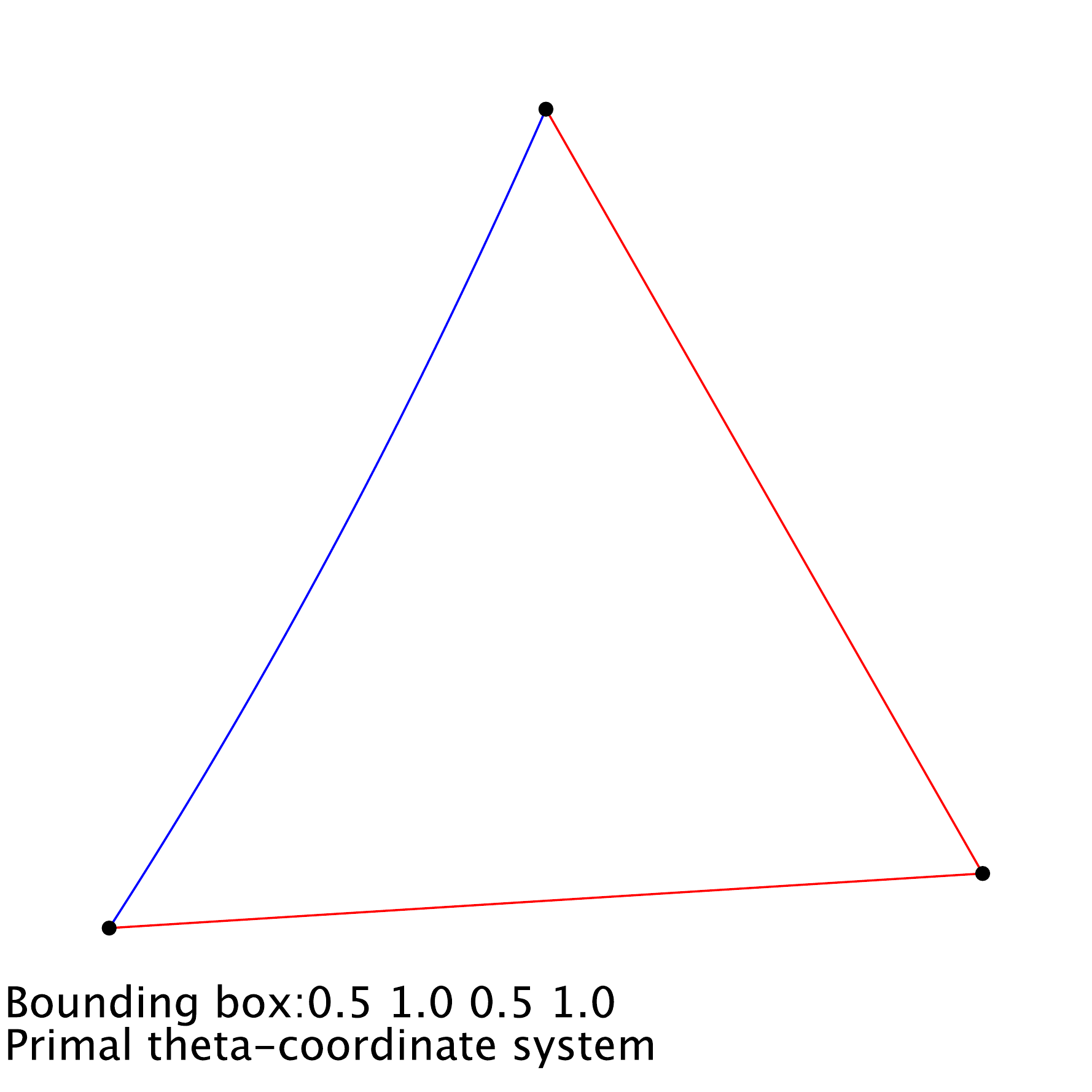}&
\includegraphics[width=\ttt\textwidth]{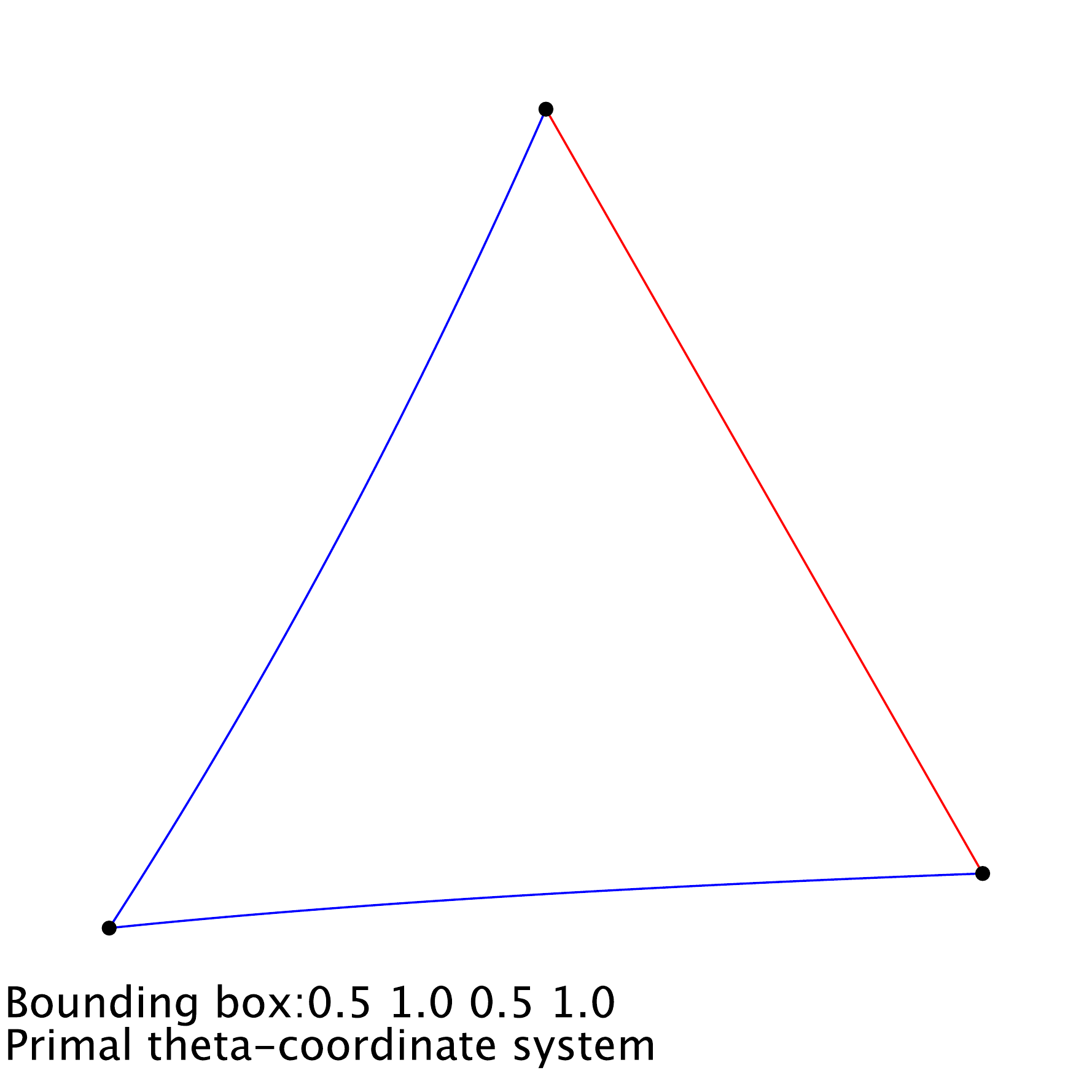}&
\includegraphics[width=\ttt\textwidth]{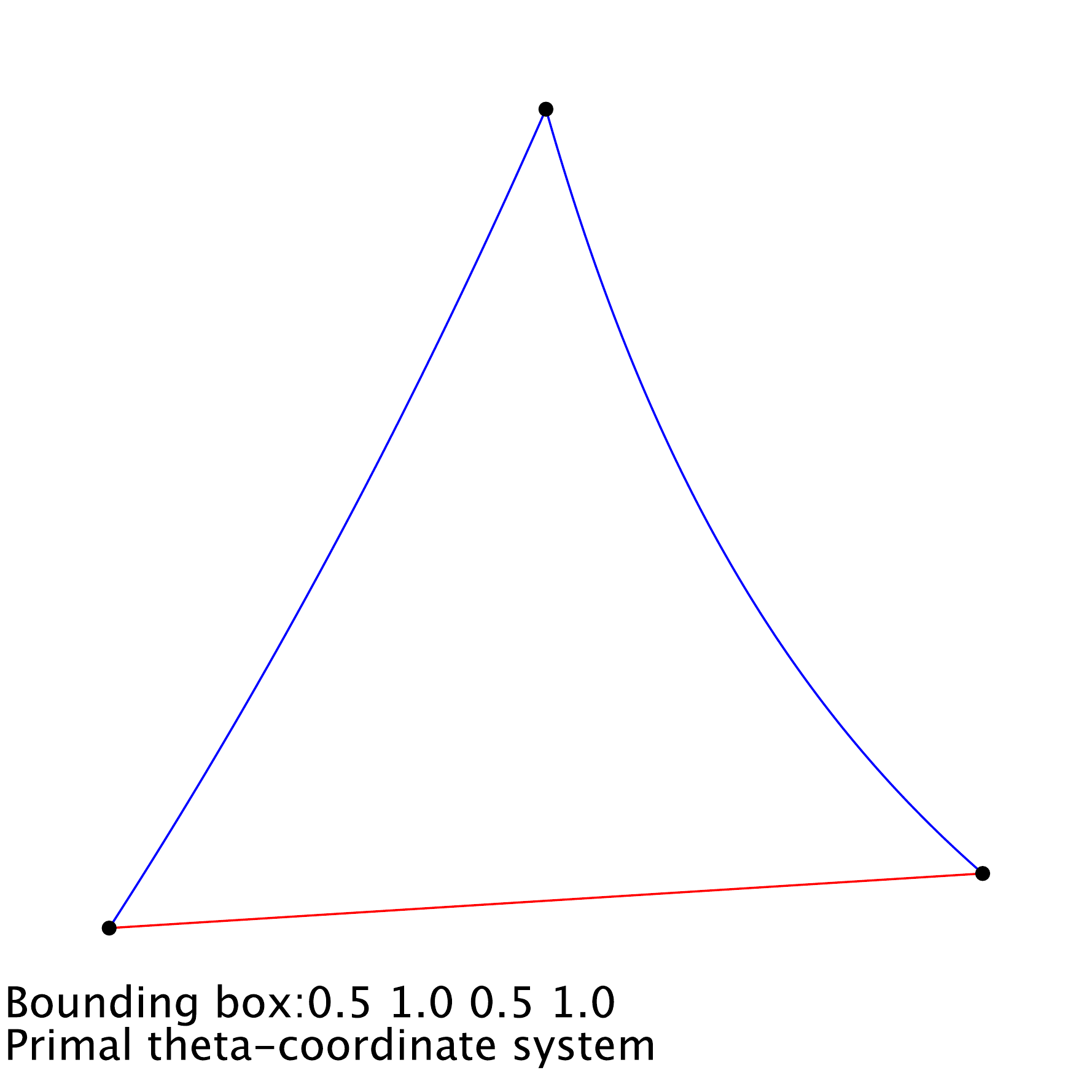}&
\includegraphics[width=\ttt\textwidth]{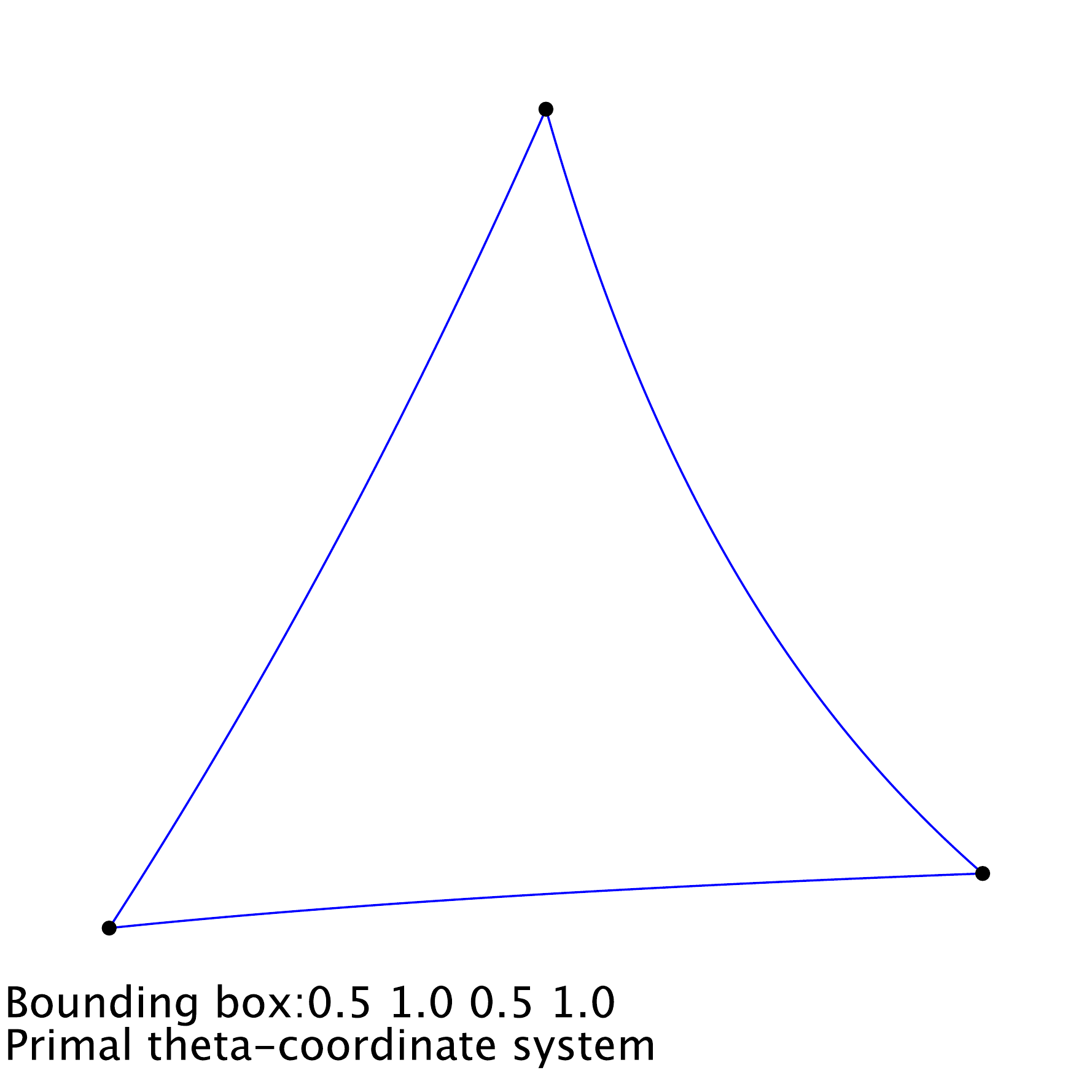}\\
$\eta$ &\includegraphics[width=\ttt\textwidth]{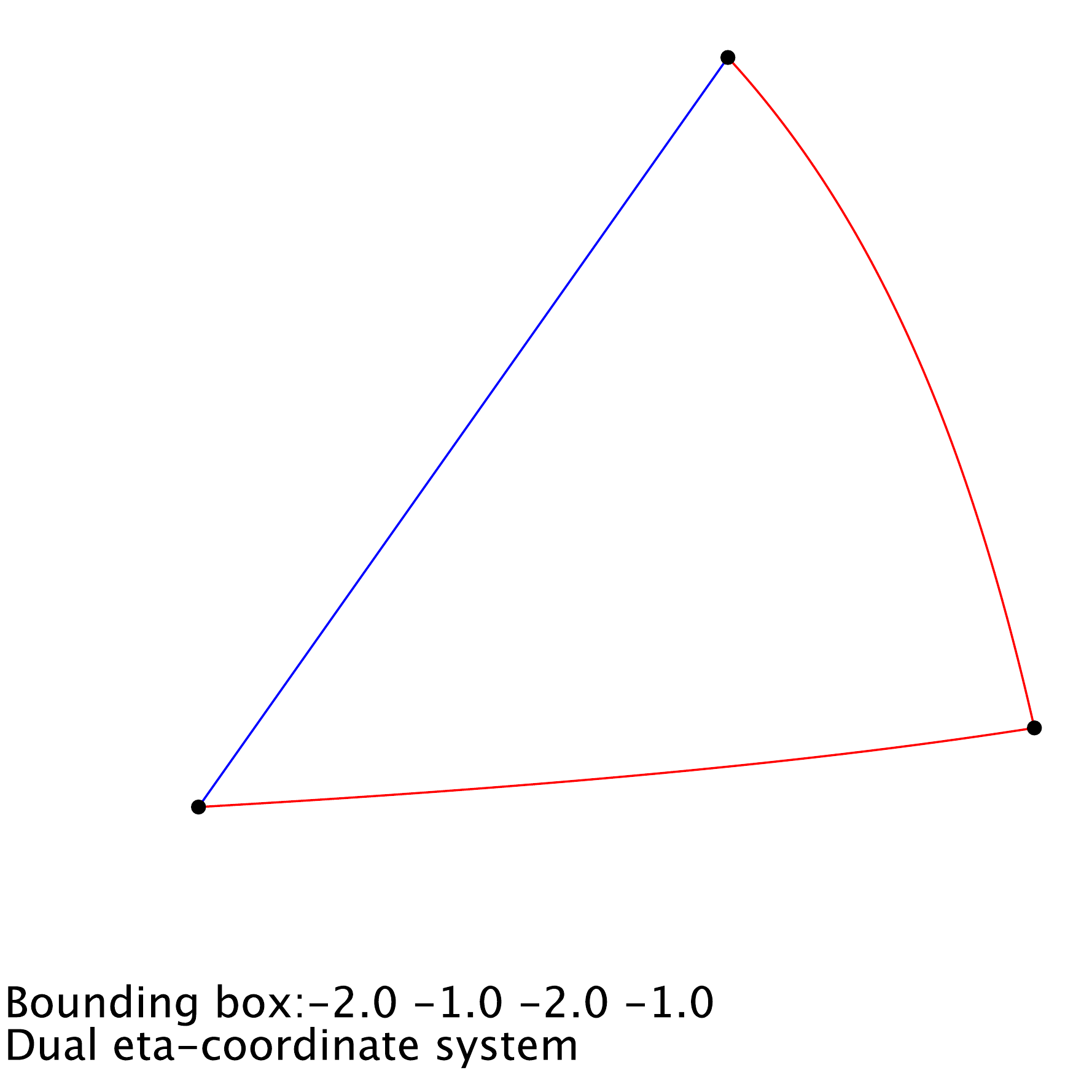}&
\includegraphics[width=\ttt\textwidth]{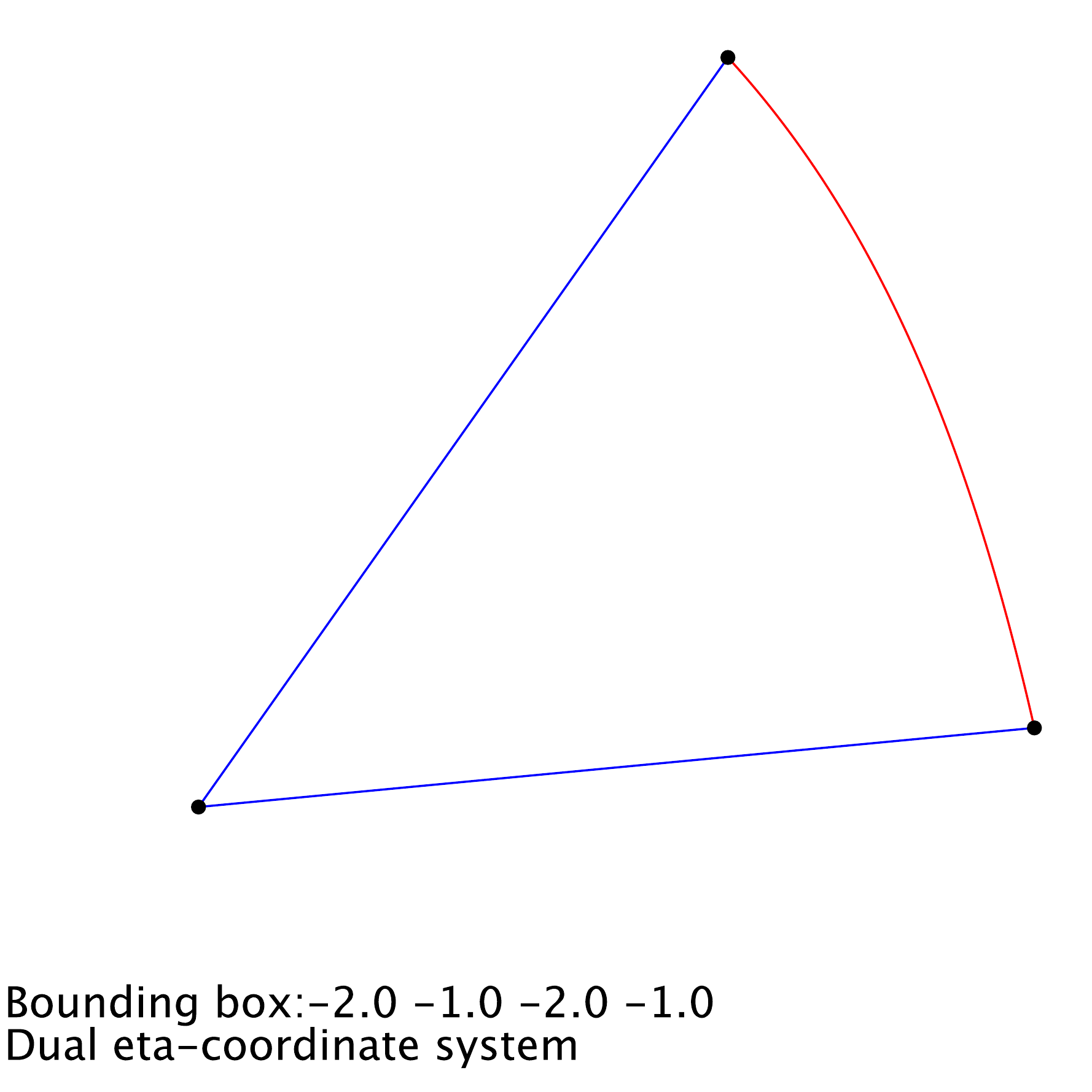}&
\includegraphics[width=\ttt\textwidth]{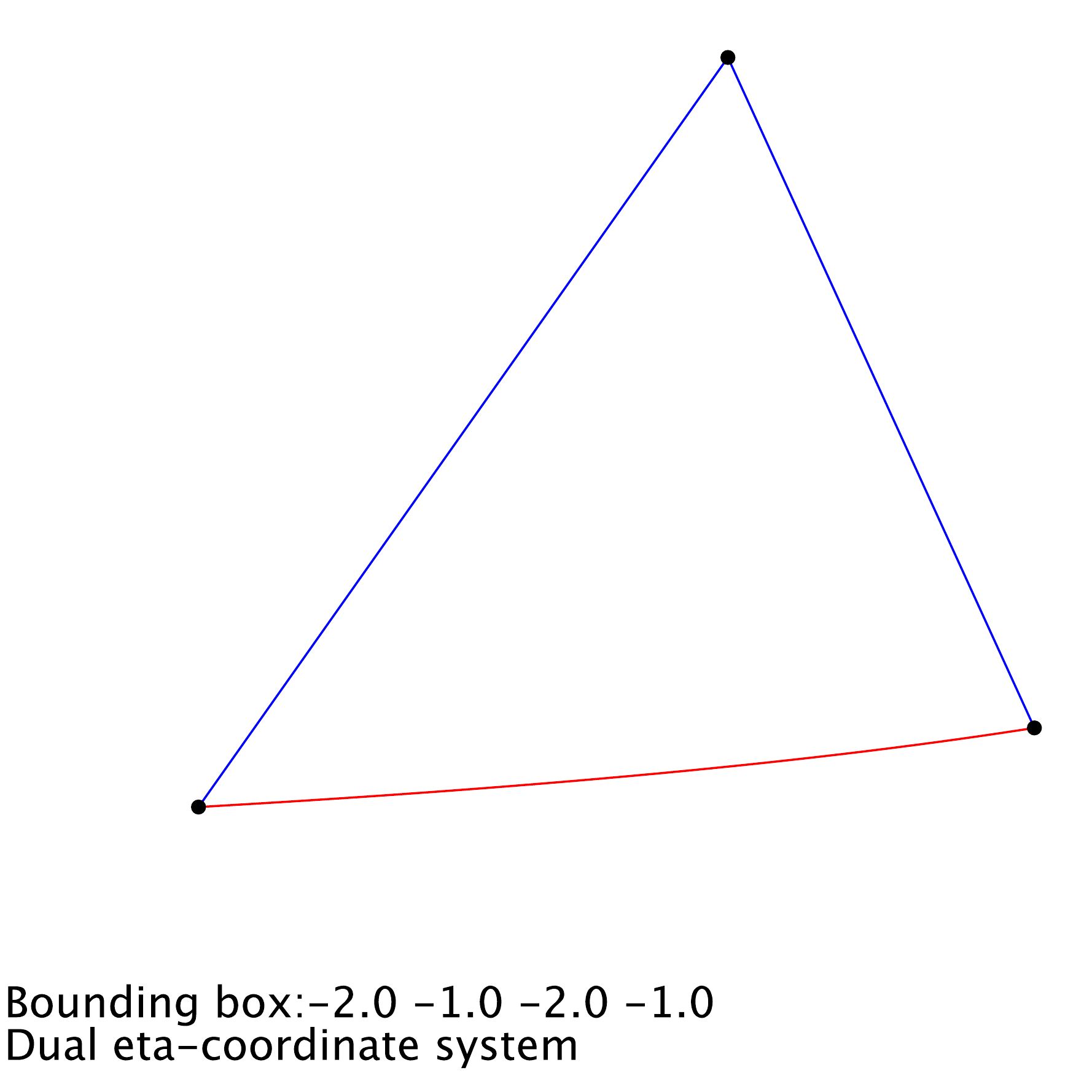}&
\includegraphics[width=\ttt\textwidth]{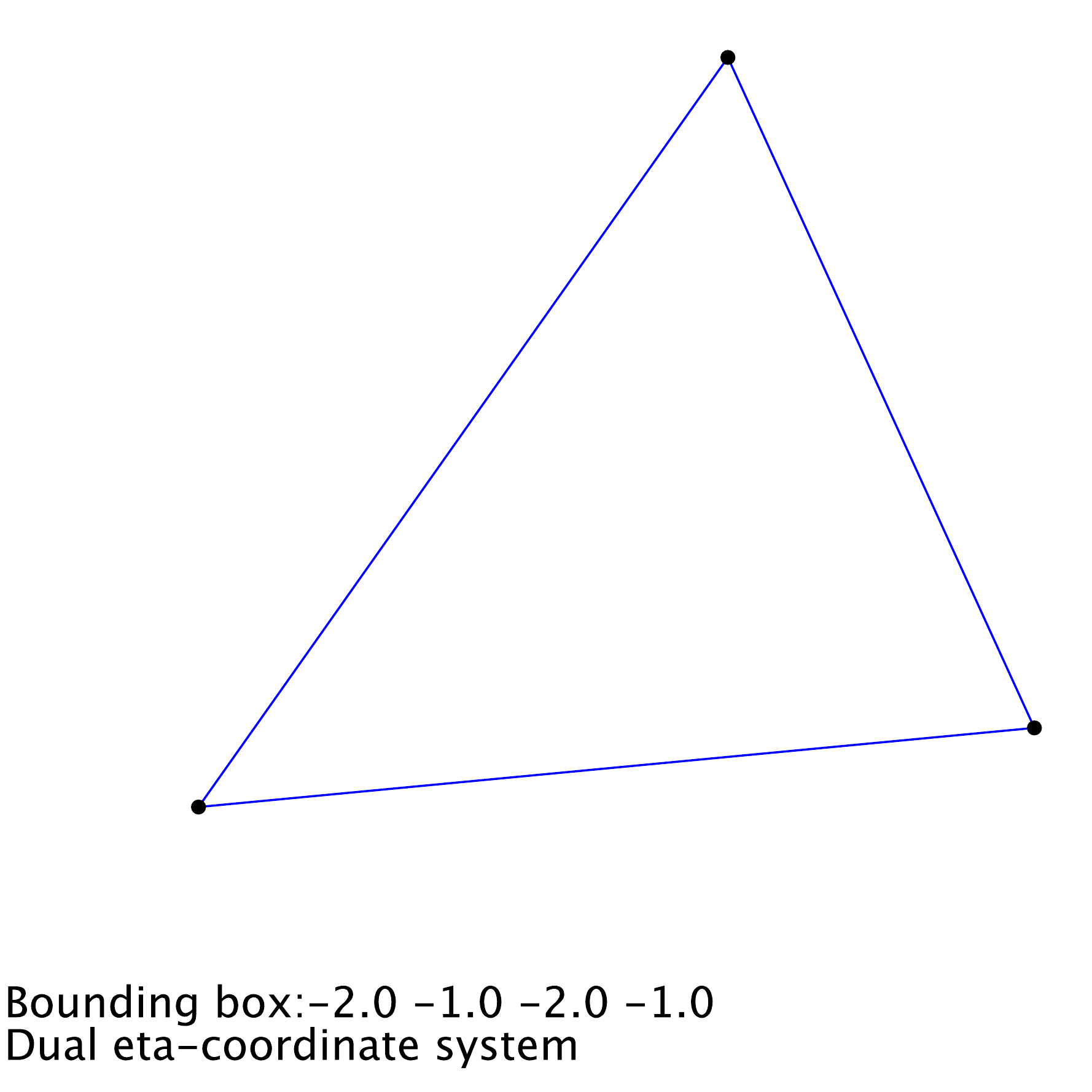}
\end{tabular}

\caption{
The $2^3=8$ types of geodesic triangles visualized both in the primal $\theta$-coordinate system  and the dual $\eta$-coordinate system.
Primal geodesic edges are shown in red and are straight in the $\theta$-coordinate system.
Dual geodesic edges are shown in blue and are straight in the $\eta$-coordinate system.
Refer to Figure~\ref{fig:allgeodesics} for point coordinates.
\label{fig:geotriangles}}

\end{sidewaysfigure}

In this paper, we raise and (partially) answer the following questions on a statistical manifold:
\begin{description}
\item[Q1.] When and how can one build geodesic triangles with one, two and three right angles?


\item[Q2.] When and how can one build dual geodesic triangles $T$ and $T^*$ such that at a triangle vertex, we have two pairs of dual geodesics emanating from that vertex that are simultaneously orthogonal?

\item[Q3.] When is there a relationship or inequality between $\alpha(T)$ and $\alpha(T^*)$?
\end{description}

In this work, we are interested in studying these questions and unraveling some properties of geodesic triangles in the particular case of dually flat spaces,  Bregman manifolds.
In Bregman manifolds, the  Pythagorean theorem allows one to check that a pair of dual geodesics is orthogonal at a given point by checking some divergence identity.
In general, the  Pythagorean theorem is useful to prove uniquess of projections in particular settings, interpret geometrically projections based on the divergence, and design derivative-free projection algorithms~\cite{Akaho-2019}. 
In information geometry, besides the dually flat spaces, Pythagorean theorems with corresponding divergence identities have also been reported for $\alpha$-divergences on the probability simplex~\cite{IG-2016} and the  logarithmic $L^\alpha_H$-divergences~\cite{Wong-2018} where $H$ is an exponentially concave generator.

The paper is organized as follows: 
First, we quickly review the construction of Bregman manifolds in~\S\ref{sec:BM}, thereby introducing familiar concepts and notations of information geometry~\cite{IG-2016}, and give as examples of Bregman manifolds details for the  Mahalanobis manifolds (\S\ref{sec:Mah}), the extended Kullback-Leibler manifold (\S\ref{sec:KL}), the Itakura-Saito manifold (\S\ref{sec:IS}), and the multinoulli manifolds (\S\ref{sec:Multinoulli}).
Then Section~\ref{sec:doubleright} shows, whenever it is possible, how to build geodesic $\nabla$-triangles which have one, two or three right angles (thus necessarily exhibiting angle excesses for two and three right angle triangles).
In \S\ref{sec:simultaneous},  we prove that given two distinct points $p$ and $q$, the locii of points $r$ for which we have simultaneously the dual Pythagorean theorems holding at $r$ are the intersection of an autoparallel $\nabla$-submanifold with an autoparallel $\nabla^*$ submanifold (i.e., the intersection of a $\theta$-flat with a $\eta$-flat~\cite{IG-2016}). 
We report explicitly the construction method for the 2D Itakura-Saito manifold and visualize several such triangles.
Finally, we summarize and hint at further perspectives in~\S\ref{sec:concl}.

\section{Dually flat spaces: Bregman manifolds}\label{sec:BM}

We first explain the mutually orthogonal primal basis and reciprocal basis in an inner product space in~\S\ref{sec:innerp}.
Then we describe dual geodesics and their tangent vectors and the dual parallel transport in~\S\ref{sec:dualpt}.
In \S\ref{sec:pt}, we explain the Pythagorean theorem, and  \S\ref{sec:BMex} provides some common examples of Bregman manifold: Mahalanobis self-dual manifolds, the extended Kullback-Leibler manifold and the Itakura-Saito manifold.

\subsection{Preliminary: inner product space and reciprocal basis}\label{sec:innerp}
An {\em inner product space} is a vector space $V$ equipped with a symmetric positive definite bilinear form $\inner{\cdot}{\cdot}: V\times V\rightarrow \bbR$
called the inner product.
The length of a vector $v\in V$ is given by its induced norm $\|v\|=\sqrt{\inner{v}{v}}$, and the angle between any two vectors $u$ and $v$ is measured  as
$\alpha(u,v)=\arccos  \left(\frac{\inner{u}{v}}{\|u\| \|v\|} \right)$ (in radians).
We consider finite $D$-dimensional inner product spaces where a vector $v$ can be expressed in {\em any} basis $B=\{e_1,\ldots, e_D\}$ (a maximum set of linearly independent vectors) by its components $v_B=(v^1,\ldots, v^D)$:
$v=\sum_{i=1}^D v^i e_i$. 
Vector $v$ can also be expressed equivalently in an other basis $\hat{B}=\{\hat{e}_1,\ldots, \hat{e}_D\}$: $v=\sum_{i=1}^D {\hat{v}^i} \hat{e}_i$. 
Notice that in general the components $v_B$ and $v_{\hat{B}}$ are different although they express the same geometric vector $v$ according to their respective basis $B$ and $\hat{B}$.
We can express the basis vectors $\hat{e}_i$ using basis $B$ and the basis vectors $e_i$ using basis $\hat{B}$ as
$\hat{e}_i=\hat{A}_i^j e_j$ and $e_j=A_i^j\hat{e}_j$, respectively (using Einstein summation convention). 
We have $\hat{A}_i^jA_j^k=\delta_i^k$ where $\delta_{i}^{j}$ is the Kr\"onecker symbol ($\delta_{i}^{j}=1$ iff. $i=j$ and $0$ otherwise), and the changes of basis reflects on components as
 $v^i=\hat{A}_j^i \hat{v}^j$ and $\hat{v}^i=A_j^i v_i$.
When the basis $B$ is orthonormal (i.e., $\inner{e_i}{e_j}=\delta_{ij}$ where $\delta_{ij}$ is the Kr\"onecker symbol: $\delta_{ij}=1$ iff. $i=j$ and $0$ otherwise) the vector components can be retrieved from the inner product as
$v^i=\inner{v}{e_i}=\inner{\sum_{j=1}^D v^j e_j}{e_i}=\sum_{j=1}^D v^j \inner{e_j}{e_i}$. 
This is no longer true for non-orthormal basis (e.g., an orthogonal but non-orthonormal basis or an oblique basis).
Let us introduce the unique {\em reciprocal basis}  $B^*=\{{e^*}^1,\ldots, {e^*}^D\}$ such that by construction, we have $\inner{e_i}{{e^*}^j}=\delta_i^j$.
The vector $v$ can be expressed in the reciprocal basis as $v=\sum_{i=1}^D v_i {e^*}^i$.
The vector components $v^i$ (superscript notation) wrt. basis $B$ are called the {\em contravariant components}, and $v^i=\inner{v}{{e^*}^i}$.
The vector components $v_i$ (subscript notation) wrt. basis $B^*$ are called the {\em covariant components}, and $v_i=\inner{v}{{e}_i}$.
(We shall explain this contravariant/covariant component terminology at the end of this section.)

Let $G=[g_{ij}=\inner{e_i}{e_j}]_{ij}$ and $G^*=[{g^*}^{ij}=\inner{{e^*}^i}{{e^*}^j}]_{ij}$ denote the $D\times D$ positive definite matrices, called dual metrics. These dual metric matrices are inverse of each other: $G^*=G^{-1}$.
In textbooks, one often  drops the superscript star '*' in the notation of the reciprocal basis and the dual riemannian metric, see~\cite{IG-2016}. Here, we keep them explicitly for easing the understanding, even if they load the notations.

We can convert the contravariant components $v^i$ of a vector $v$ to its covariant components $v_i$, and vice versa, using these metric matrices:
$v_i=  \sum_{i=1}^D g_{ij} v^j$ and $v^i=\sum_{i=1}^D {g^*}^{ij} v_j$.
Let $[u]_B$ denote the vector components of $u$ in basis $B$ arranged in a column vector.
Then we rewrite the contravariant/covariant conversions as matrix-vector multiplications of linear algebra:
 $[v]_{B^*} = G \times [v]_B$ and $[v]_B= G^* \times [v]_{B^*}$.
The inner product between two vectors can be written equivalently using algebra as
\begin{eqnarray}
\inner{u}{v} &=& \sum_{i=1}^D u_i v^i = [u]_{B^*}^\top \times [v]_B,\\
 &=& \sum_{i=1}^D u^i v_i =   [u]_{B}^\top\times  [v]_{B^*},\\
 &=& \sum_{i=1}^D [u]_{B}^\top \times  G^* \times [v]_{B},\\
 &=& \sum_{i=1}^D [u]_{B^*}^\top\times  G \times [v]_{B^*}.
\end{eqnarray}

In differential geometry~\cite{IG-2016,EIG-2018}, a smooth manifold $M$ is equipped with a {\em metric tensor field} $g$ that defines on each tangent plane $T_p$ of $p\in M$ an inner product. The dual of a tangent plane $T_p$ is the cotangent plane $T_p^*$, a vector space of linear functionals.
In general, tensor fields define at each point of the manifold component-free geometric entities that can be interpreted as multilinear functionals over Cartesian products of dual covector and vector spaces.
A vector $v$ of $T_p$ is a rank-$1$ tensor that can be expressed either using  the covariant or contravariant components of a reciprocal basis.
Thus one should not confuse the notion of ``geometric vectors'' that are tensors (coordinate-free objects independent of the choice of the basis) with the ``column vectors'' of vector components in a basis which are used to perform linear algebra calculations on (geometric) vectors.
When the components of a tensor vector varies with the inverse transformation of the change of basis, we say that we have a contravariant (tensor) vector (a $(0,1)$-tensor),  and its components are called contravariant components.
When the components of a tensor vector varies according to the transformation of the change of basis, we say that we have a covariant (tensor) vector (a covector or $(1,0)$-tensor, i.e., an element of the dual vector space $V^*$ of linear functionals or linear forms),  and its components are called covariant components. The metric tensor $g$ is a $(2,0)$ covariant tensor~\cite{EIG-2018}.

\subsection{Dual geodesics and their tangent vectors, and dually coupled parallel transport}\label{sec:dualpt}

Let $F(\theta)$ be a $D$-dimensional $C^3$ real-valued function defined on an {\em open} convex domain $\Theta$, 
and denote by $F^*(\eta)$ its Legendre-Fenchel convex conjugate~\cite{IG-2016,EIG-2018}: $F^*(\eta)=\sup_{\theta\in\Theta} \theta^\top\eta-F(\theta)$.
The dual potential functions $F$ and $F^*$ induce two torsion-free flat affine connections~\cite{EIG-2018} $\nabla$ and $\nabla^*$, respectively.
A Bregman manifold $M$ is equipped with two global affine coordinate systems $\theta(\cdot)$ (the $\nabla$-affine coordinate system) and $\eta(\cdot)$ (the $\nabla^*$-affine coordinate system)
such that it comes from Legendre-Fenchel transformation that $\eta(\theta)=\nabla F(\theta)$ and $\theta(\eta)=\nabla F^*(\eta)$.
Let $\theta^i(p)$ and $\eta_i(p)$ denote the primal $i$-th $\theta$-coordinate functions and the dual $i$-th  $\eta$-coordinate functions of a point $p$, for $i\in\{1,\ldots, D\}$ so that $\theta(p)=(\theta^1(p),\ldots, \theta^D(p))$ and $\eta(p)=(\eta_1(p),\ldots, \eta_D(p))$.
Notations are summarized in Appendix~\ref{sec:notations}.
Any point $p\in M$ can be expressed equivalently either in the primal global $\theta$-chart or the dual global $\eta$-chart.
The dual geodesics\footnote{Given an affine connection $\nabla$, the $\nabla$-geodesic is an autoparallel curve~\cite{EIG-2018}.} $\gamma_{pq}$ and $\gamma_{pq}^*$ passing through two given points $p, q \in M$ write simply using the dual coordinate systems as follows:
\begin{eqnarray}
\gamma_{pq}  &=& \{ x_\lambda\in M \st \theta(x_\lambda)=(1-\lambda)\theta(p)+\lambda\theta(q), \quad \lambda\in [0,1] \},\\
\gamma_{pq}^* &=& \{ x_\lambda\in M \st \eta(x_\lambda)=(1-\lambda)\eta(p)+\lambda\eta(q), \quad \lambda\in [0,1] \}.
\end{eqnarray} 

In general, a vector field $v(t)$ is parallel along a smooth curve $c(t)$ iff.
\begin{equation}
\forall i\in\{1,\ldots, D\},\ \dot{v}^i+\sum_{j,k=1}^D \Gamma_{jk}^i \dot{x}^j v^k=0.
\end{equation}
Since the Christoffel symbols $\Gamma_{jk}^i(\theta)=0$ (and ${\Gamma^*}_{jk}^i(\eta)=0$) in a dually flat space, we have $\dot{v}^i=0$ and recover the equation of the primal geodesic (and $\dot{v}_i=0$ for the equation of the dual geodesic, respectively).

The dual Riemannian metrics $[g_{ij}]=[g_{ji}]$ and $[{g^*}^{ij}]=[{g^*}^{ji}]$ are induced by the Hessians of the dual potential functions $F$ and $F^*$ (both  symmetric positive-definite matrices), respectively.
At any given point $p\in M$, we consider the natural basis $\{e_i=\partial_i=\frac{\partial}{\partial\theta^i}\}$ and the reciprocal basis   $\{{e^*}^i=\partial^i=\frac{\partial}{\partial\eta_i}\}_i$ of the tangent plane $T_p$  so that $g(e_i,{e^*}^j)=\delta_{i}^j$.
That is, the basis vectors of the primal and reciprocal basis are {\em mutually orthogonal}. 
We have $[g_{ij}]=[g(e_i,e_j)]_{ij}=\nabla^2F(\theta)$ and $[{g^*}^{ij}]=[g^*({e^*}^i,{e^*}^j)]_{ij}=\nabla^2F^*(\eta)$.
We can check that the natural primal basis and the dual reciprocal basis are mutually orthogonal  from Crouzeix identity:  
\begin{equation}\label{eq:crouzeix}
\nabla^2F(\theta)\nabla^2F^*(\eta)=I,
\end{equation}
where $I$ denotes the $D\times D$ identity matrix.
Note that Eq.~\ref{eq:crouzeix} can be read using a single coordinate system as 
$\nabla^2F(\theta)\nabla^2F^*(\eta(\theta))=\nabla^2F(\theta)\nabla^2F^*(\nabla F(\theta))=I$ 
or $\nabla^2F(\theta(\eta))\nabla^2F^*(\eta)=\nabla^2F(\nabla F^*(\eta))\nabla^2F^*(\eta)=I$.

A tangent vector $v_{pq}$ of the primal geodesic $\gamma_{pq}$ at $p$ is a vector of the tangent plane $T_p$ with 
contravariant components $\theta(q)-\theta(p)$ (say, expressed in the primal   basis 
$B_p=\{e_1,\ldots, e_D\}$):  $v=\sum_{i=1}^D (\theta^i(q)-\theta^i(p))e_i$ with $v^i=\theta^i(q)-\theta^i(p)$.
Indeed, we have 
\begin{equation}
\frac{d\gamma_{pq}(t)}{\dt} = \frac{d}{\dt} \left((1-t)\theta(p)+t\theta(q)\right)= \sum_i  (\theta^i(q)-\theta^i(p))\partial_i,
\end{equation}
where $\partial_i=\frac{\partial}{\partial\theta^i}$.
Similarly, a tangent vector $v_{pq}^*$ at the dual geodesic $\gamma_{pq}^*$ is a vector of the tangent plane $T_p$ with covariant components $\eta(q)-\eta(p)$ (expressed in the reciprocal basis $B_p^*=\{{e^*}1,\ldots, {e^*}^D\}$):    
$v=\sum_{i=1}^D (\eta_ii(q)-\eta_i(p)){e^*}^i$.
Indeed, we check that 
\begin{equation}
\frac{d\gamma_{pq}^*(t)}{\dt} = \frac{d}{\dt} \left((1-t)\eta(p)+t\eta(q)\right)= \sum_i  (\eta_i(q)-\eta_i(p))\partial^i,
\end{equation}
where $\partial^i=\frac{\partial}{\partial\eta_i}$.
In general, for separable Bregman generators, i.e., $F(\theta)=\sum_{i=1}^D F_i(\theta^i)$ where the $F_i$'s are scalar strictly convex and $C^3$ functions, we can choose both the primal and reciprocal basis to be orthogonal but they are not necessarily orthonormal 
(except in the special case of the Euclidean geometry obtained by $F(\theta)=\sum_{i=1}^D \sqr(\theta^i)$ where $\sqr(x):=x^2$ denotes the square function).

The metric tensor field $g$ defines a smooth scalar product $g(\cdot,\cdot)$ on the tangent bundle $TM$ (informally, the union of all tangent planes) such that  for any two vectors $u,v\in T_p$, we have $g_p(u,v)=u_iv^i=u^iv_i$. 
In each tangent plane, we thus have an inner product space.
The contravariant and covariant components of a vector $v$ can be retrieved using the inner product with the reciprocal basis and the primal basis, respectively:
$v^i=g(v,{e^*}^i)$, and $v_i=g(v,e_i)$.
We can also use vector-matrix multiplications of linear algebra to calculate 
 the inner product:
$g_p(u,v)=[u^i]^\top\times \nabla^2 F(\theta(p))\times [v^i] = [u_i]^\top\times \nabla^2 F^*(\eta(p))\times [v_i]$, where $[u^i]$ and $[u_i]$ denote the column-vector of the contravariant components of $u$ and covariant components of $u$, respectively.
Since $v^i=\nabla^2F^*(\eta(p))\times [v_i]$ and $v_i=\nabla^2 F(\theta(p))\times [v^i]$, we therefore have the following equivalent rewritings of the inner product:
\begin{eqnarray}
g_p(u,v) &=& [u_i]^\top \times [v^i],\\
&=& [u^i]^\top\times [v_j],\\
&=& [u^i]^\top\times \nabla^2 F(\theta(p))\times [v^i],\\
&=& [u_i]^\top\times \nabla^2 F^*(\eta(p))\times [v_i].
\end{eqnarray}

When the generator $F(\theta)$ is separable, i.e., $F(\theta)=\sum_{i=1}^D F_i(\theta^i)$ for univariate generators $F_i$'s, the inner product at a tangent plane $T_p$ writes equivalently as (using Einstein summation convention):
\begin{eqnarray}
g_p(u,v) &=& u^i F_i''(\theta^i(p)) v^i,\\
&=& u_i {F_i^*}''(\eta_i(p)) v_i,\\
&=& u_iv^i,\\
&=& u^iv_i.
\end{eqnarray}

Given two vectors $u,v\in T_p$, we measure their lengths $\|u\|_p$ and $\|v\|_p$ and the interior angle $\alpha_p(u,v)$ between them as:
\begin{eqnarray}
\|u\|_p &=& \sqrt{g_p(u,u)} = \sqrt{u^i u_i},\\
 \|v\|_p&=& \sqrt{g_p(v,v)} = \sqrt{v^i v_i},\\
 \alpha_p(u,v)&=&\arccos\left(\frac{u^i v_i}{\|u\|_p \|v\|_p} \right)=\alpha_p(v,u).
\end{eqnarray}

Notice that the Hessian metric $g$ and $g^*$ are not conformal (i.e., not a positive scalar function of the Euclidean metric), and that we cannot ``read'' the angles directly in the $\theta$- or $\eta$-coordinate systems. 
In other words, the Euclidean angles displayed in the $\theta$- or $\eta$-coordinate systems do not correspond to the intrinsic angles of the underlying Bregman geometry.
Figure~\ref{fig:metric} shows a visualization of the metric tensor field for the Itakura-Saito manifold (with $[g_{ij}](\theta)=\nabla^2F(\theta)=\diag(\frac{1}{\sqr(\theta^1}, \frac{1}{\sqr(\theta^2)})$ for 2D $\theta=(\theta^1,\theta^2)$).
Dually flat spaces are a particular case of more general (local) Hessian structures studied in~\cite{Hessian-2007}.

\begin{figure}
\centering
\fbox{\includegraphics[width=0.45\textwidth]{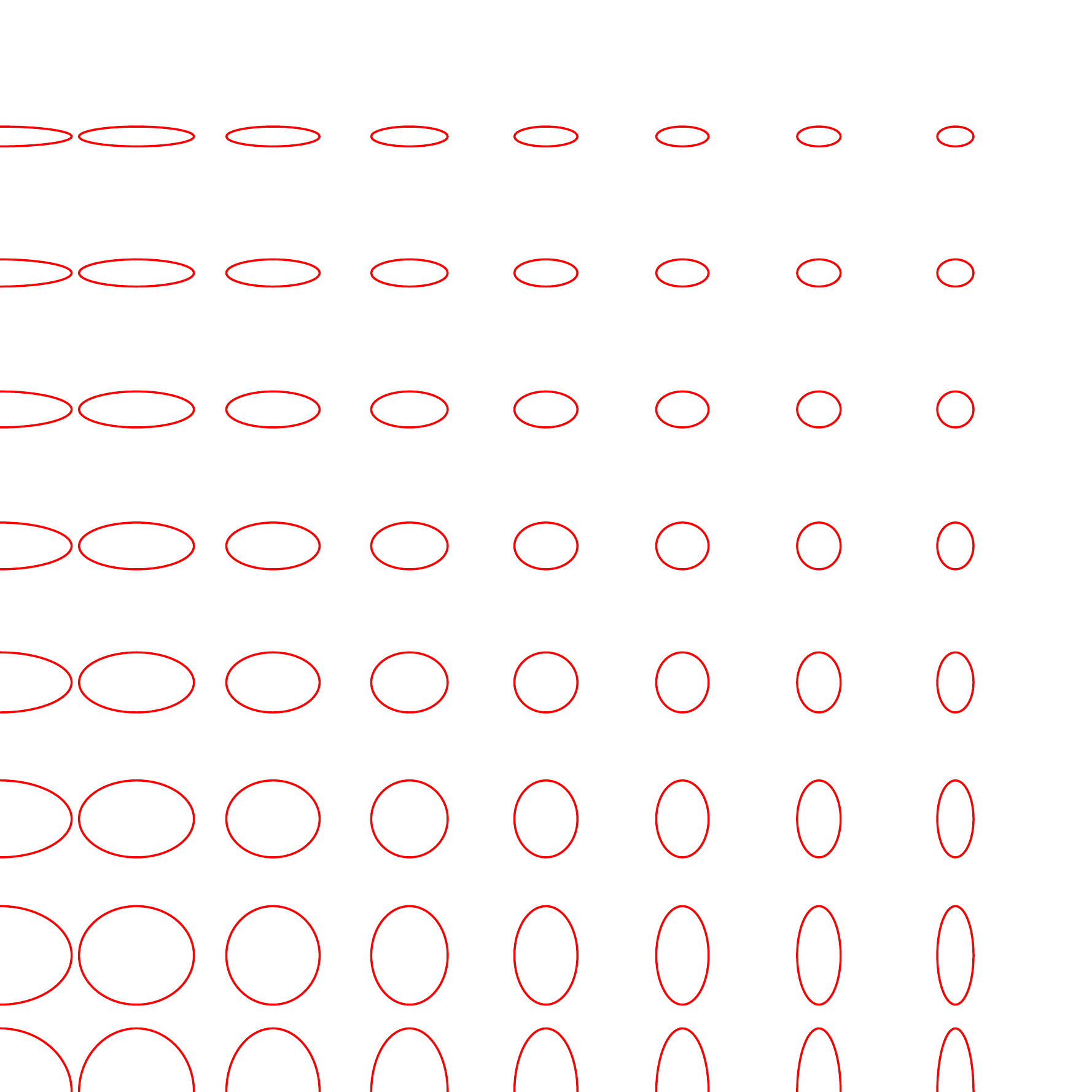}}

\caption{Visualization of the metric tensor field $g(\theta)=\diag\left(\frac{1}{\sqr(\theta^1)},\frac{1}{\sqr(\theta^2)}\right)$ for the 2D Itakura-Saito manifold.
The metric tensor field $g$ is not conformal.
This can be seen from ellipses depicting the metric tensors (tiny Bregman circles) at  regularly sampled grid positions which are not scaled (Euclidean) circles.
\label{fig:metric}}
\end{figure}

In information geometry~\cite{IG-2016}, the dually flat torsion-free affine connections\footnote{The notion of dual connections of information geometry is more general than the notion of conjugate connections of affine differential geometry~\cite{Kurose-1990} which stems from dual affine immersions.} $\nabla$ and $\nabla^*$ are coupled the metric tensor $g$.
An affine connection $\nabla$ is flat iff. there exists a coordinate system such that its Christoffel symbols $\Gamma$ expressed in a coordinate system vanishes.
For Bregman manifolds, we have both $\Gamma_{ijk}(\theta)=0$ and ${\Gamma^*}^{ijk}(\eta)=0$, see~\cite{IG-2016}, so the connections $\nabla$ and $\nabla^*$ are both flat.
Notice that a cylinder is $\nabla$-flat for the natural Euclidean connection but geodesic triangles on the cylinder have interior angles not summing up to $\pi$.
That is, let $\mathcal{X}({M})$ denotes the space of smooth vector fields (the cross sections of the tangent bundles $TM$). 
We say that the two torsion-free affine connections $\nabla$ and $\nabla^*$ are  dual when it holds that
\begin{equation}
\forall X, Y, Z \in \mathcal{X}({M}), \quad X g(Y, Z)=g\left(\nabla_{X} Y, Z\right)+g\left(Y, \nabla_{X}^{*} Z\right).
\end{equation}
An important consequence of the coupling of the dual connections to the metric is that the dual parallel transport preserves the metric.

Let us  choose by convention to fix the primal basis of the tangent vectors at $T_p$ for any $p\in M$ to be the $e_i$'s, the one-hot vectors in the $\theta$-coordinate system, i.e., $e_i=(\underbrace{0,\ldots,0}_{i-1},1,0,\ldots,0)$ (the standard basis of $\bbR^D$). This is the canonical basis.\footnote{In differential geometry, the tangent plane at a point $p$ is the space of all linear derivations that satisfies the Leibniz's rule. A basis $\{t_i\}_i$ of $T_p$ is such that $t_i(f)=\frac{\partial f}{\partial\theta^i}$.}
Since the $\nabla$-connection is flat, the primal parallel transport $\prod_{p,q}(v)$ of a vector $v$ of $T_p$ to a corresponding vector $T_q$ is {\em independent} of the chosen smooth curve, and we have for $v=\sum_i v^i \left.e_i\right|_{p}\in T_p$ 
\begin{equation}
\prod_{p,q}(v)= v^i \left.e_i\right|_{q}\in T_q.
\end{equation}
That is, the contravariant components of $v$ do  not change with primal parallel transport assuming that the primal basis is fixed for all tangent planes, i.e., $B_q=B_p, \forall p,q\in M$ (and $ \left.e_i\right|_{p}= \left.e_i\right|_{q}$).

Similarly, since the $\nabla^*$-connection is flat, the dual parallel transport $\prod_{p,q}^*(v)$ of a vector $v$ of $T_p$ to a corresponding vector of $T_q$ is independent of the chosen smooth curve, and we have
\begin{equation}
\prod_{p,q}^*(v)= v_i \left.{{e^*}^i}\right|_{q}.
\end{equation}
However, because $B_q^*=\{ \left. {{e^*}^i} \right|_{q} \}$ is the reciprocal basis of  the fixed primal basis
$B_q=\{ \left.{e_i}\right|_{q} \}=e_i$, 
the basis $B_q^*$ varies accordingly  to the metric tensor $g$ (and $B_q^*\neq B_p^*$).
Indeed, in general, we cannot  fix both the primal and reciprocal basis in all tangent planes because they relate to each other by construction  by the metric tensor. 
This is only possible when $G(\theta(p))=G^*(\eta(p))=I$, the $D\times D$ identity matrix, corresponding to the special case of Euclidean  or Mahalanobis geometry.

Let us now check the metric compatibility of the dual parallel transport of two vectors $u,v\in T_p$ to $T_q$:
\begin{equation}
g_q\left( \prod_{p,q}(u) , \prod_{p,q}^*(v)  \right) = g_p(u,v).
\end{equation}

\begin{proof}
We have
\begin{eqnarray}
g_q\left( \prod_{p,q}(u) = u^i \left.e_i\right|_{q}, \prod_{p,q}^*(v)=v_i \left.{{e^*}^i}\right|_{q} \right) &=& u^iv_i g_q\left(\left.e_i\right|_{q},\left.{{e^*}^i}\right|_{q} \right),\\
&=& u^iv_i,\\
&=&  u^iv_i g_p\left(\left.e_i\right|_{p},\left.{{e^*}^i}\right|_{p} \right),\\
&=& g_p(u,v).
\end{eqnarray}
\end{proof}

Figure~\ref{fig:PTex} displays the parallel transport of the tangent vector  $\dot\gamma_{pq}^*(0)$ from $T_p$ to $T_{\gamma_{pq}^*(t)}$ for several time steps $t$.
Although the dual parallel transport preserves the metric, the length of a vector transported by either the primal or the dual parallel transport varies.

\begin{figure}
\centering
\includegraphics[width=0.235\textwidth]{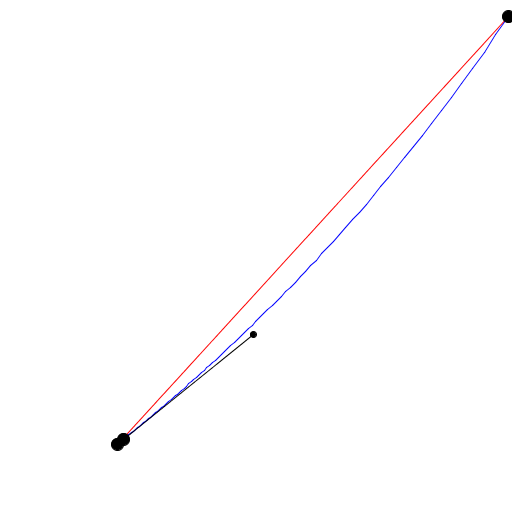}
\includegraphics[width=0.235\textwidth]{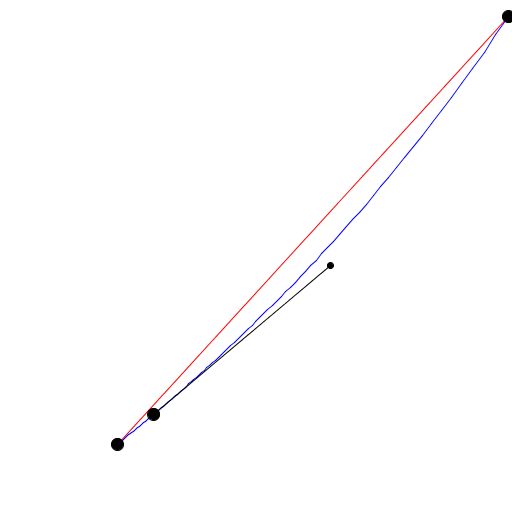}
\includegraphics[width=0.235\textwidth]{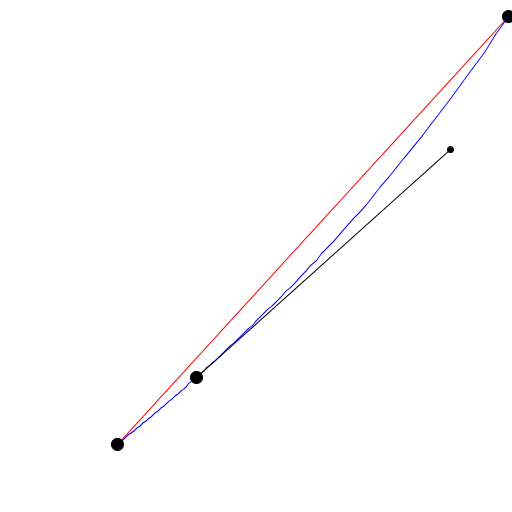}
\includegraphics[width=0.235\textwidth]{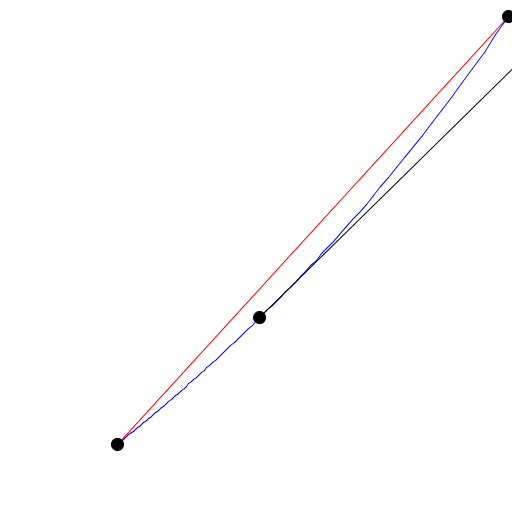}

\caption{Dual parallel transport of tangent vector $\dot\gamma_{pq}^*(0)$ at $T_p$ along the $\nabla^*$-geodesic: Since the dual geodesic
$\gamma_{pq}^*(t)$ is a $\nabla^*$-autoparallel curve, the tangent vector at $\dot\gamma_{pq}^*(0)$ is transported into tangent vectors along that dual geodesic.
 } 
\label{fig:PTex}
\end{figure}

Historically, the dual parallel transport coupled to the metric tensor was studied independently by Norden~\cite{Norden-1945} and Sen~\cite{Sen-1948} in the 1930's and 1940's, respectively.

\subsection{Dual Pythagorean theorems}\label{sec:pt}
The divergence $D_F(p:q)$ from a point $p$ to a point $q$ can be expressed using dual Bregman divergences as follows:
\begin{equation}
D_F(p:q)=B_F(\theta(p):\theta(q))=D^*(q:p)=B_{F^*}(\eta(q):\eta(p)),
\end{equation}
where $D^*_F(p:q)\eqdef D_F(q:p)$ denotes the reverse divergence, and 
\begin{equation}
B_F(\theta_1:\theta_2)=F(\theta_1)-F(\theta_2)-(\theta_1-\theta_2)^\top \nabla F(\theta_2),
\end{equation}
is the Bregman divergence associated to generator $F$.
Bregman divergences generalize the squared Euclidean distance with the relative entropy, and thus offer a nice framework to unify or develop generic algorithms~\cite{BD-2005}.
Bregman divergences proved useful in machine learning~\cite{BVD-2010} and topological data analysis~\cite{BD-TDA-2017} among others.

Now, consider two smooth curves $c_1(t)$ and $c_2(t)$ intersecting at a point $p=c_1(0)=c_2(0)$.
Denote by $\dot c_1(0)=\left.\frac{d}{\mathrm{d}t} c_1(t)\right|_{t=0}$ and $\dot c_2(0)=\left.\frac{d}{\mathrm{d}t} c_2(t)\right|_{t=0}$ the tangent vectors to the curves at point $p$, belonging to the tangent plane $T_p$.
Curve $c_1$ is said {\em orthogonal} to curve $c_2$ at $p$, i.e., $c_1(t)\perp_{p} c_2(t)$, iff. $g_p(\dot c_1(0),\dot c_2(0))=0$.

In a Bregman manifold, the remarkable following generalized Pythagorean theorem holds:

\begin{theorem}[Generalized Pythagorean theorem]\label{thm:pytha}
When a primal geodesic  $\gamma_{pq}$ is orthogonal to a dual geodesic $\gamma_{qr}^*$ at point $q$ (i.e., $\gamma_{pq} \perp_q \gamma_{qr}^*$),
we have $(\theta(p)-\theta(q))^\top (\eta(r)-\eta(q))=0$, and the following Pythagorean divergence identity holds:
\begin{equation}
D_F(p:q)+D_F(q:r)=D_F(p:r).
\end{equation}
\end{theorem}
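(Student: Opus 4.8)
The plan is to reduce everything to a purely algebraic identity about Bregman divergences by first translating the orthogonality hypothesis into a coordinate statement, and then expanding the three Bregman divergences in the $\theta$-coordinates.

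First I would unpack the hypothesis $\gamma_{pq}\perp_q \gamma_{qr}^*$. By the discussion in \S\ref{sec:dualpt}, the tangent vector of the primal geodesic $\gamma_{pq}$ at $q$ has contravariant components $\theta(p)-\theta(q)$ in the natural basis $\{e_i\}$, while the tangent vector of the dual geodesic $\gamma_{qr}^*$ at $q$ has covariant components $\eta(r)-\eta(q)$ in the reciprocal basis $\{{e^*}^i\}$. Since $g_q(u,v)=u^i v_i$ whenever $u$ is given by contravariant components and $v$ by covariant components (this is exactly the mixed-basis formula $g_p(u,v)=[u^i]^\top[v_i]$ from \S\ref{sec:dualpt}), orthogonality $g_q(\dot\gamma_{pq}(0),\dot\gamma_{qr}^*(0))=0$ is literally
\[
(\theta(p)-\theta(q))^\top(\eta(r)-\eta(q))=0,
\]
which is the first assertion of the theorem. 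This step is essentially definitional once the tangent-vector computations from the preceding section are in hand.

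Next I would prove the divergence identity. Writing $\theta_p=\theta(p)$ etc., and recalling $\eta=\nabla F(\theta)$, I expand
\[
D_F(p:q)+D_F(q:r) = F(\theta_p)-F(\theta_q)-(\theta_p-\theta_q)^\top\eta_q + F(\theta_q)-F(\theta_r)-(\theta_q-\theta_r)^\top\eta_r,
\]
and compare with
\[
D_F(p:r) = F(\theta_p)-F(\theta_r)-(\theta_p-\theta_r)^\top\eta_r.
\]
The $F(\theta_q)$ terms cancel, and subtracting, the difference $D_F(p:q)+D_F(q:r)-D_F(p:r)$ collapses after collecting the linear terms to
\[
-(\theta_p-\theta_q)^\top\eta_q + (\theta_p-\theta_q)^\top\eta_r = (\theta_p-\theta_q)^\top(\eta_r-\eta_q),
\]
which vanishes precisely by the orthogonality condition established in the first step. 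This is the well-known "three-point" or "generalized cosine" decomposition of the Bregman divergence specialized to the orthogonal case; I would present it as a short direct computation.

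The only genuinely delicate point is the bookkeeping in the first step: making sure that the mixed contravariant/covariant pairing is the right one to apply, i.e., that the primal-geodesic tangent is naturally expressed in the primal basis $\{e_i\}$ and the dual-geodesic tangent in the reciprocal basis $\{{e^*}^i\}$, so that $g_q$ of the two reads off as the plain Euclidean dot product $(\theta_p-\theta_q)^\top(\eta_r-\eta_q)$ with no metric matrix intervening. Everything after that is an algebraic cancellation. I would also remark that the statement is symmetric under $F\leftrightarrow F^*$, $\theta\leftrightarrow\eta$, $p\leftrightarrow r$, giving the dual Pythagorean identity $D_F(r:q)+D_F(q:p)=D_F(r:p)$ when a dual geodesic $\gamma_{pq}^*$ is orthogonal to a primal geodesic $\gamma_{qr}$ at $q$, but the proof of the displayed statement needs only the computation above.
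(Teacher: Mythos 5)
Your proof is correct and follows essentially the same two-step route as the paper: first identifying the orthogonality condition with the vanishing of the mixed pairing $(\theta(p)-\theta(q))^\top(\eta(r)-\eta(q))$, then observing that $D_F(p:q)+D_F(q:r)-D_F(p:r)$ collapses to exactly that quantity. The only cosmetic difference is that you inline the three-point expansion where the paper cites it as the Bregman $3$-parameter identity (and then gives the same direct computation anyway), and you use the mixed contravariant/covariant pairing directly where the paper spells it out via the Crouzeix identity.
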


\begin{proof}
The proof proceeds in two steps:
\begin{itemize}
\item First, Let us  show that the test $(\theta(p)-\theta(q))^\top (\eta(r)-\eta(q))=0$ can be rewritten as an inner product showing that
\begin{equation}
(\theta(p)-\theta(q))^\top (\eta(r)-\eta(q)) = g_q(v_{qp},v_{qr}^*)=g_q(\dot\gamma_{qp}(0),\dot\gamma_{qr}^*(0)),
\end{equation}
where $v_{qp}$ denotes the tangent vector at $q$ of the primal geodesic $\gamma_{pq}$, and $v_{qr}^*$ is the  tangent vector at $q$ of the dual geodesic $\gamma_{qr}^*$.
Indeed, the inner product is $g(v_{qp},v_{qr}^*)=(v_{qp})^i({v_{qr}^*})_i$ and
we have 
\begin{equation}
(\theta(p)-\theta(q))^\top (\eta(r)-\eta(q))=
 [\theta_{pq}]^\top \times \nabla^2F(\theta(q)) (\nabla^2F(\theta(q)))^{-1} \times [\eta_{rq}^*],
\end{equation}
where
$\theta_{pq} \eqdef \theta(p)-\theta(q)$ (contravariant components of $v_{qp}$) and $\eta_{rq}^*\eqdef\eta(r)-\eta(q)$ (covariant components of $v_{qr}^*$). That is, $[(v_{qp})^i]=\theta_{pq}$ and $[v_{qp}^*]_i=\eta_{pq}$.
Using the Crouzeix identity~\cite{Crouzeix-1977} of Eq.~\ref{eq:crouzeix}, we get
\begin{equation}
 (\theta(p)-\theta(q))^\top (\eta(r)-\eta(q))= [\theta_{pq}]^\top \nabla^2F(\theta(q)) (\nabla^2F^*(\eta(q))) [\eta_{rq}^*].
\end{equation}
The term $(\nabla^2F^*(\eta(q)))\times [\eta_{rq}^*]$ gives the contravariant components $\theta_{rq}^*$ of the vector $v_{qr}^*$.
Thus we have checked that 
\begin{equation}
(\theta(p)-\theta(q))^\top (\eta(r)-\eta(q))= [\theta_{pq}]^\top \times \nabla^2F(\theta(q))\times [\theta_{rq}^*] = g_q(v_{qp},v_{qr}^*).
\end{equation}
That is,  
\begin{equation}
(\theta(p)-\theta(q))^\top (\eta(r)-\eta(q))=0 \Leftrightarrow \dot\gamma_{pq}(0) \perp_q \dot\gamma_{qr}^*(0).
\end{equation}

\item Now, let us prove the Pythagorean identity when $\gamma_{pq} \perp_q \gamma_{qr}^*$.
We use the Bregman $3$-parameter identity~\cite{BD-2005} which generalizes the Euclidean law of cosines:
\begin{property}[Bregman $3$-parameter identity]
\begin{equation}\label{eq:bdcos}
B_F(\theta_1:\theta_2) = B_F(\theta_1:\theta_3) + B_F(\theta_3:\theta_2) - (\theta_1-\theta_3)^\top(\nabla F(\theta_2)-\nabla F(\theta_3)) \geq 0
\end{equation}
\end{property}

Instantiating this identity with $\theta_1=\theta(p)$, $\theta_2=\theta(r)$ and $\theta_3=\theta(q)$, and plugging the fact that
 $(\theta(p)-\theta(q))^\top (\nabla F(\theta(r))-\nabla F(\theta(q)))=0$, we get
\begin{eqnarray}
B_F(\theta(p):\theta(r)) &=& B_F(\theta(p):\theta(q))+B_F(\theta(q):\theta(r)),\\
D_F(p:r) &=& D_F(p:q) + D_F(q:r).
\end{eqnarray}
\end{itemize}
\end{proof}

Notice that the Bregman $3$-parameter identity can be proved by checking that the left-hand side equals the right-hand side of Eq.~\ref{eq:bdcos}.
Another direct proof consists in writing:
\begin{eqnarray}
B_F(\theta_1:\theta_3) &=& F(\theta_1)-F(\theta_3)-(\theta_1-\theta_3)^\top \nabla F(\theta_3),\\   \label{eq:p13}
B_F(\theta_3:\theta_2) &=& F(\theta_3)-F(\theta_2)-(\theta_3-\theta_2)^\top \nabla F(\theta_2),\\ \label{eq:p32} 
B_F(\theta_1:\theta_2) &=& F(\theta_1)-F(\theta_2)-(\theta_1-\theta_2)^\top \nabla F(\theta_2).  \label{eq:p12}
\end{eqnarray}
Adding Eq.~\ref{eq:p13} with Eq.~\ref{eq:p32} and subtracting Eq.~\ref{eq:p12} from them, we get:

\begin{equation}
B_F(\theta_1:\theta_3)+B_F(\theta_3:\theta_2)-B_F(\theta_1:\theta_2)=   (\theta_1-\theta_3)^\top(\nabla F(\theta_2)-\nabla F(\theta_3)), 
\end{equation}
from which it follows that
\begin{equation}
B_F(\theta_1:\theta_2)=B_F(\theta_1:\theta_3)+B_F(\theta_3:\theta_2)-(\theta_1-\theta_3)^\top(\nabla F(\theta_2)-\nabla F(\theta_3)). 
\end{equation}

More generally, we have
\begin{equation}
D_F(\gamma_{pq}(t):q)+D_F(q:\gamma_{qr}(t'))=D_F(\gamma_{pq}(t):\gamma_{qr}(t')),\quad \forall t,t'\in (0,1).
\end{equation}
That is, once we have a triple of points $(p,q,r)$ for which the generalized Pythagorean theorem holds, we can build an infinite number of such triple of points: $(\gamma_{pq}(t),q,\gamma_{qr}(t'))$ for  $t,t'\in (0,1)$.
Figure~\ref{fig:pytha} illustrates this view of the generalized Pythagorean theorem.

\begin{figure}
\centering
\includegraphics[width=0.95\textwidth]{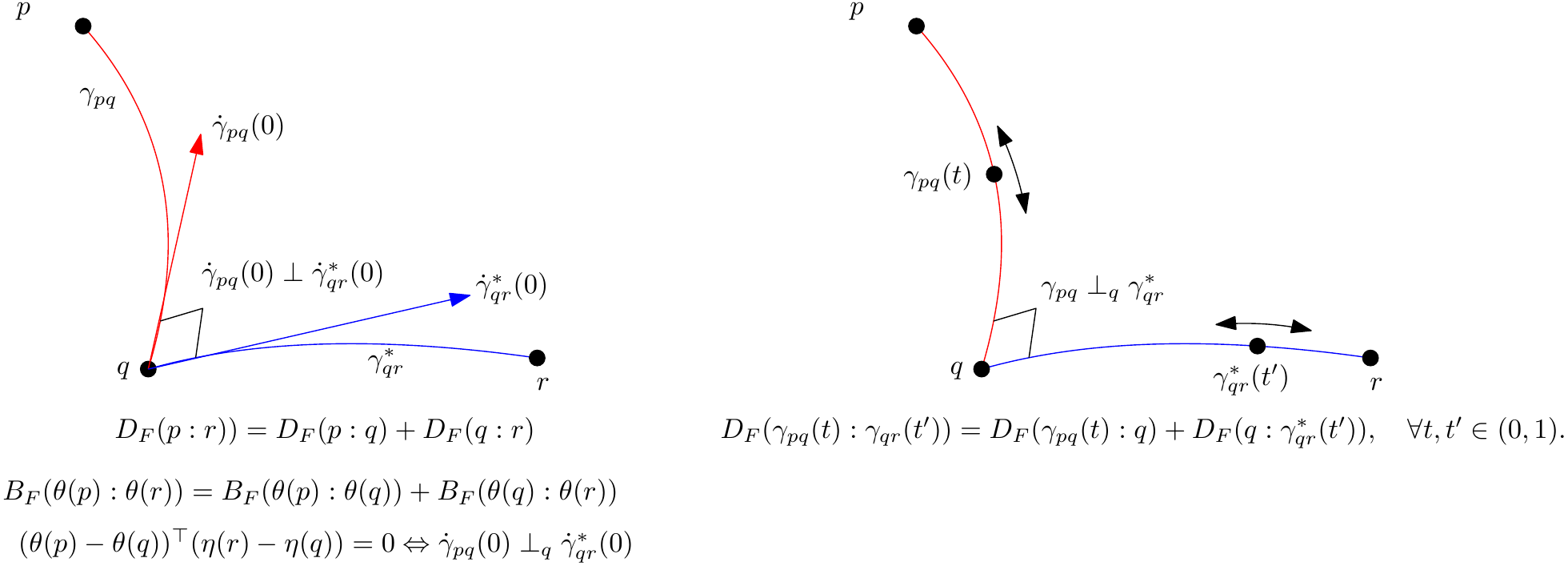}

\caption{The generalized Pythagorean theorem in a dually flat space.
Left: ordinary statement for dual-type geodesics $\gamma_{qp}$ and $\gamma_{qr}^*$ orthogonal at a point $q$. 
Right: Extended statement where the points can move along the dual geodesics $\gamma_{pq}(t)$ and $\gamma{qr}^*(t')$.
\label{fig:pytha}}
\end{figure}

Euclidean geometry with its ordinary Pythagoras' theorem is recovered as the special case of a self-dual Bregman manifold (when the dual potential functions coincide) induced by the Bregman generator $F_\Euc(\theta)=\frac{1}{2}\theta^\top\theta$.

Notice that the tangent vector $v_{pq}^*=\dot\gamma_{pq}^*(0)$ of the dual geodesic $\gamma_{pq}^*(t)$ is written using the covariant components in the reciprocal basis of $T_p$ as $\eta(q)-\eta(p)$. This tangent vector can be written equivalently using the contravariant coordinates
as $\nabla^2 F^*(\eta(p))\times [\eta(q)-\eta(p)]$.
That is, $[{v_{pq}^*}_i]=\eta(q)-\eta(p)$ and $[{v_{pq}^*}^i]=\nabla^2 F^*(\eta(p))\times [\eta(q)-\eta(p)]$.
Similarly, the tangent vector $v_{pq}=\dot\gamma_{pq}(0)$ of the primal geodesic $\gamma_{pq}(t)$ is written using the contravariant components in the basis of $T_p$ as 
 $\theta(q)-\theta(p)$. The covariant coordinates of this vector is $\nabla^2F(\theta(p))\times [\theta(q)-\theta(p)]$:
 $[{v_{pq}}^i]=\theta(q)-\theta(p)$ and $[{v_{pq}}_i]=\nabla^2 F(\theta(p))\times [\theta(q)-\theta(p)]$.
Figure~\ref{fig:tangentvectors} displays an example of two primal and dual geodesic arcs linking $p$ to $q$
 with the tangent vectors $\gamma_{pq}^*(0)$ at $p$ of the dual geodesic $\gamma_{pq}$.

\begin{figure}
\centering
\fbox{\includegraphics[width=0.45\textwidth]{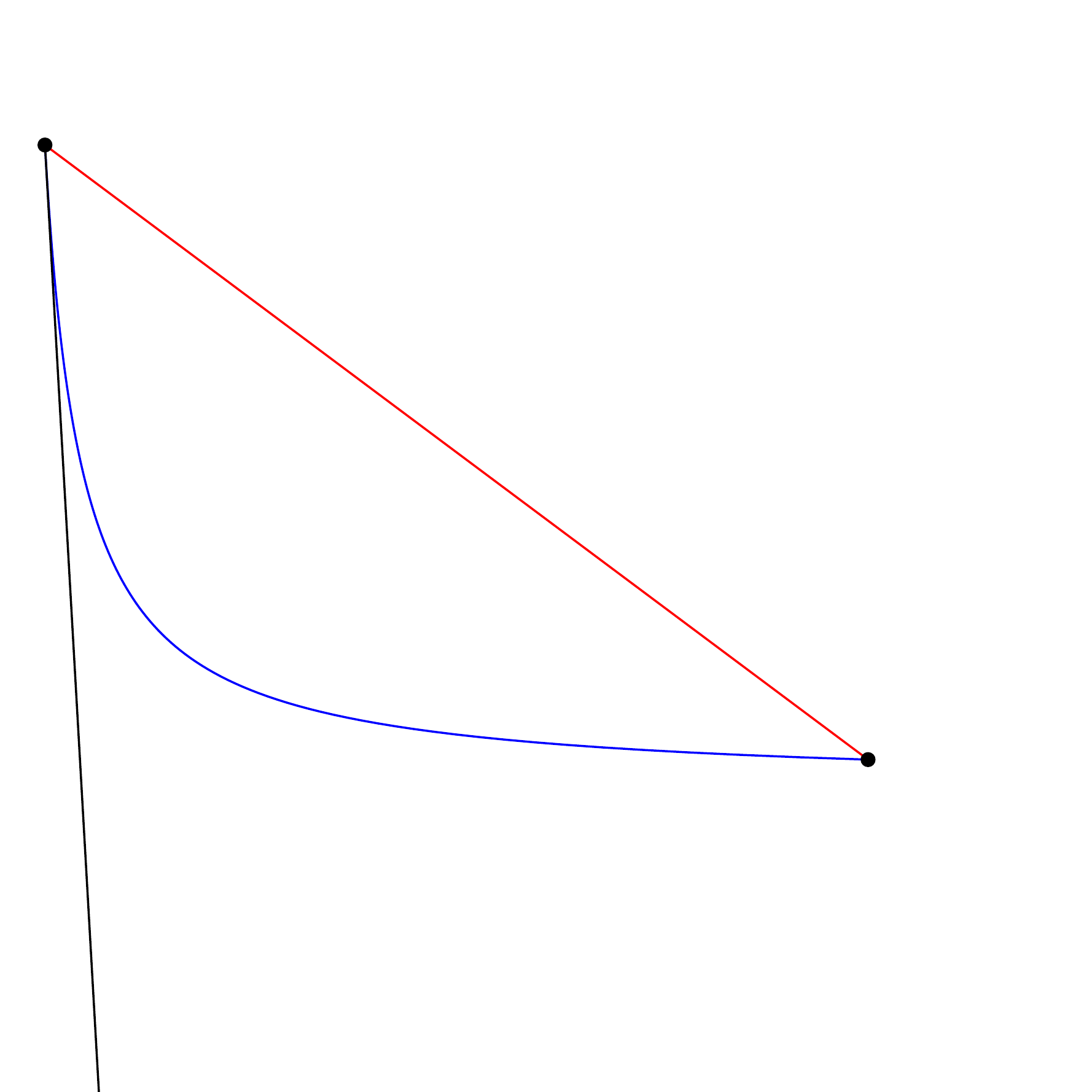}}\ \ 
\fbox{\includegraphics[width=0.45\textwidth]{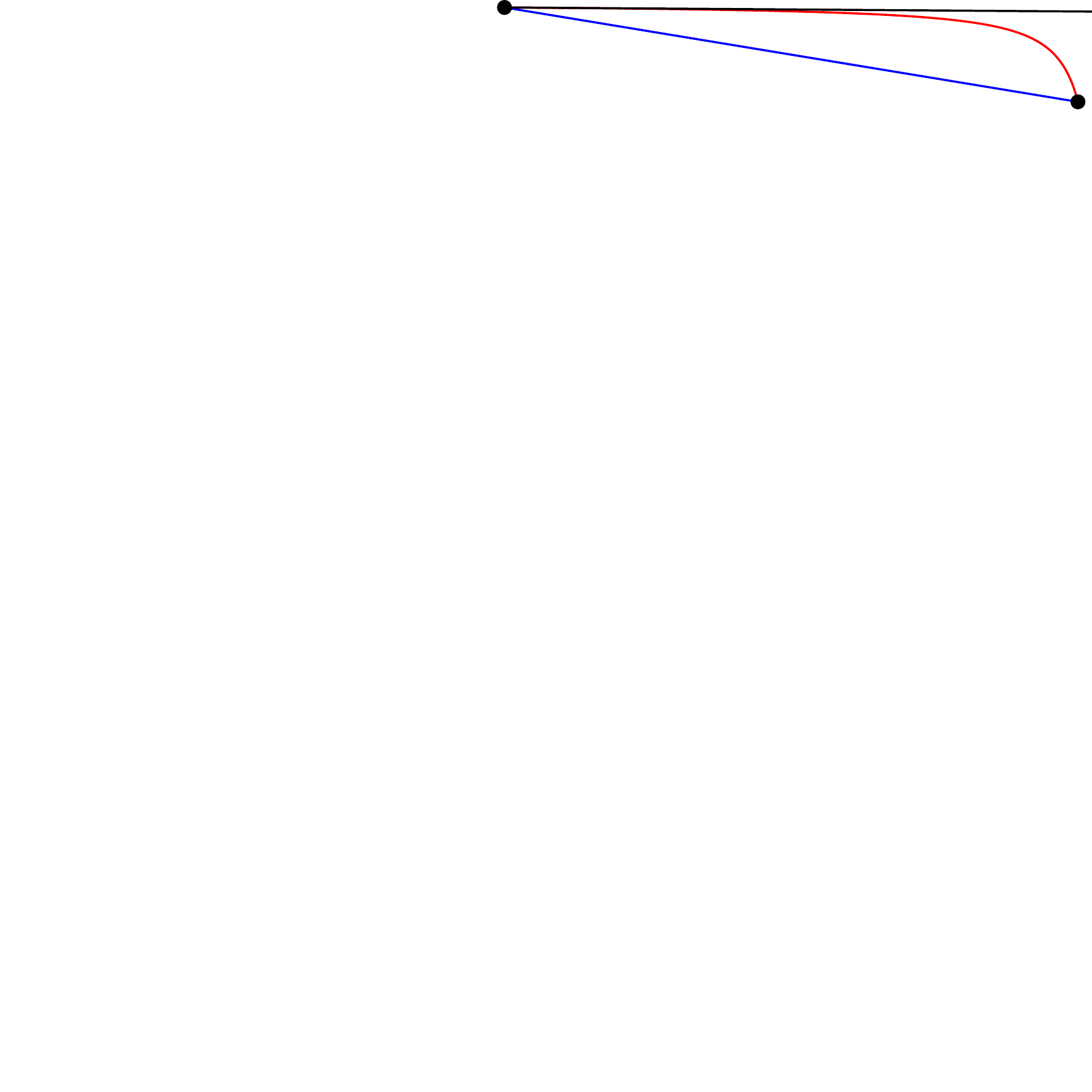}}

\caption{Two points $p$ and $q$ and their primal and dual geodesics visualized using the $\theta$-coordinates (left) and the $\eta$-coordinates (right). The tangent vector $v_{pq}^*\in T_p$ to $\gamma_{pq}^*$ is visualized in the $\theta$-coordinates (left, black line segment), and the tangent vector $v_{pq}\in T_p$ to $\gamma_{pq}$ is visualized in the $\eta$-coordinates (right, black line segment).
Here, we considered the Itakura-Saito manifold with
$\theta(p)=(0.17823054175936948,1.746348492830485)$ and $\theta(q)=(1.6105239241969733,0.6712015045558234)$.
\label{fig:tangentvectors}}
\end{figure}

When we exchange the role played by points $p$ and $r$ in the above Pythagorean theorem, we obtain the following {\em dual} Pythagorean theorem:

\begin{theorem}[Dual Pythagorean theorem]
When a dual geodesic  $\gamma_{pq}^*$ is orthogonal to a primal geodesic $\gamma_{qr}$ at point $q$ ( $\gamma_{pq}^* \perp_q \gamma_{qr}$),
we have $(\eta(p)-\eta(q))^\top (\theta(r)-\theta(q))=0$ and the following divergence identity:
\begin{equation}
D^*_F(p:q)+D^*_F(q:r)=D^*_F(p:r),
\end{equation}
or equivalently
\begin{equation}
D_F(r:q)+D_F(q:p)=D_F(r:p).
\end{equation}
\end{theorem}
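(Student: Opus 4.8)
The plan is to derive the dual Pythagorean theorem by symmetry from the already-proven generalized Pythagorean theorem (Theorem~\ref{thm:pytha}), exploiting the fundamental fact that a Bregman manifold $(M,F)$ is isomorphic to the Bregman manifold $(M,F^*)$ with the roles of the primal $\theta$-coordinates and dual $\eta$-coordinates interchanged, and the roles of the connections $\nabla$ and $\nabla^*$ swapped. Concretely, $D_F^*(p:q) = D_F(q:p) = B_{F^*}(\eta(q):\eta(p))$, so the divergence $D^*$ is itself a Bregman divergence generated by $F^*$ with dual coordinate system $\eta$, and a $\nabla^*$-geodesic in $(M,F)$ is precisely a $\nabla$-geodesic in $(M,F^*)$. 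Hence every statement proven for generic Bregman manifolds transfers verbatim under this dictionary.

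First I would spell out the orthogonality translation. The hypothesis is $\gamma_{pq}^* \perp_q \gamma_{qr}$, i.e.\ $g_q(\dot\gamma_{qp}^*(0), \dot\gamma_{qr}(0)) = 0$. Applying the first step of the proof of Theorem~\ref{thm:pytha} with the roles of primal and dual exchanged — that is, writing the tangent vector to the dual geodesic $\gamma_{qp}^*$ via its covariant components $\eta(p)-\eta(q)$ and the tangent vector to the primal geodesic $\gamma_{qr}$ via its contravariant components $\theta(r)-\theta(q)$, and pairing covariant against contravariant components as $g_q(u,v) = u_i v^i$ — gives directly
\begin{equation}
g_q(\dot\gamma_{qp}^*(0), \dot\gamma_{qr}(0)) = (\eta(p)-\eta(q))^\top (\theta(r)-\theta(q)).
\end{equation}
So $\gamma_{pq}^* \perp_q \gamma_{qr}$ is equivalent to $(\eta(p)-\eta(q))^\top(\theta(r)-\theta(q)) = 0$, which is the claimed test. (Alternatively one can re-run the Crouzeix-identity computation of Theorem~\ref{thm:pytha} with $F$ replaced by $F^*$.)

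Next I would apply the Bregman $3$-parameter identity (Property~1 in the excerpt) to the generator $F^*$ on the $\eta$-coordinates: instantiate $B_{F^*}(\eta_1:\eta_2) = B_{F^*}(\eta_1:\eta_3) + B_{F^*}(\eta_3:\eta_2) - (\eta_1-\eta_3)^\top(\nabla F^*(\eta_2) - \nabla F^*(\eta_3))$ with $\eta_1 = \eta(p)$, $\eta_2 = \eta(r)$, $\eta_3 = \eta(q)$. Since $\nabla F^*(\eta) = \theta$, the cross term becomes $(\eta(p)-\eta(q))^\top(\theta(r) - \theta(q))$, which vanishes by the orthogonality hypothesis. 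This yields $B_{F^*}(\eta(p):\eta(r)) = B_{F^*}(\eta(p):\eta(q)) + B_{F^*}(\eta(q):\eta(r))$, i.e.\ $D_F^*(p:r) = D_F^*(p:q) + D_F^*(q:r)$. Finally, unwinding $D_F^*(a:b) = D_F(b:a)$ rewrites this as $D_F(r:q) + D_F(q:p) = D_F(r:p)$, giving the equivalent form.

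I do not anticipate a genuine obstacle: the whole content is the primal/dual symmetry of the Bregman-manifold setup, and everything needed (the $3$-parameter identity, the Crouzeix identity, the covariant/contravariant pairing of tangent vectors, $\nabla F^* = (\nabla F)^{-1}$) is already established in the excerpt. The only point requiring a modicum of care is bookkeeping the conjugacy $F \leftrightarrow F^*$, $\theta \leftrightarrow \eta$, $\nabla \leftrightarrow \nabla^*$ consistently so that one does not accidentally swap an argument order in a non-symmetric divergence; stating the dictionary explicitly at the outset makes this routine. If one prefers not to invoke the symmetry principle abstractly, the self-contained alternative is simply to repeat the two-step argument of Theorem~\ref{thm:pytha} line by line with $F$ replaced by $F^*$ throughout, which is mechanical.
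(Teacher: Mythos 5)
Your proof is correct, but it takes a different (and longer) route than the paper. The paper obtains the dual Pythagorean theorem as a pure relabeling of Theorem~\ref{thm:pytha}: since orthogonality at $q$ is a symmetric relation, the hypothesis $\gamma_{pq}^*\perp_q\gamma_{qr}$ is exactly the hypothesis of the generalized Pythagorean theorem with the roles of $p$ and $r$ exchanged (a primal geodesic through $q$ and $r$ orthogonal at $q$ to a dual geodesic through $q$ and $p$), so that theorem immediately gives $D_F(r:q)+D_F(q:p)=D_F(r:p)$, which is then rewritten via $D_F^*(a:b)=D_F(b:a)$ --- no new computation is needed. You instead dualize the generator, applying the Bregman $3$-parameter identity to $B_{F^*}$ in the $\eta$-coordinates with $\nabla F^*(\eta)=\theta$, and your bookkeeping of the argument orders ($B_{F^*}(\eta(a):\eta(b))=D_F(b:a)$) and of the covariant/contravariant pairing $g_q(u,v)=u_iv^i$ is accurate, so the cross term correctly becomes $(\eta(p)-\eta(q))^\top(\theta(r)-\theta(q))$ and everything lands where it should. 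The paper's relabeling is the minimal argument; your Legendre-duality argument costs a few more lines but makes the $F\leftrightarrow F^*$, $\theta\leftrightarrow\eta$, $\nabla\leftrightarrow\nabla^*$ isomorphism explicit, which is the more robust principle (it would still apply to dualize statements that are not invariant under merely permuting the vertices). Either is acceptable; it is worth noticing that here the two coincide because the hypothesis and conclusion happen to be symmetric under swapping $p$ and $r$.
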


Notice that we can write the Bregman generalized law of cosines of Eq.~\ref{eq:bdcos} geometrically (i.e., without relying on any prescribed coordinate system) as follows:
\begin{eqnarray}
D_F(p:q) &=& D_F(p:r)+D_F(r:q) - g_r(\dot\gamma_{rp}(0),\dot\gamma_{rq}^*(0) )  \geq 0,\\
&=& D_F(p:r)+D_F(r:q) - \| \dot\gamma_{rp}(0)\|_r \  \| \dot\gamma_{rq}^*(0)\|_r \cos\left(\alpha_r(\dot\gamma_{rp},\dot\gamma_{rq}^*)\right).
\end{eqnarray}

It follows that 
\begin{itemize}
	\item when  $\alpha_r(\dot\gamma_{rp},\dot\gamma_{rq}^*)\in \left[0,\frac{\pi}{2}\right)$ (acute angle), we have $D_F(p:q) < D_F(p:r)+D_F(r:q)$,
	
	\item when  $\alpha_r(\dot\gamma_{rp},\dot\gamma_{rq}^*) = \frac{\pi}{2}$ (right angle), we have $D_F(p:q) = D_F(p:r)+D_F(r:q)$, and
	
	\item when $\alpha_r(\dot\gamma_{rp},\dot\gamma_{rq}^*)\in \left(\frac{\pi}{2},\pi\right)$ (obtuse angle), we have $D_F(p:q) > D_F(p:r)+D_F(r:q)$.
\end{itemize}

Notice that when $F(\theta)=F_\Euc(\theta)=\frac{1}{2}\theta^\top\theta$ (Euclidean geometry) and $\theta(r)=0$, we get

\begin{equation}\label{eq:lc}
\frac{1}{2} \|p-q\|^2 = \frac{1}{2} \|p\|^2 + \frac{1}{2} \|q\|^2 - \|p\| \|q\| \cos \alpha_r(p,q). 
\end{equation}
Multiplying by two both sides of Eq.~\ref{eq:lc}, we recover the usual law of cosines of Euclidean geometry illustrated in Figure~\ref{fig:lawcosines} with $a=\|p\|$, $b=\|q\|$, $c=\|p-q\|$, and $C=\alpha_r([rp],[rq])$:
\begin{equation}
c^2 = a^2 +b^2 - 2ab \cos C. 
\end{equation}

\begin{figure}
\centering
\includegraphics[width=0.6\textwidth]{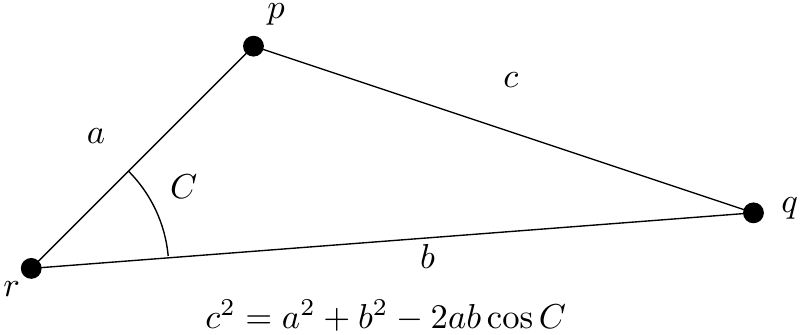}
\caption{The Euclidean law of cosines.\label{fig:lawcosines}}
\end{figure}

We also have the following identity illustrated in Figure~\ref{fig:BregmanParallelogram}:
\begin{eqnarray}
B_F(\theta_1:\theta)+B_F(\theta_2:\theta) &=& B_F\left(\theta_1:\frac{\theta_1+\theta_2}{2}\right) +
 B_F\left(\theta_2:\frac{\theta_1+\theta_2}{2}\right) + 2 B_F\left(\frac{\theta_1+\theta_2}{2}:\theta\right),\label{eq:bdpar}\\
B_F(\theta_1:\theta) + B_F(\theta_2:\theta) &=& 2 \JB_F(\theta_1,\theta_2) + 2 B_F\left(\frac{\theta_1+\theta_2}{2}:\theta\right),\\
B_F(\theta_1:\theta)+B_F(\theta_2:\theta) &=& 2 J_F(\theta_1,\theta_2) + 2 B_F\left(\frac{\theta_1+\theta_2}{2}:\theta\right),
\end{eqnarray}
where $\JB_F$ and $J_F$ denote the {\em Jensen-Bregman divergence}~\cite{JBD-2011} and the {\em Jensen divergence}~\cite{BR-2011}, respectively:
\begin{eqnarray}
\JB_F(\theta_1,\theta_2) &:=& \frac{1}{2}\left( B_F\left(\theta_1:\frac{\theta_1+\theta_2}{2}\right)  +  B_F\left(\theta_2:\frac{\theta_1+\theta_2}{2}\right) \right),\\
&=& \frac{F(\theta_1)+F(\theta_2)}{2} -F\left(\frac{\theta_1+\theta_2}{2}\right)  =: J_F(\theta_1,\theta_2).
\end{eqnarray}

The  family of categorical distributions forms a mixture family~\cite{geowmixtures-2018} in information geometry for the Shannon negentropy generator $F_\Shannon$, 
and we have $B_{F_\Shannon}(\theta_1:\theta_2)=\KL(p_{\theta_1}:p_{\theta_2})$.
It follows that for any three categorical distributions $p$, $q$, and $r$, we have
\begin{equation}
\KL(p:r) + \KL(q:r) =2\JS(p,q)+ 2\KL\left(\frac{p+q}{2}:r\right),
\end{equation}
since $\frac{p+q}{2}$ is  a categorical distribution.
In general, the identity does not hold for members of an exponential family (i.e., Gaussian family) since the mixture density $\frac{p+q}{2}$ does not belong to the exponential family.
The categorical family is a very special case of family that is {\em both} an exponential family and a mixture family~\cite{IG-2016}.

Since there exists a bijection between regular exponential families and regular Bregman divergences~\cite{BD-2005}, we can state the parallelogram-type identity of
Eq.~\ref{eq:bdpar} for parametric densities $\{p_\theta\}$ belonging to the same exponential family~\cite{EF-2009} (with cumulant function $F$) with respect to both the Kullback-Leibler divergence and the {\em Bhattacharyya divergence}~\cite{BR-2011}
\begin{equation}
\Bhat(p,q) \eqdef -\log\int \sqrt{p(x)q(x)} \dmu(x),
\end{equation}
 as:
\begin{equation}\label{eq:parallelogramEF}
\KL(p_\theta:p_{\theta_1})+\KL(p_\theta:p_{\theta_2}) = 2\Bhat(p_{\theta_1},p_{\theta_2}) + 2\KL\left(p_{\frac{\theta_1+\theta_2}{2}}:p_\theta\right),
\end{equation}
since $\KL(p_\theta:p_{\theta'})=B_F(\theta':\theta)$ and $\Bhat(p_\theta:p_{\theta'})=J_F(\theta,\theta')$ for densities $p_\theta$ 
and $p_{\theta'}$ belonging to the same exponential family.
For example, the identity of Eq.~\ref{eq:parallelogramEF} applies to any two densities of the Gaussian family.

\begin{figure}
\centering
\includegraphics[width=0.8\textwidth]{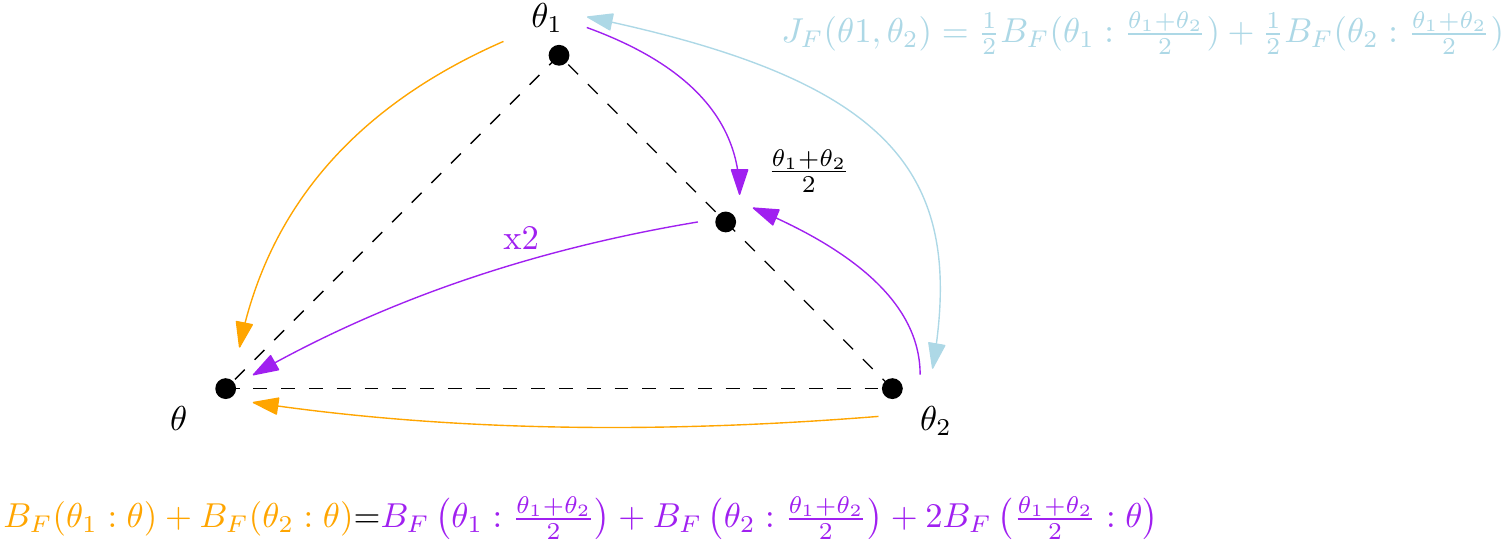}
\caption{A parallelogram-type identity for Bregman divergences.\label{fig:BregmanParallelogram}}
\end{figure}

The $3$-parameter property of Bregman divergences is a particular instance of the following {\em $4$-parameter property}~\cite{Bregman4pt-1997} (a quadrilateral relation):  

\begin{property}[Bregman $4$-parameter property]
For any four points $p_1$, $p_2$, $q_1$, $q_2$, we have the following identity:
\begin{eqnarray}
B_{F}(\theta(p_1) : \theta(q_1)) +  B_{F}(\theta(p_2) : \theta(q_2)) 
 && - B_{F}(\theta(p_1) : \theta(q_2)) - B_{F}(\theta(p_2) : \theta(q_1))  \nonumber\\
&& -(\theta(p_2)-\theta(p_1))^\top (\eta(q_1)-\eta(q_2))=0.
\end{eqnarray}
\end{property}

Indeed, let $q_1=p_1$. Then we recove:r 
\begin{eqnarray}
B_{F}(\theta(p_2) : \theta(q_2)) &=& B_{F}(\theta(p_1) : \theta(q_2)) + B_{F}(\theta(p_2) : \theta(q_1))
 -(\theta(p_2)-\theta(p_1))^\top (\eta(q_2)-\eta(p_1)),\\
D_F(p_2:q_2) &=&  D_F(p_2:p_1) + D_F(p_1:q_2)  - g_{p_1}(\dot\gamma_{p_1p_2}(0), \dot\gamma^*_{p_1q_2}(0)).
\end{eqnarray}
Similarly, let $q_2=p_2$. Then we get:
\begin{eqnarray}
B_{F}(\theta(p_1) : \theta(q_1)) &=& B_{F}(\theta(p_1) : \theta(p_2)) + B_{F}(\theta(p_2) : \theta(q_1))
-(\theta(p_1)-\theta(p_2))^\top (\eta(q_1)-\eta(p_2)),\\
D_F(p_1:q_1)&=&D_F(p_1:p_2)+D_F(p_2:q_1) -g_{p_2}(\dot\gamma_{p_2p_1}(0),\dot\gamma^*_{p_2q_1}(0)).
\end{eqnarray}

To derive the $4$-parameter identity, we use twice the $3$-parameter identity as follows:
\begin{eqnarray}
B_{F}(\theta_1 : \theta_3) &=& B_F(\theta_1:\theta_2) + B_F(\theta_2:\theta_3) -(\theta_3-\theta_2)^\top (\nabla F(\theta_1)-\nabla F(\theta_2)), \label{eq:four1} \\
B_{F}(\theta_4 : \theta_3) &=& B_F(\theta_4:\theta_2) + B_F(\theta_2:\theta_3) -(\theta_3-\theta_2)^\top (\nabla F(\theta_4)-\nabla F(\theta_2)). \label{eq:four2}
\end{eqnarray}
Indeed, subtracting Eq.~ \ref{eq:four2} from Eq.~\ref{eq:four1}, we get

\begin{equation}
B_{F}(\theta_1 : \theta_3) - B_{F}(\theta_4 : \theta_3) = B_F(\theta_1:\theta_2)-B_F(\theta_4:\theta_2) - (\theta_3-\theta_2)^\top
(\nabla F(\theta_1)  - \nabla F(\theta_4)).
\end{equation}

We can interpret geometrically  the $4$-parameter identity as follows:
\begin{equation}
D_F(p_1:p_3) - D_F(p_4:p_3) = D_F(p_1:p_2) - D_F(p_4:p_2) - g_{p_2}(\dot\gamma_{p_2p_3}(0),\dot\gamma_{p_2p_1}^*(0)) 
+ g_{p_2}(\dot\gamma_{p_2p_3}(0),,\dot\gamma_{p_2p_4}^*(0)). 
\end{equation}

Note that the Bregman $4$-parameter identity is said parallelogram-type, because if we choose $F_\Euc(\theta)=\frac{1}{2}\theta^\top\theta=\frac{1}{2}\|\theta\|_2^2$, then we have $B_{F_\Euc}(\theta:\theta')=\frac{1}{2}\|\theta-\theta'\|^2$, and the $4$-parameter identity becomes for $\theta=0$:
\begin{eqnarray}
\frac{1}{2}\|\theta_1\|^2+\frac{1}{2} \|\theta_2\|^2 &=& \left\|\frac{\theta_1-\theta_2}{2}\right\|^2 + \left\|\frac{\theta_1+\theta_2}{2}\right\|^2,\\
2\|\theta_1\|^2 + 2\|\theta_2\|^2 &=& \|\theta_1-\theta_2\|^2 + \|\theta_1+\theta_2\|^2,
\end{eqnarray}
the usual parallelogram identity for the $L_2$-normed space.

Finally, let us mention some special submanifolds in dually flat spaces:
A $k$-dimensional submanifold $S\subset M$ of a dually flat manifold $M$ is said $\nabla$-autoparallel~\cite{IG-2016} ($\nabla^*$-autoparallel) iff. the sets of points of $S$ can be described by a $k$-dimensional affine flat in the $\theta$-coordinate system ($\eta$-coordinate system, respectively).
Primal $\gamma_{pq}$ geodesics are 1D $\nabla$-autoparallel submanifolds and dual $\gamma_{pq}^*$ geodesics are 1D $\nabla^*$-autoparallel submanifolds.
Notice that the intersection of $\theta$-flats is a $\theta$-flat, but the intersection of a $\theta$-flat with a $\eta$-flat is in general neither a $\theta$-flat nor a $\eta$-flat.

In a Bregman manifold, we can also express the divergence between two points $p$ and $q$ using the {\em Fenchel-Young divergence} $A_F$ (also called the canonical divergence of dually flat spaces):
\begin{equation}
D_F(p:q)=A_F(\theta(p):\eta(q)) \eqdef F((\theta(p)) + F^*(\eta(q)) - \theta(p)^\top \eta(q).
\end{equation}

Thus we have
\begin{equation}
D_F(p:q)=B_F(\theta(p):\theta(q))=A_F(\theta(p):\eta(q))=A_{F^*}(\eta(q):\theta(p))=B_{F*}(\eta(q):\eta(p))=D_{F^*}(q:p),
\end{equation}
and $D_{F^*}(q:p)={D_F}^*(q:p)=D_F(p:q)$ (where ${D_F}^*$ is the reverse divergence).

\subsection{Some examples of Bregman manifolds}\label{sec:BMex}

We describe concisely the Mahalanobis manifolds in \S\ref{sec:Mah}, the extended Kullback-Leibler manifold in \S\ref{sec:KL}, 
  the Itakura-Saito manifold in \S\ref{sec:IS}, and the multinoulli manifolds   (\S\ref{sec:Multinoulli}).

\subsubsection{The Mahalanobis manifolds}\label{sec:Mah}
Consider the case where the Bregman generator is defined by
\begin{equation}
F_Q(\theta)=\frac{1}{2}\theta^\top Q\theta,
\end{equation}
for a prescribed symmetric positive-definite $D\times D$ matrix $Q\succ 0$.
The Legendre-Fenchel convex conjugate~\cite{LegendreIG-2010,CR-IG-2013} is 
\begin{equation}
F^*(\eta)=\frac{1}{2}\eta^\top Q^{-1}\eta=F_{Q^{-1}}(\eta),
\end{equation} 
and  it follows that 
\begin{equation}
\eta(\theta)=\nabla F_Q(\theta)= Q\theta, \quad \theta(\eta)=\nabla F_Q^*(\eta)=Q^{-1}\eta.
\end{equation}
The dual Riemannian metrics are 

\begin{eqnarray}
\left[g_{ij}\right] &=& \nabla^2 F_Q(\theta)=Q,\\
\left[{g^*}^{ij}\right] &=& \nabla^2 F_Q^*(\eta)=Q^{-1}.
\end{eqnarray}

The dual geodesics  $\gamma_{pq}$ and $\gamma_{qp}^*$ in a Mahalanobis manifold coincide. 
The dual Bregman divergences are squared Mahalanobis distances~\cite{Mahalanobis-1936,CurvedMahalanobis-2016}:
\begin{equation}
B_{F_Q}(\theta_1:\theta_2)=\frac{1}{2}(\theta_1-\theta_2)^\top Q (\theta_1-\theta_2),
\end{equation}
and 
\begin{equation}
B_{F^*_Q}(\eta_1:\eta_2)=\frac{1}{2}(\eta_1-\eta_2)^\top Q^{-1} (\eta_1-\eta_2).
\end{equation}
We check that 
\begin{equation}
B_{F^*_Q}(\eta_2:\eta_1)=  \frac{1}{2} (Q(\theta_2-\theta_1))^\top Q^{-1} Q(\theta_2-\theta_1)= B_{F_Q}(\theta_1:\theta_2),
\end{equation}
since $Q^\top=Q$.
The squared Mahalanobis Bregman divergences are provably the only symmetric Bregman divergences~\cite{BVD-2010}.
In particular, when $Q=I$, the identity matrix, the dual Bregman divergences $B_F$ and $B_F^*$ coincide with half of the squared Euclidean distance $D_E(\theta_1,\theta_2)=\frac{1}{2} (\theta_2-\theta_1)^\top (\theta_2-\theta_1)$.
By using the Cholesky decomposition of $Q$, i.e., $Q=LL^\top$ where $L$ is a lower triangular matrix with positive diagonal entries,
 we have $B_{F_Q}(\theta_1:\theta_2)=B_{F_I}(L^\top\theta_1:L^\top\theta_2)=\|L^\top(\theta_1-\theta_2)\|^2$, where $I$ denotes the identity matrix. 
Notice that the squared Mahalanobis divergences $B_{F_Q}$ are non-separable Bregman divergences whenever $Q$ has non-zero off-diagonal elements.

\subsubsection{The Kullback-Leibler manifold}\label{sec:KL}
The extended Shannon negative entropy~\cite{BR-2011}
\begin{equation}
F_\KL(\theta)=\sum_{i=1}^D \theta^i\log\theta^i-\theta^i
\end{equation} 
is a strictly convex and $C^3$ function on $\Theta=\mathbb{R}_{++}^D$, i.e., a separable Bregman generator.
Here, we consider the positive orthant domain instead of the probability simplex hence the name extended Shannon negentropy.
We have the following conversion formula between the primal and dual coordinates: 
\begin{equation}
\eta(\theta)=\nabla F_\KL(\theta)=[\log\theta^i],\quad \theta(\eta)=\nabla F_\KL^*(\eta)=[\exp\eta^i].
\end{equation}
The Legendre convex conjugate is 
\begin{equation}
F^*_\KL(\eta)= \theta(\eta)^\top \eta -F(\theta(\eta))=\sum_{i=1}^D \exp(\eta^i).
\end{equation} 
The dual Riemannian metrics are
\begin{equation}
[g_{ij}]=\nabla^2 F_\KL(\theta)=\diag\left(\frac{1}{\theta^1}, \ldots, \frac{1}{\theta^D}\right), \quad 
[{g^*}^{ij}]=\nabla^2 F_\KL^*(\eta)=\diag\left(\exp(\eta^1), \ldots, \exp(\eta^D)\right).
\end{equation}
The dual Bregman divergences are 
\begin{eqnarray}
B_{F_\KL}(\theta_1:\theta_2) &=& \sum_{i=1}^D \theta_1^i\log\frac{\theta_1^i}{\theta_2^i}+ \theta_2^i-\theta_1^i,\\
B_{F_\KL^*}(\eta_1:\eta_2) &=&\sum_{i=1}^D  \exp(\eta_1^i)-\exp(\eta_2^i)-(\eta_1^i-\eta_2^i) \exp(\eta_2^i)=B_{F_\KL}(\theta(\eta_2):\theta(\eta_1)).
\end{eqnarray}

\subsubsection{Itakura-Saito manifold}\label{sec:IS}
The $D$-dimensional Burg information (i.e., Burg negative entropy~\cite{BDcentroid-2009,BR-2011}) is defined by the following separable convex generator: 
\begin{equation}
F_\IS(\theta)=-\sum_{i=1}^D \log(\theta^i).
\end{equation}
We have 
\begin{equation}
\eta(\theta)=\nabla F(\theta)=\left[-\frac{1}{\theta^i}\right]_i,
\end{equation}
 and the  Bregman divergence is called the Itakura-Saito divergence~\cite{BVD-2010}:
\begin{equation}
B_{F_\IS}(\theta_1:\theta_2)= \sum_{i=1}^D \frac{\theta_1^i}{\theta_2^i}-\log \frac{\theta_1^i}{\theta_2^i}-1.
\end{equation}
We convex $\eta$-coordinates to $\theta$-coordinates as follows:
\begin{equation}
\theta(\eta)= \left[-\frac{1}{\eta^i}\right]_i,
\end{equation}
and the dual Bregman generator is 
\begin{equation}
F^*_\IS(\eta)=\eta^\top \theta(\eta)-F_\IS(\theta(\eta))=-D+\sum_{i=1}^D \log\left(-\frac{1}{\eta^i}\right)=-D-\sum_{i=1}^D \log(-\eta_i),
\end{equation}
where $\eta\in\bbR_{--}^D$.
The dual Bregman divergence is 
\begin{equation}
B_{{F_\IS}^*}(\eta_1:\eta_2)= \sum_{i=1}^D \frac{\eta_1^i}{\eta_2^i}-\log \frac{\eta_1^i}{\eta_2^i}-1.
\end{equation}
We check that
\begin{equation}
B_{{F_\IS}^*}(\eta_2:\eta_1) = B_{F_\IS}(\theta_1:\theta_2).
\end{equation}
The dual Riemannian metric tensors are  
\begin{equation}
[g_{ij}]=\nabla^2 F_\IS(\theta) = \diag\left(\frac{1}{\sqr(\theta^1)}, \ldots, \frac{1}{\sqr(\theta^D)}\right),\quad 
[{g^*}^{ij}]=\nabla^2 F_\IS^*(\eta)=\diag\left(\frac{1}{\sqr(\eta^1)}, \ldots, \frac{1}{\sqr(\eta^D)}\right).
\end{equation}

\subsubsection{The multinoulli manifolds}\label{sec:Multinoulli}

We can build a dually flat space from any strictly convex and $C^3$ convex function $F$~\cite{EIG-2018} which also defines a Bregman generator. 
In particular, we can use the {\em log-normalizer} (also called cumulant function or log-partition function) of a regular exponential family~\cite{BD-2005} as such a Bregman generator~\cite{MC-2018}.
Let us consider the multinoulli family (i.e., the multinomial family for a single trial also called the family of categorical distributions).
The probability of a multinoulli distribution with $d$ categories $c_1,\ldots, c_d$ such that $\Pr(x=c_i)=\lambda_i$  is:
\begin{equation}
\Pr(x) = \lambda_1^{x_1}\times\ldots \times \lambda_d^{x_d},
\end{equation}
with $x_i\in\{0,1\}$ and $\sum_{i=1}^d x_i=1$.
Let us write the multinoulli probability mass function in the canonical form of an exponential family as
\begin{eqnarray}
\lambda_1^{x_1}\times\ldots \times \lambda_d^{x_d} &=& \exp\left(\sum_{i=1}^d x_i\log\lambda_i\right),\\
&=& \exp\left(\sum_{i=1}^{d-1} x_i\log\lambda_i + \underbrace{(1-\sum_{i=1}^{d-1}x_i)}_{x_d}\log {\lambda_d} \right),\\
&=&\exp\left(\sum_{i=1}^{d-1} x_i\theta_i-F(\theta)\right),\\
\end{eqnarray}
with the natural parameter $\theta_i=\log\frac{\lambda_i}{\lambda_d}$ for $i\in\{1,\ldots,d-1\}$.
The multinoulli distribution is an exponential family of order $D=d-1$.
The natural parameter space $\Theta$ is $\bbR^{D}=\bbR^{d-1}$.
We can convert back the natural parameter $\theta$ to the original parameters $\lambda$ as follows:
\begin{equation}
\lambda(\theta)=
\left\{
\begin{array}{ll}
 \lambda_i = \frac{\exp(\theta_i)}{1+\sum_{i=1}^D \exp(\theta_i)}, &  \mbox{for $i\in\{1,\ldots,D\}$} \\
 \lambda_d= \frac{1}{1+\sum_{i=1}^D \exp(\theta_i)}. & 
\end{array}
\right.
\end{equation}

The log-normalizer is
\begin{equation}
F_\Mult(\theta)=  -\log\lambda_d = \log\left(1+\sum_{i=1}^D \exp(\theta_i)\right).
\end{equation}

The gradient of the log-normalizer is
\begin{equation}
\eta(\theta)=\nabla F_\Mult(\theta)= \left[ \frac{\exp(\theta_i)}{1+\sum_{i=1}^D \exp(\theta_i)} \right]_i,
\end{equation}
and the reciprocal gradient is
\begin{equation}
\theta(\eta)=\nabla G_\Mult(\eta)= \left[ \log\frac{\eta_i}{1-\sum_{i=1}^D}\eta_i \right]_i.
\end{equation}
The convex conjugate of $F$ is
\begin{equation}
G_\Mult(\eta)= \eta^\top\theta(\eta)-F(\theta(\eta))=\sum_i \eta_i\log\eta_i+\left(1-\sum_i\eta_i\right)\log\left(1-\sum_i\eta_i\right).
\end{equation}

It follows that the Riemannian metric and dual Riemannian metric are Hessians of the potential functions $F_\Mult(\theta)$ and $G_\Mult(\eta)$, respectively:
\begin{eqnarray}
\left[\nabla^2 F_\Mult(\theta)\right]_{ij} &=& 
\left\{
\begin{array}{ll}
 -\frac{\exp(\theta_i+\theta_j)}{\left(1+\sum_i \exp(\theta_i)\right)^2}  & \mbox{if $i\not=j$},\\
\frac{\exp(\theta_i)}{\left(1+\sum_i \exp(\theta_i)\right)} -\frac{\exp(2\theta_i)}{\left(1+\sum_i \exp(\theta_i)\right)^2}	& \mbox{if $i=j$}
\end{array}
\right.
\\
\nabla^2 G_\Mult(\eta) &=& \frac{1}{1-\sum_i \eta_i} 1_M + \diag\left(\frac{1}{\eta_1},\ldots,\frac{1}{\eta_D}\right)
=\left\{
\begin{array}{ll}
 \frac{1}{1-\sum_i \eta_i}  & \mbox{if $i\not=j$},\\
 \frac{1}{1-\sum_i \eta_i}+\frac{1}{\eta_i} & \mbox{if $i=j$}
\end{array}
\right.,
\end{eqnarray}
where $1_M$ denotes the $D\times D$-matrix with all entries equal to $1$, and $\diag(x_1,\ldots,x_D)$ the diagonal matrix with diagonal elements $x_1, \ldots, x_D$.

The dual Bregman divergences are
\begin{eqnarray}
B_{F_\Mult}(\theta_1:\theta_2) &=& \KL^*(p_{\lambda(\theta_1)} : p_{\lambda(\theta_2)} ) = \KL(p_{\lambda(\theta_2)} : p_{\lambda(\theta_1)} ), \\
B_{G_\Mult}(\eta_1:\eta_2)&=& \KL(p_{\lambda(\eta_1)} : p_{\lambda(\eta_2)} ), \\
\end{eqnarray}
where $\KL$ is the discrete Kullback-Leibler divergence and $\KL^*$ the reverse Kullback-Leibler divergence.
We   convert  the expectation parameter $\eta$ to the original parameter $\lambda$ as follows:
\begin{equation}
\lambda(\eta)=
\left\{
\begin{array}{ll}
 \lambda_i = \eta_i, &  \mbox{for $i\in\{1,\ldots,D\}$} \\
 \lambda_d= 1-\sum_{i=1}^D \eta_i. & 
\end{array}
\right.
\end{equation}

In general, the Bregman divergence induced by log-normalizer of an exponential family amounts to a reverse Kullback-Leibler divergence~\cite{MC-2018}. 
Notice that the non-separable log-normalizer of a multinomial family yields a nice example of a non-separable Bregman divergence.
The multinoulli manifold is called the {\em Bernoulli manifold} when $D=1$ (i.e., $d=2$), and the trinoulli manifold when $D=2$ (i.e., $d=3$).

\subsection{Bregman balls and Bregman spheres}\label{sec:BBall}

The space of Bregman spheres and Bregman balls was investigated in~\cite{BVD-2010}.
Here, we report the parametric equations of Bregman spheres which allow one to draw them.
Let us consider a {\em separable} (Bregman) distance $D(\theta_1:\theta_2)$ between two $d$-dimensional vectors 
$\theta_1=(\theta_1^1,\ldots,\theta_1^d)$ and $\theta_2=(\theta_2^1,\ldots,\theta_2^d)$:
$$
D(\theta_1:\theta_2)=\sum_{i=1}^d D_i(\theta_1^i:\theta_2^i),
$$
where the $D_i(\cdot,\cdot)$'s denote the  scalar distances.
To avoid confusion with the distance $D$, the dimension of the spaces is denoted by $d$ in this section.
For example, we may consider the discrete $f$-divergences~\cite{EN-PhD-Csiszar-1967} which includes the Kullback-Leibler divergence, or the Bregman divergences~\cite{Bregman-1967} which includes the extended Kullback-Leibler divergence  and the    Itakura-Saito divergence.
The Kullback-Leibler divergence between two densities of an exponential family amounts to a Bregman divergence~\cite{KLBD-2001}.

Let us define a {\em $D$-sphere} of center $\theta_c$ and radius $r\geq 0$ as follows:  
$$
\sphere_D(\theta_c,r):=\{\theta\ :\ D(\theta_c:\theta)=r\}.
$$

In practice, we can visualize a $D$-sphere by drawing the {\em implicit function} $f(\theta):=D(\theta_c:\theta)-r$.
Plotting packages such as {\tt gnuplot}\footnote{\url{http://www.gnuplot.info/}} use either a marching cube technique or a fast quadtree approximation algorithm to rasterize these implicit functions.

Here, we are concerned with reporting parametric equations of $D$-spheres for separable distances $D$.
Let us characterize the coordinates $\theta^i$ of a point $\theta$ belonging to a  $D$-sphere $\sphere_D(\theta_c,r)$ as follows:

$$
\left\{
\begin{array}{lll}
D_1(\theta_c^1:\theta^1) &=& u_1\geq 0,\\
\ldots\\
D_d(\theta_c^d:\theta^d) &=& u_d\geq 0,\\
u_1+\ldots+u_d &=& r
\end{array}
\right.
$$

For $f$-divergences and Bregman divergences, the equation $D(c,x)=u$ for a scalar distance
 $D$ admits two solutions $x_{D,c,u;1}\geq u$ and 
$x_{D,c,u;-1}\leq u$ for $r>0$, where $\pm 1$ denotes the sign to localize the interval endpoint with respect to the center $c$.
It follows that the parametric equation of a $d$-dimensional $D$-sphere can be written independently on the $2^d$ orthants.
For $o\in \{-1,1\}\times \{-1,1\}\subset\bbR^d$, let $O_o$ denote the orthant
$$
O_o=\left\{\theta \ :\ \theta^i\geq \sign(o^i)\theta_c^i\right\}.
$$

Then the equation of the $D$-sphere on the orthant $O_o$ is

\begin{eqnarray*}
\left(x_{D_1,\theta_c^1,u_1;\sign(o^1)(o)},\ldots, x_{D_d,\theta_c^d,u_d;\sign(o^d)(o)}\right),\\ 
u_1\in[0,r], u_2\in [0,r-u_1],\ldots, u_{d}\in \left[0, r-\sum_{i=1}^{d-2} u_i\right].
\end{eqnarray*}

Thus if we know the parametric equations of a $D$-sphere on two opposite orthants, we can deduce the full parameteric equation of the $D$-sphere.
Notices that all axis-parallel boxes 
$$
B=\prod_{i=1}^d \left[x_{D_i,\theta_c^i,u_i;-1},x_{D_i,\theta_c^1,u_i;1}\right]
$$
are tangent at the $D$-sphere at its $2^d$ corners.
Figure~\ref{fig:tangentbox} illustrates this property for a 2D extended Kullback-Leibler circle (left) and a 2D Itakura-Saito circle (right).

\begin{figure}
\centering
\includegraphics[width=0.45\textwidth]{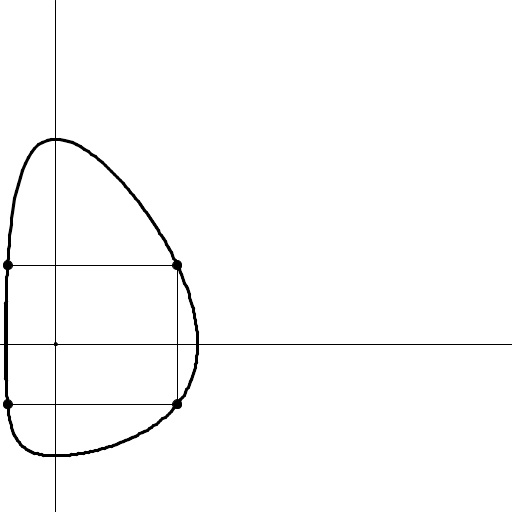}
\includegraphics[width=0.45\textwidth]{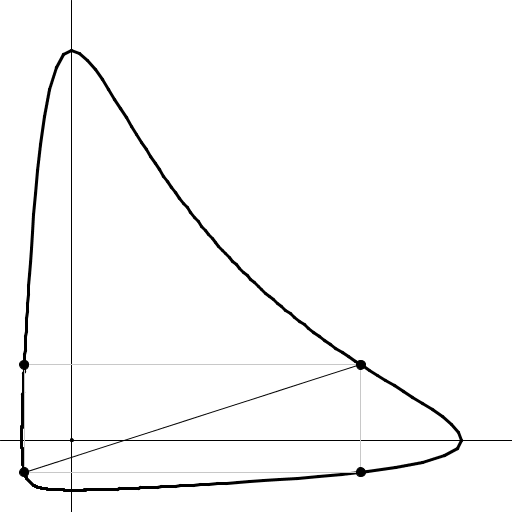}

\caption{$D$-spheres with respect to separable distances shown with a random box tangent at its vertices: To plot a $D$-sphere, we only need to know the parametric equations on two opposite orthants to deduce the other orthants.
\label{fig:tangentbox}}
\end{figure}

We now illustrate how to get the parametric equations of relative entropic spheres defined with respect to the Kullback-Leibler divergence and the Itakura-Saito divergence in terms of the branches of the Lambert W function~\cite{LambertW-1996}.

\subsubsection{Parameterization of the Kullback-Leibler spheres}

The scalar extended Kullback-Leibler divergence defined on $\bbR^+$ is defined by:
$$
D_\KL[\theta_c:\theta] := \theta_c\log\frac{\theta_c}{\theta}+\theta-\theta_c.
$$

The two solutions of this equation will rely on two branches of the Lambert W function~\cite{LambertW-1996}.
Indeed, the equation $xe^x=a$ solves as 
$$
x=\left\{
\begin{array}{ll}
W_0(a) & \mbox{if $a\geq\frac{1}{e}$}\\
W_{-1}(a) & \mbox{if $-\frac{1}{e}\leq a<0$}
\end{array}
\right.
$$

To solve $D_\KL[\theta_c:\theta]=u$ for $\theta_c\in(0,1)$ and $u\geq 0$ amounts to solve the equation
\begin{eqnarray*}
\theta_c\log\frac{\theta_c}{\theta}+\theta &=& u+\theta_c,\\
\theta-\theta_c\log\theta &=& u+\theta_c(1-\log\theta_c).
\end{eqnarray*}

We get the two solutions using the two branches $W_0$ and $W_{-1}$ of the Lambert W function~\cite{LambertW-1996}:
\begin{eqnarray}
x_{D_\KL,\theta_c,u;-1} &=& -\theta_c W_0\left(-\exp\left(-\frac{u}{\theta_c}-1\right)\right),\\
x_{D_\KL,\theta_c,u; 1} &=& -\theta_c W_{-1}\left(-\exp\left(-\frac{u}{\theta_c}-1\right)\right). 
\end{eqnarray}

Figure~\ref{fig:KLsphere} displays a $D_\KL$ sphere in two dimensions by plotting the four quadrant parameterizations.
See the corresponding online video.\footnote{\url{https://www.youtube.com/watch?v=pgDwWJ1DQFY}}
The restriction of the KL sphere to the hyperplane $\sum_{i=1}^d \theta^i=1$ yields a Kullback-Leibler sphere on the $(d-1)$-dimensional probability simplex (i.e., standard simplex) embedded in $\mathbb{R}^d$.

\begin{figure}
\centering
\includegraphics[width=0.3\textwidth]{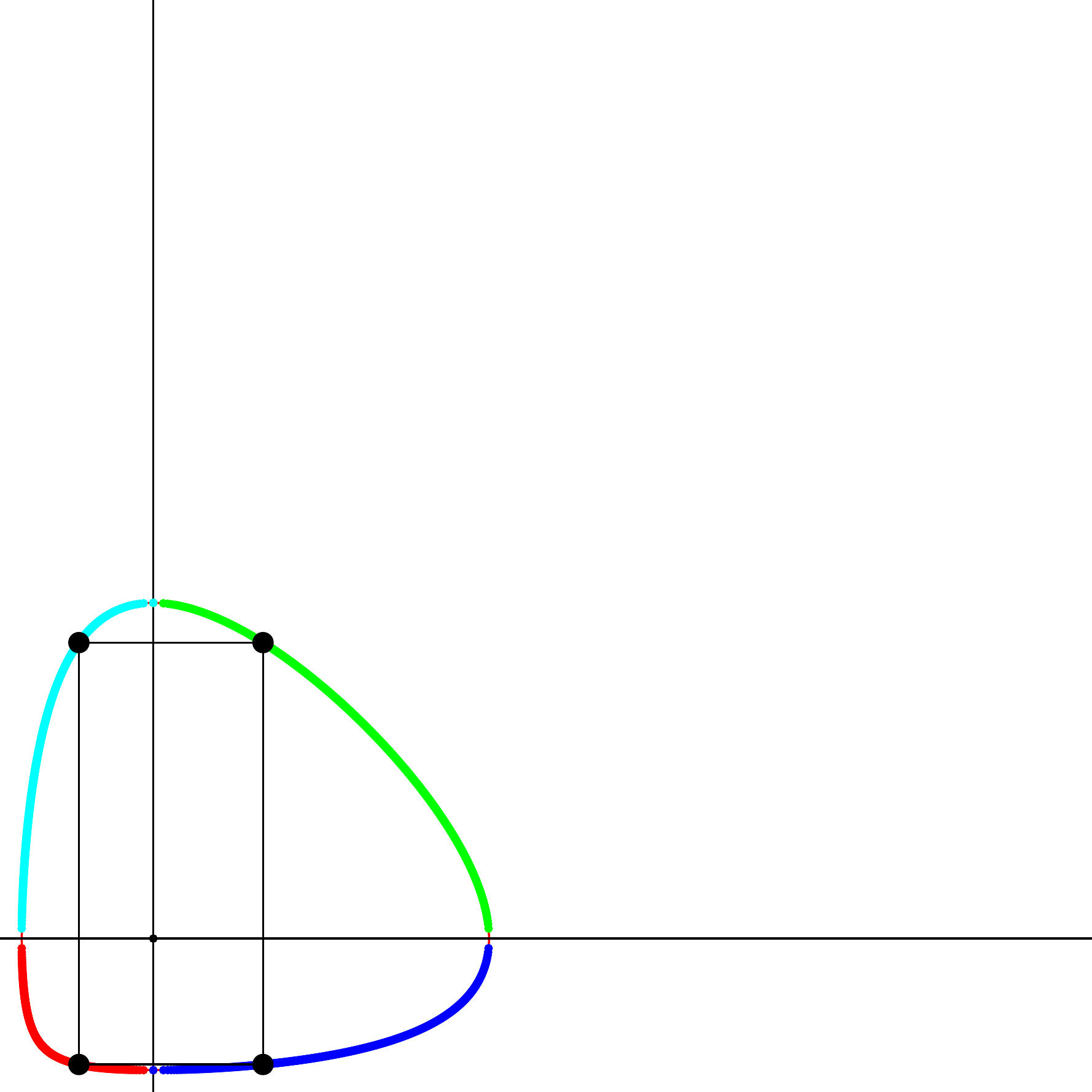}
\includegraphics[width=0.3\textwidth]{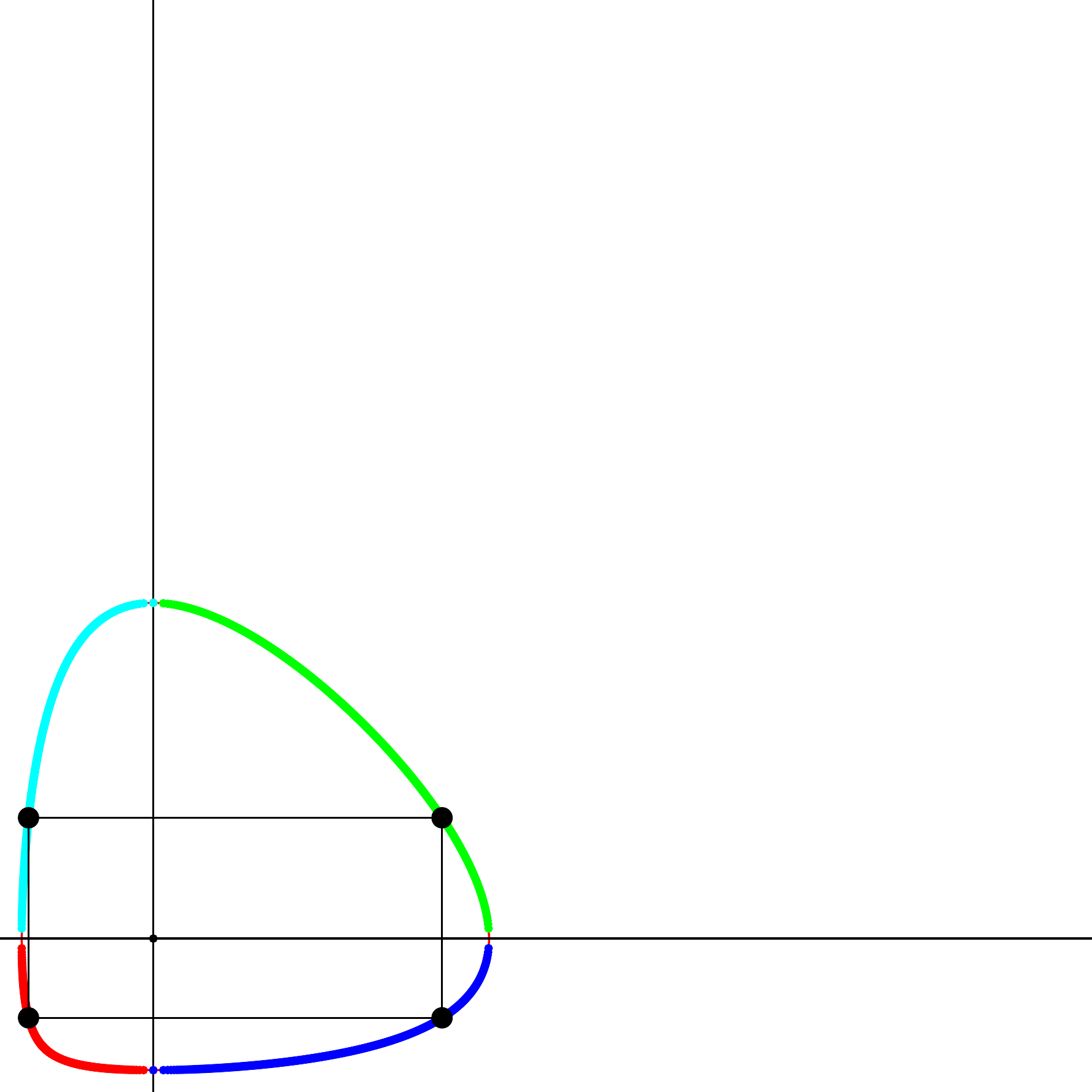}
\includegraphics[width=0.3\textwidth]{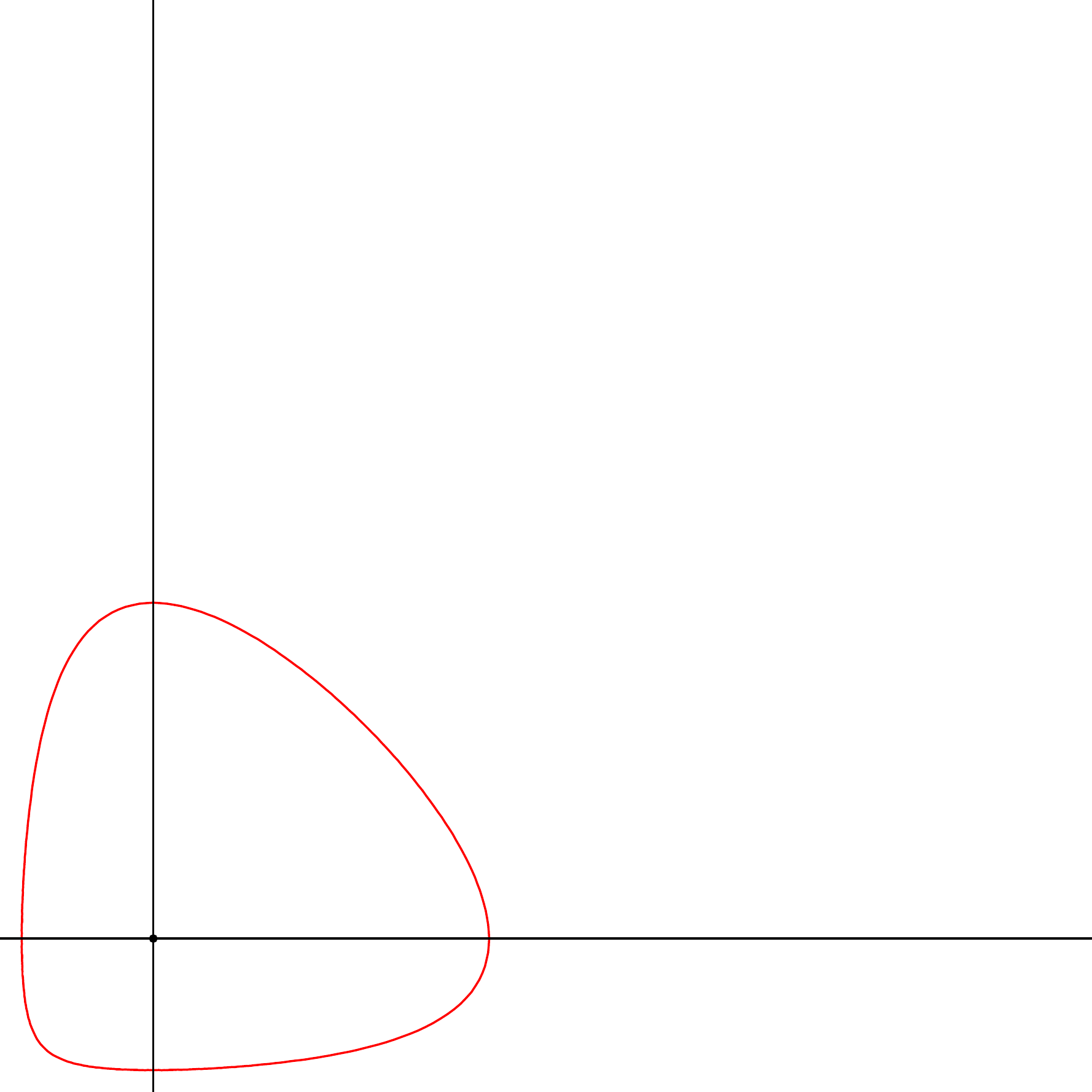}
\caption{Plotting the parametric equations on each quadrant of an extented Kullback-Leibler circle $\sphere_{D_\KL}((\frac{1}{2},\frac{1}{2}),\frac{1}{2})$ on the plane.
The first two plots show some inscribed isorectangles tangent at its four vertices to the  extented Kullback-Leibler circle.
\label{fig:KLsphere}}
\end{figure}

Figure~\ref{fig:KLsphereparam} displays the four parameterized curves of the extended  Kullback-Leibler circle.

\begin{figure}
\centering
\includegraphics[width=\textwidth]{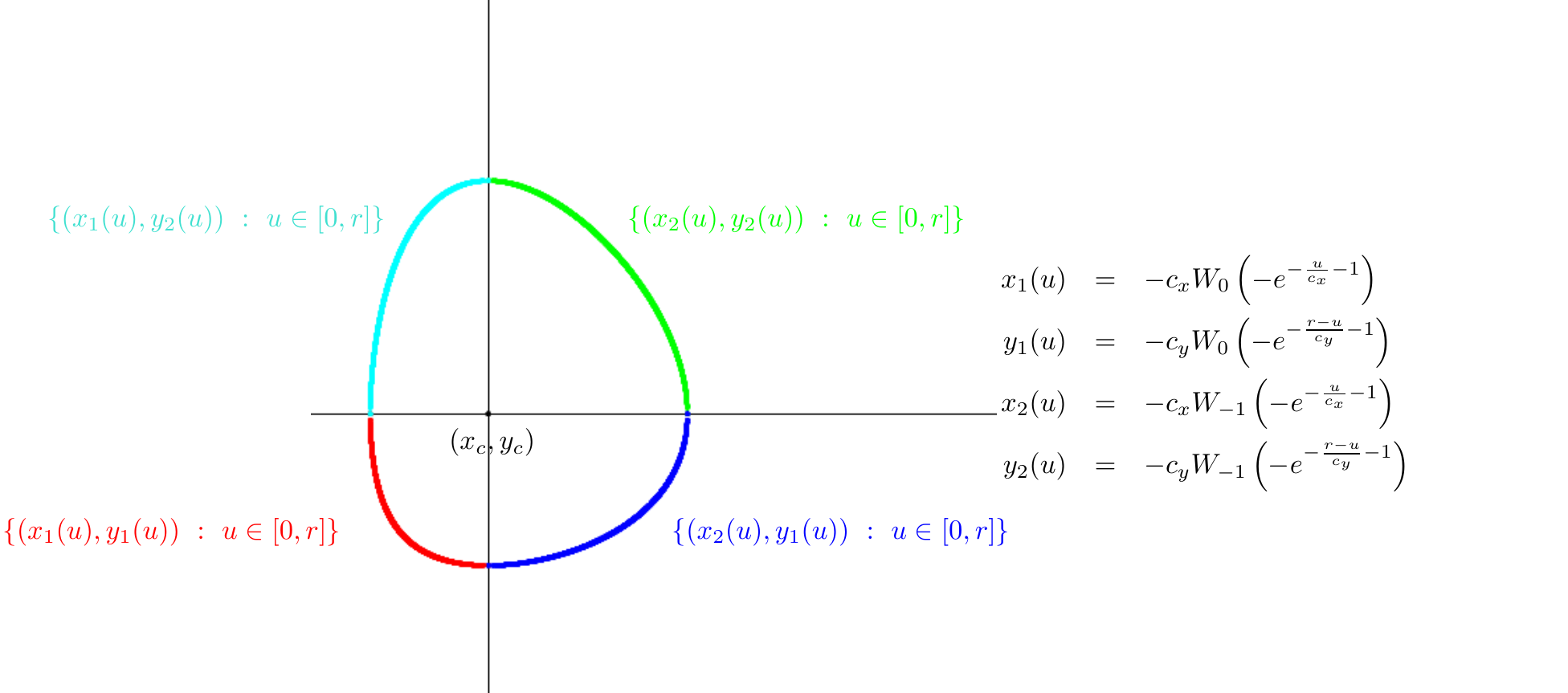}

\caption{Quadrant-based parameterized curves of the extended  Kullback-Leibler circle.
\label{fig:KLsphereparam}}
\end{figure}

\subsection{Parameterization of the Itakura-Saito spheres}

The scalar Itakura-Saito divergence  defined on $\bbR^+$ is
$$
D_\IS[\theta_c:\theta]= \frac{\theta_c}{\theta}-\log\frac{\theta_c}{\theta}-1.
$$
The Itakura-Saito divergence is a Bregman divergence for the generator $F(x)=-\log u$, and can thus be interpreted as a relative entropy.

For the Itakura-Saito divergence, we have to solve the following equation:
$$
\frac{\theta_c}{\theta}-\log\left(\frac{\theta_c}{\theta}\right)-1=u.
$$

Since $\frac{\theta_c}{\theta}=\log \exp(\frac{\theta_c}{\theta})$, we get:

\begin{eqnarray}
x_{D_\IS,\theta_c,u;-1} &=& -\frac{c_x}{W_{-1}(-e^{-u-1})},\\
x_{D_\IS,\theta_c,u;1} &=& -\frac{c_x}{W_{0}(-e^{-u-1})}.
\end{eqnarray}

\begin{figure}
\centering
\includegraphics[width=0.3\textwidth]{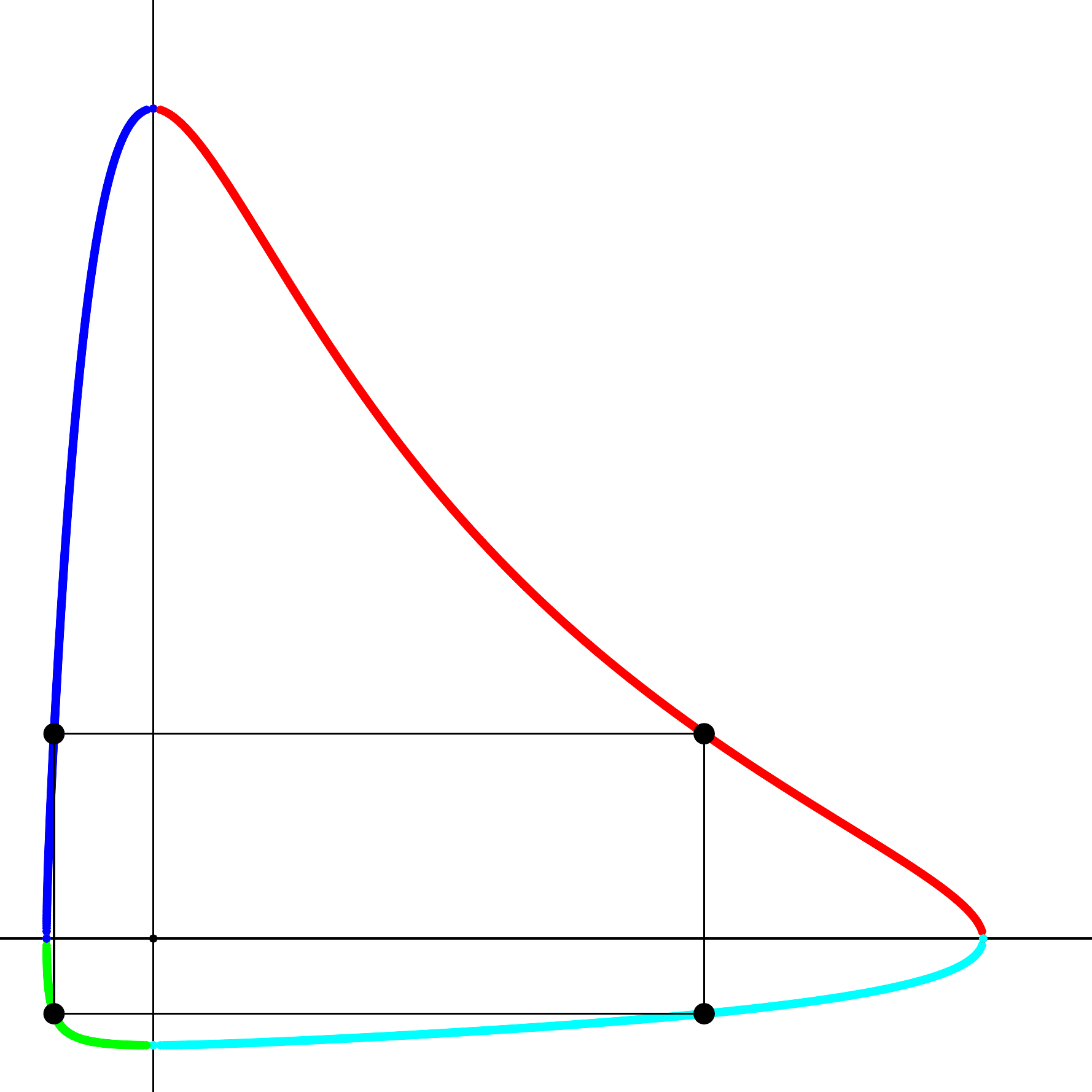}
\includegraphics[width=0.3\textwidth]{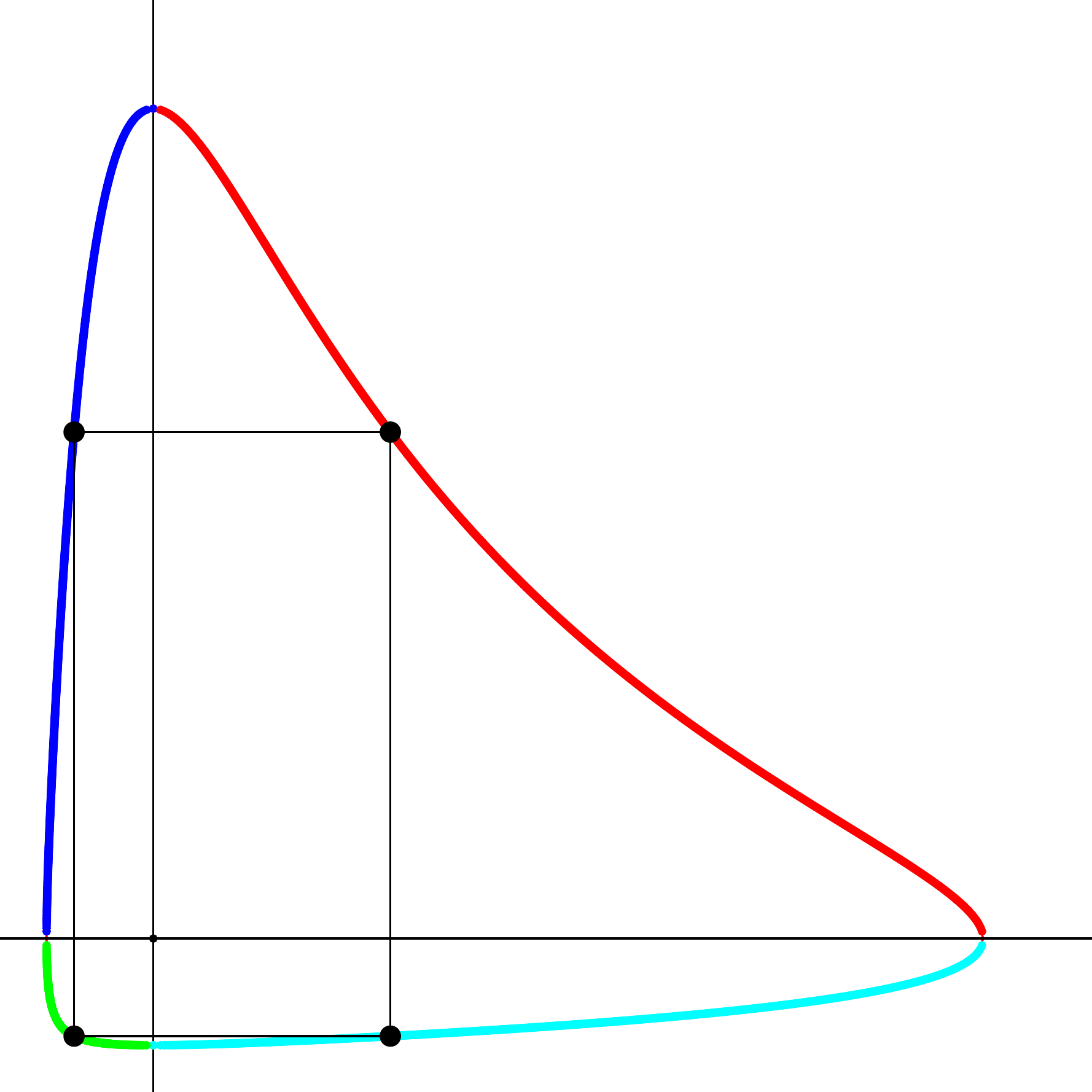}
\includegraphics[width=0.3\textwidth]{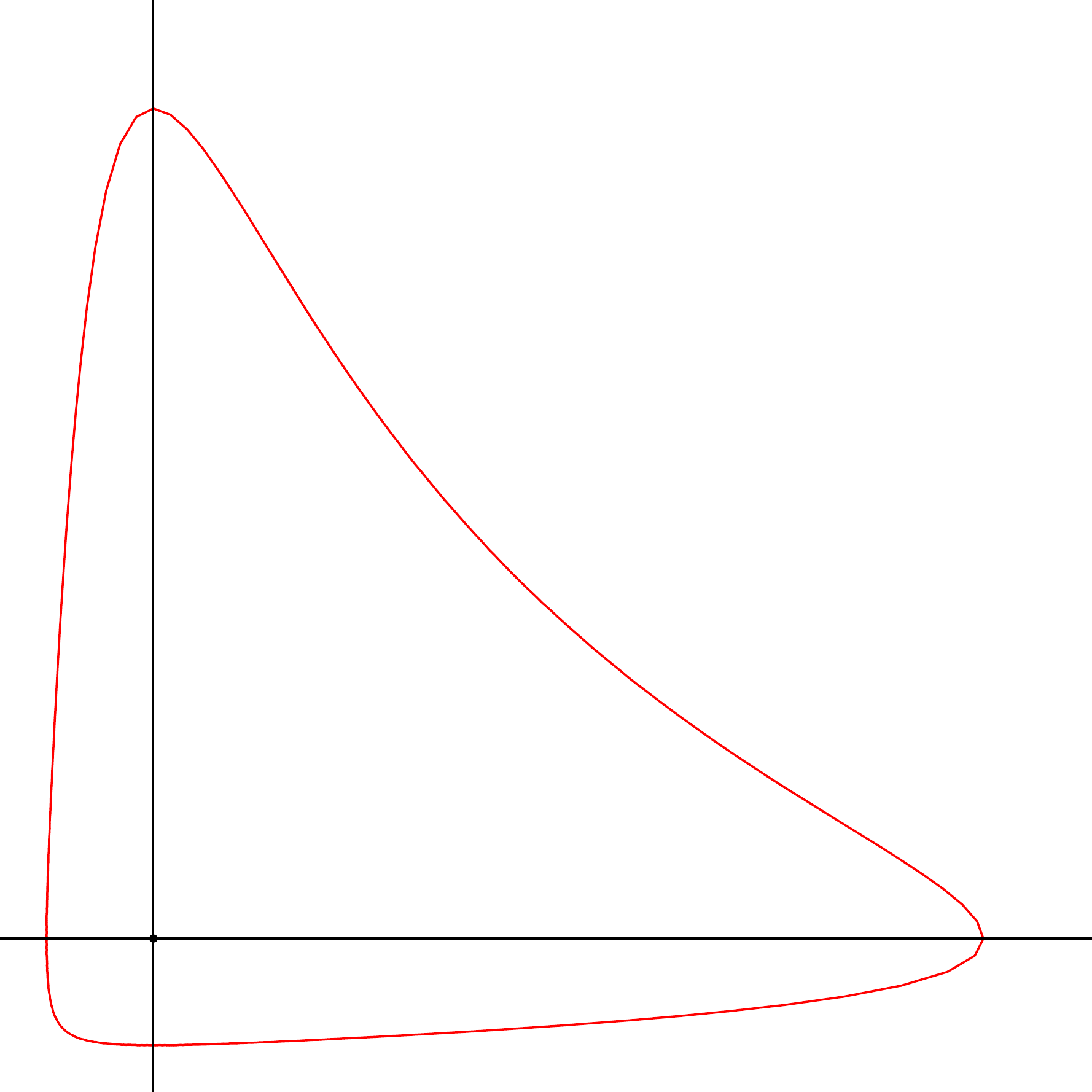}
\caption{Plotting the parametric equations on each quadrant of a Itakura-Saito circle $\sphere_{D_\IS}((\frac{1}{2},\frac{1}{2}),1)$ on the plane.
The first two plots show some inscribed isorectangles tangent at its four corners to the IS sphere.
\label{fig:ISsphere}}
\end{figure}

Figure~\ref{fig:ISsphere} displays a $D_\IS$ sphere in two dimensions by plotting the four quadrant parameterizations.
See the corresponding online video.\footnote{\url{https://www.youtube.com/watch?v=b4eJu1TT-to}}

Figure~\ref{fig:ISsphereparam} displays the four parameterized curves of the Itakura-Saito circle.

\begin{figure}
\centering
\includegraphics[width=\textwidth]{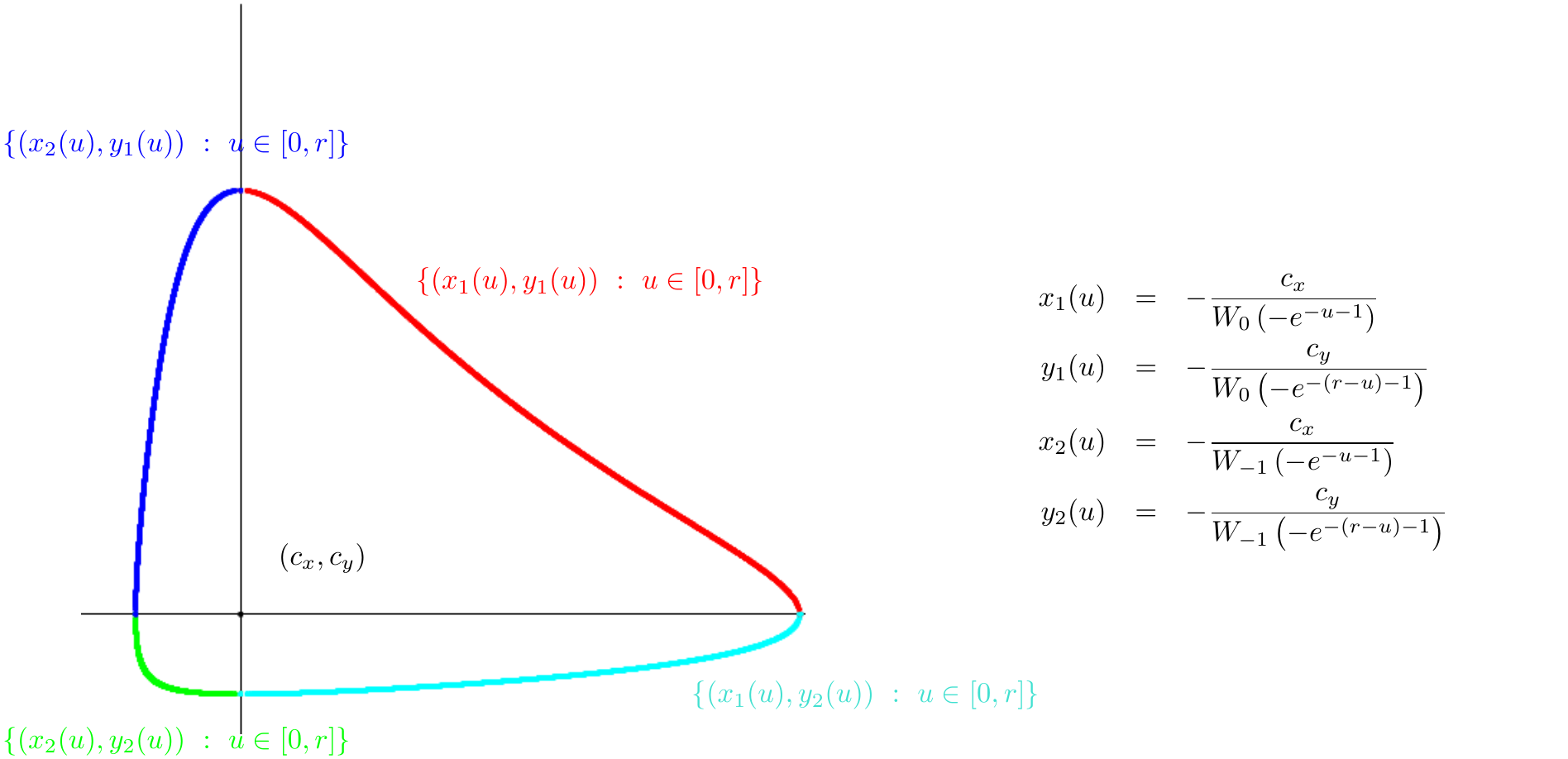}

\caption{Quadrant-based parameterized curves of the extended  Itakura-Saito circle.
\label{fig:ISsphereparam}}
\end{figure}

Figure~\ref{fig:circleparamequations} visualizes and compares the parametric equations  on a quadrant of the extended Kullback-Leibler circle and the Itakura-Saito circle.

\begin{figure}
\centering
\includegraphics[width=\textwidth]{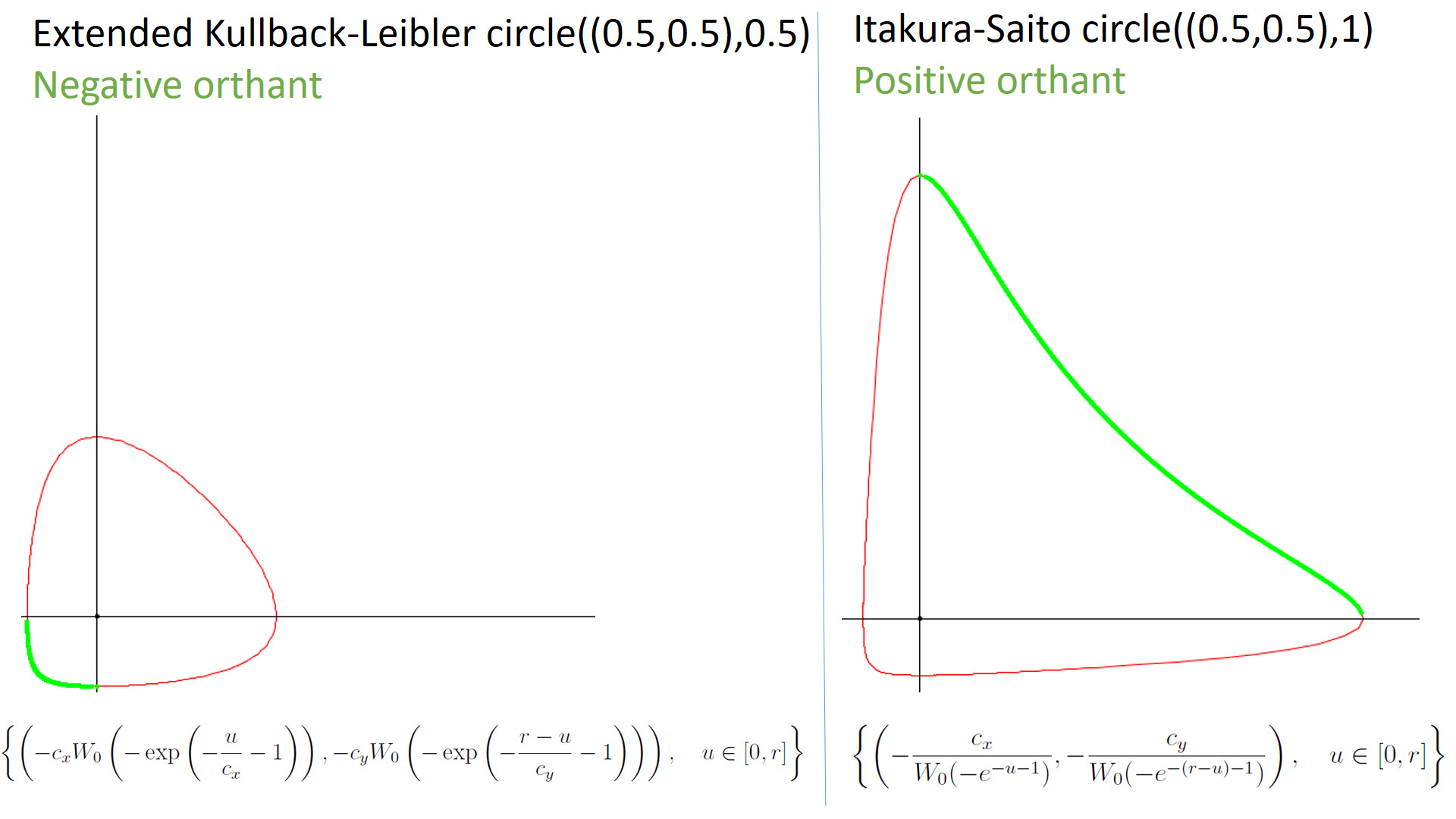}
\caption{Parametric equations of the extended Kullback-Leibler circle and the Itakura-Saito circle on a quadrant (green part).
\label{fig:circleparamequations}}
\end{figure}

\section{Geodesic $\nabla$-triangles with one, two, or three interior right angles}\label{sec:doubleright}

\subsection{Geodesic $\nabla$-triangles with one interior right angle}

To build a geodesic $\nabla$-triangle with a right angle, fix two points $p$ and $q$ (i.e., the first two triangle vertices), and consider the location of the third triangle vertex point $r$
such that  $\gamma_{qr}\perp_q\gamma_{qp}$ (i.e., $\dot\gamma_{qr}(0) \perp_q \dot\gamma_{qp}(0)$).
We end up with the  following linear equation which defines a $\theta$-flat $H_q^\theta$:
\begin{equation}
H_q^\theta: \theta_r^\top \nabla^2 F(\theta_q) \theta_{pq} = \theta_q^\top \nabla^2 F(\theta_q) \theta_{pq}.
\end{equation}

By restricting the solution $r\in H_q^\theta$ to the manifold $M$, we get:

\begin{proposition}
The locii of points  $r$ of a $\nabla$-triangle that form a  right angle at $q$ is $H^\theta_{q}\cap M$.
\end{proposition}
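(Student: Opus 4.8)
The plan is to simply verify that the orthogonality condition $\dot\gamma_{qr}(0) \perp_q \dot\gamma_{qp}(0)$ is exactly the linear equation $H_q^\theta$ displayed above, and then note that the third vertex $r$ must additionally lie on the manifold $M$. First I would recall from \S\ref{sec:dualpt} that the tangent vector $v_{qr} = \dot\gamma_{qr}(0)$ of the primal geodesic $\gamma_{qr}$ at $q$ has contravariant components $\theta_r - \theta_q =: \theta_{rq}$ in the natural basis, and likewise $v_{qp} = \dot\gamma_{qp}(0)$ has contravariant components $\theta_{pq} = \theta_p - \theta_q$. Using the matrix form of the inner product from \S\ref{sec:dualpt}, namely $g_q(u,v) = [u^i]^\top \nabla^2 F(\theta_q) [v^i]$, orthogonality $g_q(v_{qr}, v_{qp}) = 0$ becomes
\begin{equation}
(\theta_r - \theta_q)^\top \nabla^2 F(\theta_q)\, \theta_{pq} = 0,
\end{equation}
which rearranges to $\theta_r^\top \nabla^2 F(\theta_q)\, \theta_{pq} = \theta_q^\top \nabla^2 F(\theta_q)\, \theta_{pq}$, i.e., $\theta_r \in H_q^\theta$.

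Next I would observe that this is an affine equation in the unknown $\theta_r$: since $p$ and $q$ are fixed, the vector $w := \nabla^2 F(\theta_q)\, \theta_{pq} \in \bbR^D$ is a fixed vector, and $\nabla^2 F(\theta_q) \succ 0$ guarantees $w \neq 0$ whenever $p \neq q$ (so $\theta_{pq}\neq 0$). Hence $H_q^\theta = \{\theta \st \theta^\top w = \theta_q^\top w\}$ is a genuine hyperplane (a codimension-one $\theta$-flat) passing through $\theta_q$. By the definition of $\nabla$-autoparallel submanifolds recalled at the end of \S\ref{sec:pt}, the preimage of $H_q^\theta$ under the $\theta$-chart is a $\nabla$-autoparallel submanifold of $M$; intersecting with $M$ just means we only keep those $\theta_r$ that actually lie in $\mathrm{dom}(F) = \Theta$, since $M = \{p \st \theta(p) \in \mathrm{dom}(F)\}$.

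Putting the two directions together gives the equivalence: a point $r \in M$ makes a right angle at $q$ in the $\nabla$-triangle $T(p,q,r)$ (meaning $\gamma_{qr} \perp_q \gamma_{qp}$) if and only if $\theta(r) \in H_q^\theta$, i.e., if and only if $r \in H_q^\theta \cap M$ (identifying $H_q^\theta$ with its preimage in $M$ under the $\theta$-chart). This establishes the proposition. I do not anticipate a serious obstacle here; the only mild care-points are (i) being explicit that the relevant tangent vectors are read in \emph{contravariant} coordinates so that the Hessian metric matrix $\nabla^2 F(\theta_q)$ is the correct Gram matrix, and (ii) noting nondegeneracy of the hyperplane, which follows from positive-definiteness of $\nabla^2 F(\theta_q)$ and $p \neq q$. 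One could also remark that the same computation is a special case of the orthogonality characterization already derived inside the proof of Theorem~\ref{thm:pytha}, where $g_q(v_{qp}, v_{qr}^*)$ was expressed via the Crouzeix identity; here both geodesics are primal, so no Crouzeix identity is needed and the metric matrix appears directly.
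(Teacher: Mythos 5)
Your proposal is correct and follows exactly the paper's route: the paper likewise derives the linear equation $H_q^\theta:\ \theta_r^\top \nabla^2 F(\theta_q)\,\theta_{pq} = \theta_q^\top \nabla^2 F(\theta_q)\,\theta_{pq}$ from the orthogonality condition $\dot\gamma_{qr}(0)\perp_q\dot\gamma_{qp}(0)$ expressed via the Hessian metric in contravariant coordinates, and then restricts to $M$. Your added remarks on nondegeneracy of the hyperplane (from $\nabla^2 F(\theta_q)\succ 0$ and $p\neq q$) and on $H_q^\theta$ being a $\nabla$-autoparallel submanifold are consistent with, and slightly more explicit than, the paper's presentation.
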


\begin{figure}[hb]
\centering

\begin{tabular}{cc}
 \fbox{\includegraphics[width=0.3\textwidth]{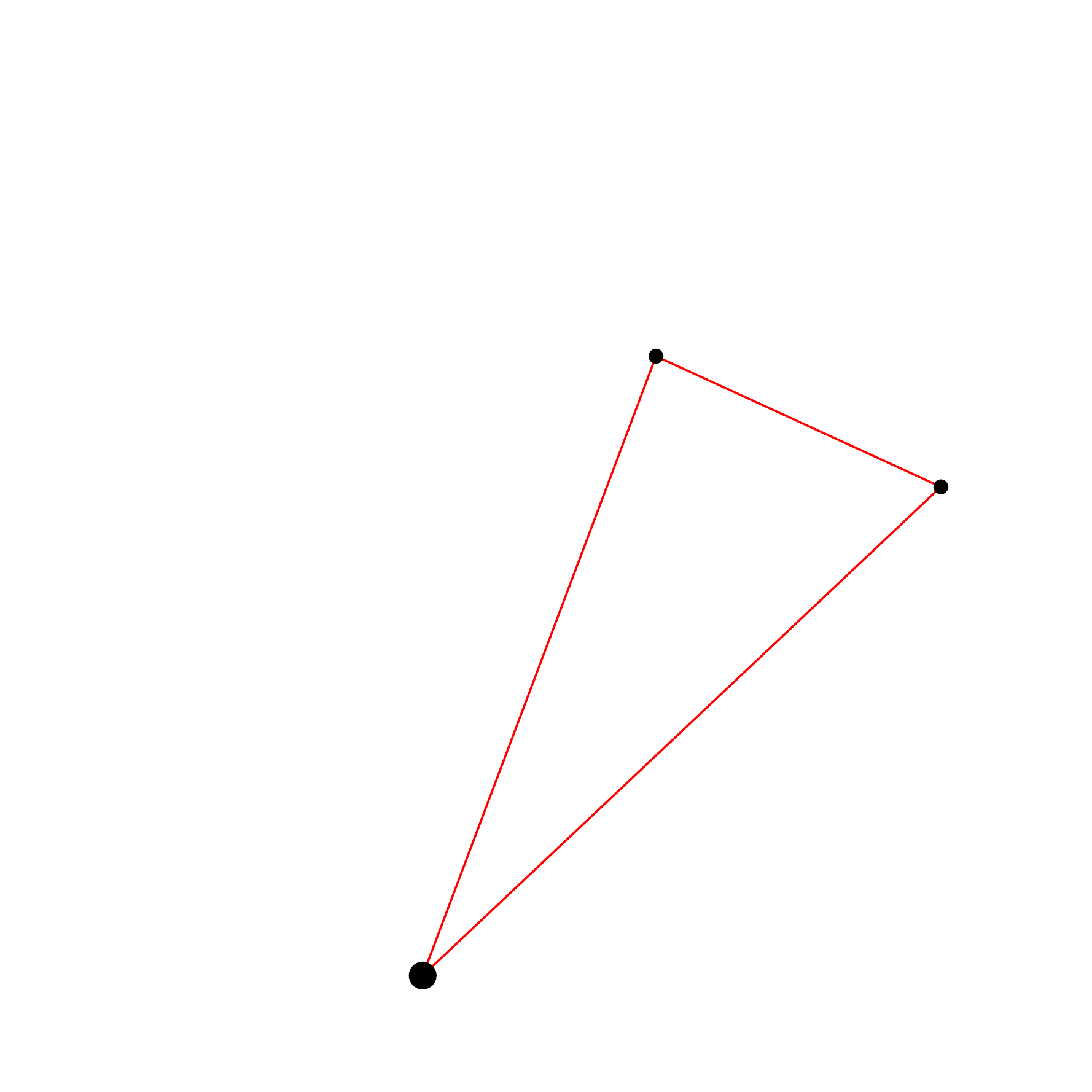}} &
 \fbox{\includegraphics[width=0.3\textwidth]{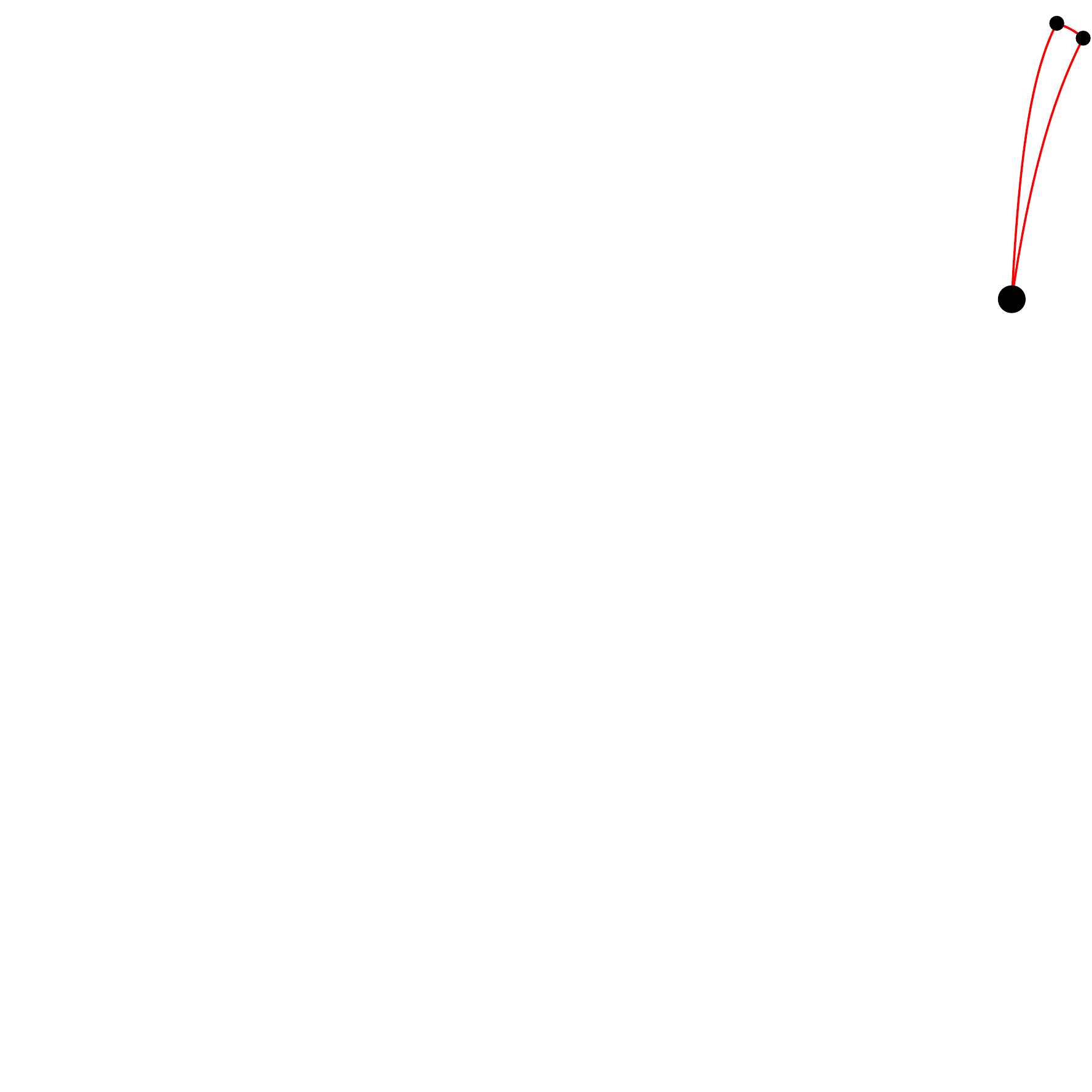}}\\
$\theta$ & $\eta$
\end{tabular}

\caption{An example of a geodesic $\nabla$-triangle with a right angle at $q$: (thicker vertex is $r$).
\label{fig:singleangleIS}}
\end{figure}

Figure~\ref{fig:singleangleIS} displays that a $\nabla$-triangle with one right angle $\alpha_q(r,p)=\frac{\pi}{2}$ in the Itakura-Saito manifold: 
$\theta(p)=(1.2885253880864789, 3.4136709176658546)$,\\
$\theta(q)=(4.9336774965526065, 1.656631440605195)$,\\
$\theta(r)=(3.5399193730133236, 4.6263857851449846)$.\\
The interior angles of the $\nabla$-triangle are 
$\alpha_p(q,r)=1.8276508176456936$, \\
$\alpha_q(p,r)=1.5707963267948966$\\
and
$\alpha_r(p,q)=1.1542328404967954$.\\
That total sum of the interior angles are $4.552679984937385$ radians (equivalent to about $260.8$ degrees).

\subsection{Geodesic $\nabla$-triangles with two interior right angles}

We now report the construction of two right angle $\nabla$-triangles.
That is, geodesic triangles with all primal geodesic edges (i.e., $\nabla$-triangle), with both the right angles $\alpha_p(q,r)=90^o$ and $\alpha_q(p,r)=90^o$.
Figure~\ref{fig:exdoubleright}  displays such a double right angle $\nabla$-triangle for the Burg negentropy generator $F_\IS(\theta)$ yielding the Itakura-Saito divergence.
Observe that because   the metric tensor field $g$ is not a scalar function of the Euclidean metric tensor $g_\Euc$, the Itakura-Saito Bregman geometry is not conformal (see Figure~\ref{fig:metric}).

\begin{figure}
\centering
\includegraphics[width=0.4\textwidth]{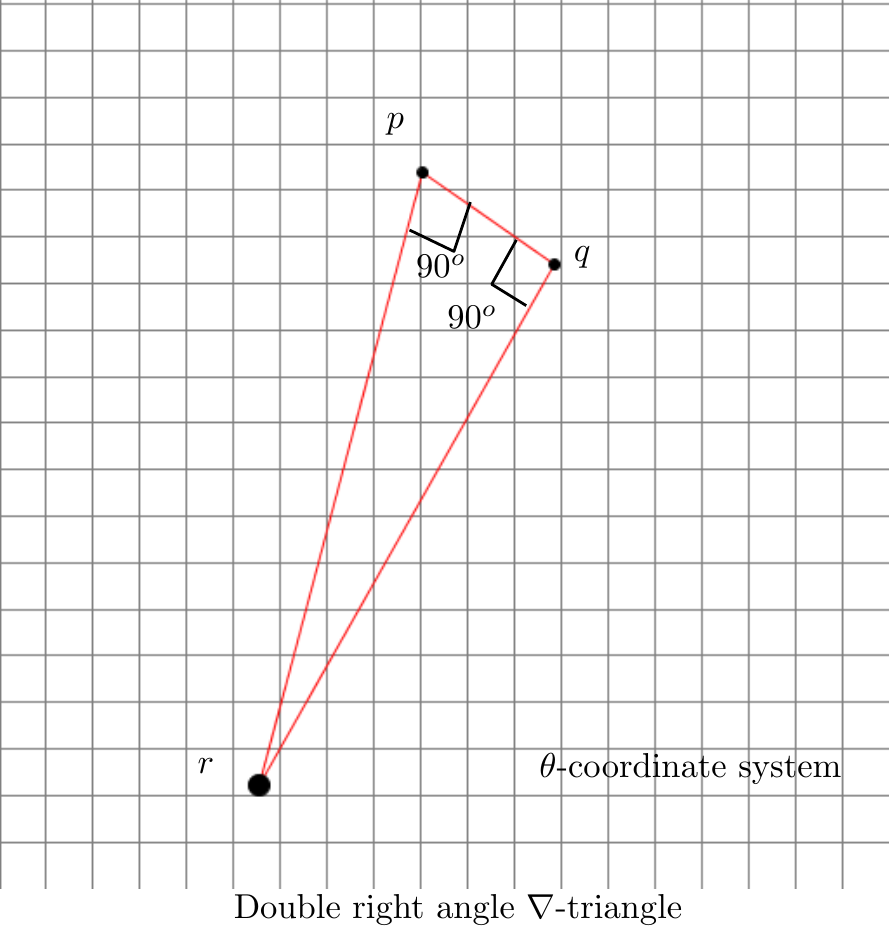}

\caption{An example of double right angle $\nabla$-triangle visualized in the primal $\theta$-coordinate system for the Itakura-Saito Bregman manifold. \label{fig:exdoubleright}}
 \end{figure}

Now, fix two points $p$ and $q$, and let us seek for the third point $r$ of the $\nabla$-triangle $\gamma_{pq}\gamma_{qr}\gamma_{rp}$ 
such that it holds both {\em simultaneously} that (i) $\gamma_{qr}\perp_q\gamma_{qp}$ (i.e., $\dot\gamma_{qr}(0)\perp_q \dot\gamma_{qp}(0)$)
and (ii) $\gamma_{pq}\perp_p\gamma_{pr}$ (i.e., $\dot\gamma_{pq}(0)\perp_p \dot\gamma_{pr}(0)$). 
We end up with the following system of equations:

\begin{equation}
\left\{
\begin{array}{lcl}
\theta_r^\top \nabla^2 F(\theta_q) \theta_{pq} &=& \theta_q^\top \nabla^2 F(\theta_q) \theta_{pq},\\
\theta_r^\top \nabla^2 F(\theta_p) \theta_{pq} &=& \theta_p^\top \nabla^2 F(\theta_p) \theta_{pq}.
\end{array}
\right.
\end{equation}

It is a linear system $A\theta=b$ 
with
\begin{equation}
A = [a_{ij}]  = \vectwo{\nabla^2 F(\theta_q) \theta_{pq}}{\nabla^2 F(\theta_p) \theta_{pq}}, \quad
b = [b_i] = \vectwo{\theta_q^\top \nabla^2 F(\theta_q) \theta_{pq}}{\theta_p^\top \nabla^2 F(\theta_p) \theta_{pq}}.
\end{equation}

When $\nabla^2F(\theta)=Q\succ 0$ for a fixed positive-definite matrix and $p\not =q$, the system does not admit any solution (i.e., case of squared Mahalanobis distances which generalize the squared Euclidean distance and can have at most one right angle).

Otherwise, this linear system solves {\em uniquely}  for asymmetric Bregman divergences~\cite{BVD-2010} using Cramer's rule as

\begin{equation}
\theta_r^1 = \frac{\dettwotwo{b_1}{a_{12}}{b_2}{a_{22}} }{|A|},\quad
\theta_r^2 = \frac{\dettwotwo{a_{11}}{b_{1}}{a_{21}}{b_{2}} }{|A|},
\end{equation} 
where $|\cdot|$ denotes the matrix determinant.

\begin{figure}[ht]
\centering

\begin{tabular}{cc}
 \fbox{\includegraphics[width=0.25\textwidth]{DoubleRight/Ex1-DoubleRightAngle-theta.pdf}} &
 \fbox{\includegraphics[width=0.25\textwidth]{DoubleRight/Ex1-DoubleRightAngle-eta.pdf}}\\
$\theta$ & $\eta$ \\
 \fbox{\includegraphics[width=0.25\textwidth]{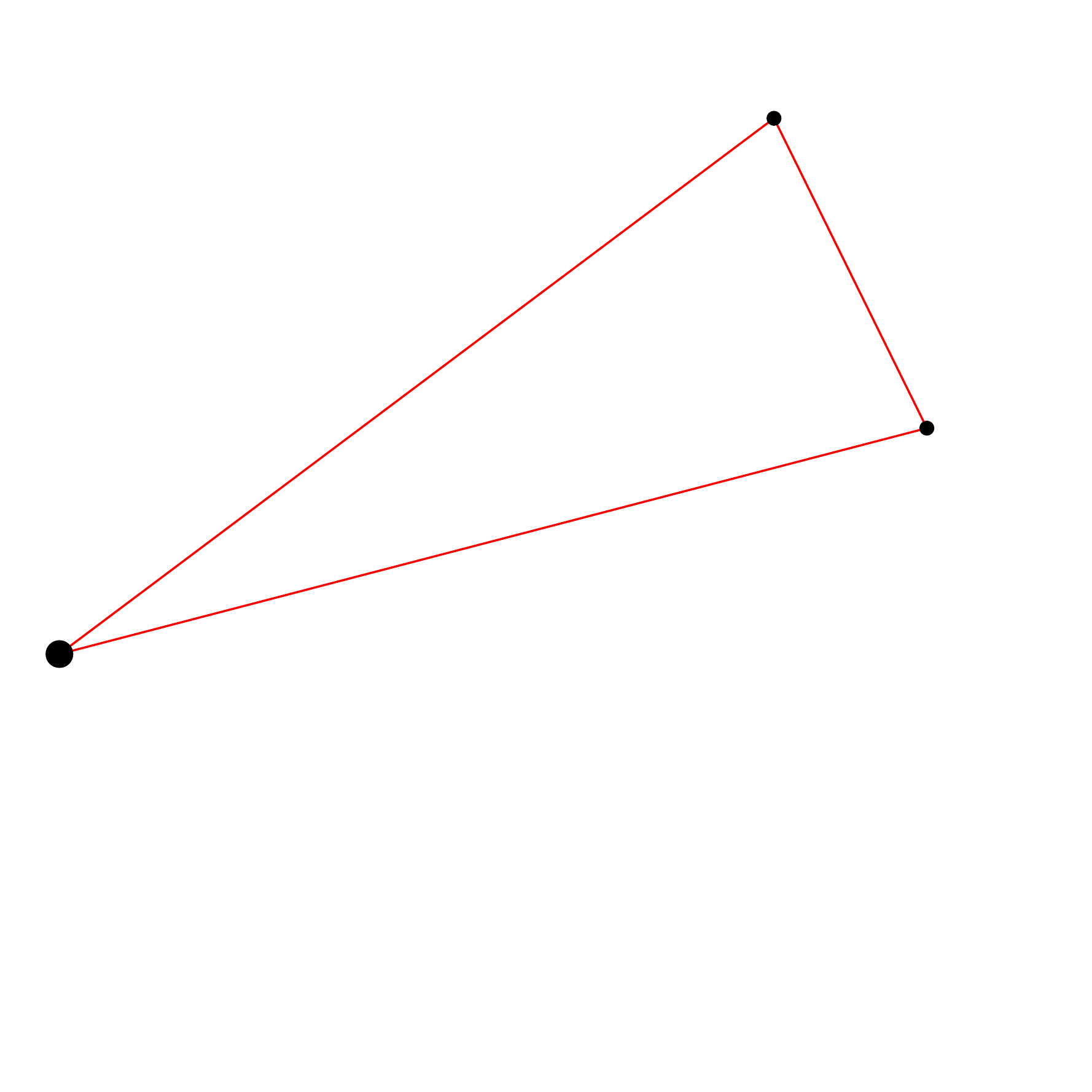}} & 
\fbox{\includegraphics[width=0.25\textwidth]{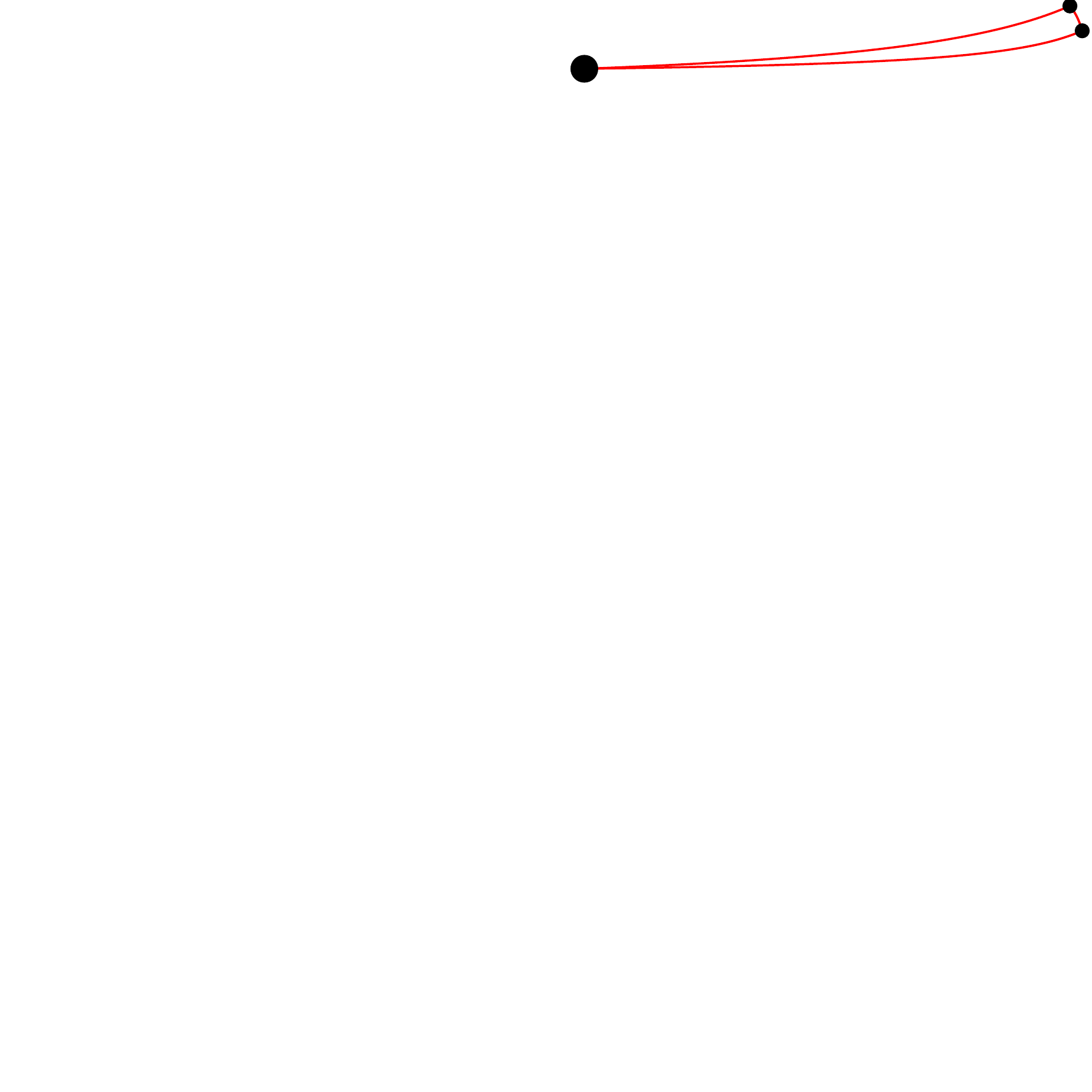}}\\
$\theta$ & $\eta$ 
\end{tabular}

\caption{Two examples of $\nabla$-triangles with two interior right angles (the opposite angles to the thicker vertex $r$).
\label{fig:doubleangleIS}}
\end{figure}

Similarly, we can build $\nabla^*$-triangles with two right angles by exchanging $F\Leftrightarrow F^*$.
Notice that having two right angles in a non-degenerate triangle makes it necessarily having an angle excess.

Figure~\ref{fig:doubleangleIS} displays the following two double right angle $\nabla$-triangles (up to machine numerical precision) obtained for the following settings:
\begin{itemize}
	\item 
$\theta(p)=(1.7372662352145616, 1.148396070619242)$,\\
$\theta(q)=(1.241571556333764, 1.3768479188317202)$,\\
$\theta(r)=(1.614143828700357, 1.8451358255393877)$,\\
$\alpha_p(q,r)=90.00000000000001$,\\
$\alpha_q(p,r)=90.0$,\\
$\alpha_r(p,q)=12.82764159141668$.

	\item
$\theta(p)=(1.7128340504770114, 1.2510418358297621)$,\\
$\theta(q)=(1.446857135939727, 1.7930125176801988)$,\\
$\theta(r)=(1.1177842396781703, 1.5922051785236535)$,\\
$\alpha_p(q,r)=90.00000000000001$,\\
$\alpha_q(p,r)=90.00000000000001$,\\
$\alpha_r(p,q)=6.595093466701274$.
	
\end{itemize}

Define the following two $\theta$-flat submanifolds:
\begin{eqnarray}
H^\theta_{q} &:& \theta_r^\top \nabla^2 F(\theta_q) \theta_{pq} - \theta_q^\top \nabla^2 F(\theta_q) \theta_{pq} =0,\\
H^\theta_{p} &:& \theta_r^\top \nabla^2 F(\theta_p) \theta_{pq}- \theta_p^\top \nabla^2 F(\theta_p) \theta_{pq}=0.
\end{eqnarray}

We end up with the following proposition:

\begin{proposition}
In a non-Mahalanobis Bregman manifold $M$, the locii of points  $r$ that form a double right angle with the geodesic arc $\gamma_{pq}$ is $H^\theta_{q}\cap H^\theta_{q}$.
\end{proposition}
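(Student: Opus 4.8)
The plan is to reduce the double-right-angle condition to the two affine equations defining $H^\theta_{q}$ and $H^\theta_{p}$, re-using the algebra already carried out in the proof of Theorem~\ref{thm:pytha}. Recall that for a primal geodesic $\gamma_{ab}$, the tangent vector $\dot\gamma_{ab}(0)\in T_a$ has contravariant components $\theta(b)-\theta(a)$ in the canonical basis, and that the metric at $a$ acts on contravariant components by $g_a(u,v)=[u^i]^\top\nabla^2F(\theta(a))[v^i]$. Writing $\theta_p,\theta_q,\theta_r$ for the $\theta$-coordinates of the three vertices and $\theta_{pq}:=\theta_p-\theta_q$, the $\nabla$-triangle $\gamma_{pq}\gamma_{qr}\gamma_{rp}$ has a right angle at $q$ exactly when $g_q(\dot\gamma_{qp}(0),\dot\gamma_{qr}(0))=0$, i.e.
\[
(\theta_r-\theta_q)^\top\nabla^2F(\theta_q)\,\theta_{pq}=0 .
\]
Read as an equation in the unknown $\theta_r$ this is affine, hence cuts out a $\theta$-flat; rearranging gives precisely the defining equation of $H^\theta_{q}$.

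First I would record the symmetric computation at $p$: the triangle has a right angle at $p$ iff $g_p(\dot\gamma_{pq}(0),\dot\gamma_{pr}(0))=0$, that is $(\theta_r-\theta_p)^\top\nabla^2F(\theta_p)(\theta_q-\theta_p)=0$; since $\theta_q-\theta_p=-\theta_{pq}$, multiplying through by $-1$ yields $\theta_r^\top\nabla^2F(\theta_p)\theta_{pq}=\theta_p^\top\nabla^2F(\theta_p)\theta_{pq}$, the defining equation of $H^\theta_{p}$. Combining the two equivalences, a point $r\in M$ is the apex of a $\nabla$-triangle on base $\gamma_{pq}$ with right angles at both $p$ and $q$ if and only if $\theta_r$ lies on both $\theta$-flats, i.e. $r\in H^\theta_{q}\cap H^\theta_{p}$ (intersected with $M$, that is, subject to $\theta_r\in\mathrm{dom}(F)$). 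This is the asserted characterization (the statement's ``$H^\theta_q\cap H^\theta_q$'' is a typo for $H^\theta_q\cap H^\theta_p$).

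Finally I would address the role of the \emph{non-Mahalanobis} hypothesis, which is what makes the locus interesting rather than empty. When $\nabla^2F\equiv Q$ is constant, the two flats $H^\theta_{q}$ and $H^\theta_{p}$ share the normal direction $Q\theta_{pq}$ but have distinct offsets, differing by $(\theta_p-\theta_q)^\top Q\theta_{pq}=\theta_{pq}^\top Q\theta_{pq}>0$ for $p\neq q$; so they are parallel and disjoint, matching the earlier observation that the Mahalanobis system has no solution. Outside the Mahalanobis case one can choose $p\neq q$ with $\nabla^2F(\theta_p)\neq\nabla^2F(\theta_q)$, so that the two linear functionals $\theta_r\mapsto\theta_r^\top\nabla^2F(\theta_q)\theta_{pq}$ and $\theta_r\mapsto\theta_r^\top\nabla^2F(\theta_p)\theta_{pq}$ are generically independent; then $H^\theta_{q}\cap H^\theta_{p}$ is a genuine $(D-2)$-dimensional $\theta$-flat, and in the planar case $D=2$ it reduces to the single point produced by Cramer's rule above. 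The only real subtlety is the sign bookkeeping ($\theta_{pq}$ versus $\theta_{qp}$) when matching the two orthogonality relations to the stated flat equations; the geometric content is already contained in the tangent-vector/metric identity used for the generalized Pythagorean theorem.
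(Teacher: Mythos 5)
Your proposal is correct and follows essentially the same route as the paper: both reduce the two right-angle conditions at $p$ and $q$ to the pair of affine equations in $\theta_r$ defining $H^\theta_p$ and $H^\theta_q$ via the identity $g_a(u,v)=[u^i]^\top\nabla^2F(\theta(a))[v^i]$ applied to the primal tangent vectors, and both note that in the Mahalanobis case the resulting system (two parallel flats with distinct offsets) is infeasible. Your remarks on the typo ($H^\theta_q\cap H^\theta_q$ for $H^\theta_q\cap H^\theta_p$), on restricting to $M$, and on the sign bookkeeping are accurate but do not change the argument.
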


Remember that the intersection of $\theta$-flats is a $\theta$-flat.

Notice that if instead of constraining two interior angles of a $\nabla$-triangle to have right angles, we ask for two {\em prescribed angles} $\alpha_q(p,r)$ and $\alpha_p(q,r)$, we would end up with the following system of non-linear equations to solve:

\begin{equation}
\left\{
\begin{array}{lcl}
\theta_{rq}^\top \nabla^2 F(\theta_q) \theta_{pq} &=&  \|\theta_{rq}\|_{\nabla^2 F(\theta_q)}  \|\theta_{pq}\|_{\nabla^2 F(\theta_q)} \cos\alpha_q(p,r),\\
\theta_{rq}^\top \nabla^2 F(\theta_p) \theta_{pq} &=& \|\theta_{rq}|_{\nabla^2 F(\theta_p)}  \|\theta_{pq}|_{\nabla^2 F(\theta_p)} \cos\alpha_p(q,r).
\end{array}
\right.
\end{equation}

Solving this non-linear system for $r$ gives a solution whenever the system is feasible.

\subsection{Geodesic $\nabla$-triangles with three interior right angles}

Given a triple of points  $(p,q,r)$, we may consider two fundamental types of triangles (up to duality and point permutations):
a $\nabla$-triangle (type ppp) or a triangle of type $pdp$.

Let us consider a geodesic $\nabla$-triangle $\gamma_{pq}\gamma_{qr}\gamma_{rp}$ so that it holds simultaneously:
\begin{equation}
\gamma_{pq} \perp_p\gamma_{pr}, \quad
\gamma_{qr} \perp_q \gamma_{qp},\quad
\gamma_{rp} \perp_r \gamma_{rq}.
\end{equation}

Writing the above constraints in the primal $\theta$-coordinate system, we end up with the following system to solve:
\begin{equation}
\left\{\begin{array}{lcl}
\theta_{qp}^\top \nabla^2 F(\theta_p) \theta_{rp} &=& 0\\
\theta_{rq}^\top \nabla^2 F(\theta_q) \theta_{pq} &=& 0\\
\theta_{pr}^\top \nabla^2 F(\theta_r) \theta_{qr} &=& 0. 
\end{array}
\right.
\end{equation}

Because of the Hessian matrices, this yields in general a non-linear system of equations to solve.
The set of feasible solutions define the $\nabla$-triangles with three right angles.
In dimension $D$, we have 3D unknown (the $D$ $\theta$-coordinates of the points $p$, $q$, and $r$) for $3$ constraints.
That is, the system is underconstrained.

For the 2D Itakura-Saito manifold, the Hessian matrix at $p$ is $\diag\left(\frac{1}{\sqr(\theta_x)}, \frac{1}{\sqr(\theta_y)} \right)$.
Thus we get the following system to solve:

\begin{eqnarray}
(\theta_q^x-\theta_p^x) \frac{1}{\sqr(\theta_p^x)} (\theta_r^x-\theta_p^x) +
 (\theta_q^y-\theta_p^y) \frac{1}{\sqr(\theta_p^y)} (\theta_r^y-\theta_p^y)  &=& 0,\label{eq:seq1}\\
(\theta_r^x-\theta_q^x) \frac{1}{\sqr(\theta_q^x)} (\theta_p^x-\theta_q^x) +
 (\theta_r^y-\theta_q^y) \frac{1}{\sqr(\theta_q^y)} (\theta_p^y-\theta_q^y)  &=& 0,\label{eq:seq2}\\
 (\theta_p^x-\theta_r^x) \frac{1}{\sqr(\theta_r^x)} (\theta_q^x-\theta_r^x) +
 (\theta_p^y-\theta_r^y) \frac{1}{\sqr(\theta_r^y)} (\theta_q^y-\theta_r^y)  &=& 0.\label{eq:seq3}\\ 
\end{eqnarray}

Multiplying Eq.~\ref{eq:seq1} by $\sqr(\theta_p^x)\sqr(\theta_p^y)$, 
Eq.~\ref{eq:seq2} by $\sqr(\theta_q^x)\sqr(\theta_q^y)$, and 
 Eq.~\ref{eq:seq3} by $\sqr(\theta_r^x)\sqr(\theta_r^y)$, we get  a system of polynomial equations to solve~\cite{Sturmfels-2002}:
\begin{eqnarray}
(\theta_q^x-\theta_p^x) \sqr(\theta_p^y) (\theta_r^x-\theta_p^x) +
 (\theta_q^y-\theta_p^y) {\sqr(\theta_p^x)} (\theta_r^y-\theta_p^y)  &=& 0,\\
(\theta_r^x-\theta_q^x) {\sqr(\theta_q^y)} (\theta_p^x-\theta_q^x) +
 (\theta_r^y-\theta_q^y) {\sqr(\theta_q^x)} (\theta_p^y-\theta_q^y)  &=& 0,\\
 (\theta_p^x-\theta_r^x) {\sqr(\theta_r^y)} (\theta_q^x-\theta_r^x) +
 (\theta_p^y-\theta_r^y){\sqr(\theta_r^x)} (\theta_q^y-\theta_r^y)  &=& 0,
\end{eqnarray}
where $\sqr(x)=x^2$.
The system of polynomial equations has $3$ equations with $6$ positive variables defining $3$ pairs of distinct points $(\theta_p^x,\theta_p^y)$,
 $(\theta_q^x,\theta_q^y)$, and $(\theta_r^x,\theta_r^y)$.
 
When considering the extended Kullback-Leibler manifold, since the Hessian matrix at $p$ is $\diag\left(\frac{1}{\theta_p^x}, \frac{1}{\theta_p^y}\right)$, we get the following system of polynomial equations (after multiplying the first equation by $\theta_p^x\theta_p^y$,
 the second equation by $\theta_q^x\theta_q^y$ and the third equation by $\theta_r^x\theta_r^y$):
\begin{equation}
\left\{
\begin{array}{lcl}
(\theta_q^x-\theta_p^x) \theta_p^y (\theta_r^x-\theta_p^x) +
 (\theta_q^y-\theta_p^y) \theta_p^x (\theta_r^y-\theta_p^y)  &=& 0,\\
(\theta_r^x-\theta_q^x) \theta_q^y (\theta_p^x-\theta_q^x) +
 (\theta_r^y-\theta_q^y) \theta_q^x (\theta_p^y-\theta_q^y)  &=& 0,\\
 (\theta_p^x-\theta_r^x) \theta_r^y (\theta_q^x-\theta_r^x) +
 (\theta_p^y-\theta_r^y)\theta_r^x (\theta_q^y-\theta_r^y)  &=& 0.
\end{array}
\right.
\end{equation}

Fixing three variables, we get a cubic system of three equations in three unknowns. 

We used Wolfram alpha\texttrademark{} to check for potential real solution(s) of the system:
{\small
\begin{verbatim}
a>0, b>0, c>a, x>0, y>c,z>0,   
(c-a)*b*(y-a)+(x-b)*a*(z-b)=0,
(y-c)*x*(a-c)+(z-x)*c*(b-x)=0,
(a-y)*z*(c-y)+(b-z)*y*(x-z)=0
\end{verbatim}
}
The system does not admit solution in the positive orthant.
We also checked the feasibility of the system for the Itakura-Saito manifold:
{\small
\begin{verbatim}
a>0, b>0, c>a, x>0, y>c,z>0, 
(c-a)*b*b*(y-a)+(x-b)*a*a*(z-b)=0,
(y-c)*x*x*(a-c)+(z-x)*c*c*(b-x)=0,
(a-y)*z*z*(c-y)+(b-z)*y*y*(x-z)=0.
\end{verbatim}
}
The system does not admit solution in the positive orthant.

Thus it is an ongoing task to report an example of such a $\nabla$-triangle with three right angles for an asymmetric Bregman divergence, or to prove that such a triangle can never exist.

\section{Simultaneous satisfying the dual Pythagorean theorems}\label{sec:simultaneous}

Fix two points $p$ and $q$ of the Bregman manifold $M$. 
We seek for the locii of the third point $r\in M$ such that we have both
 $\gamma_{pq} \perp_q \gamma_{qr}^*$ and $\gamma_{pq}^* \perp_q \gamma_{qr}$.
That is, we need to solve the following system of equations:
\begin{equation}
\left\{
\begin{array}{lll}
(\eta(p)-\eta(q))^\top (\theta(r)-\theta(q))&=& 0,\cr
(\eta(p)-\eta(q))^\top (\theta(r)-\theta(q))&=& 0.
\end{array}
\right.
\end{equation}

Notice that finding such triples of points allow one to construct dual geodesic right-angle triangles.
When $F(\theta)=\frac{1}{2}\theta^\top Q\theta$ for $Q\succ 0$ a positive-definite matrix, we have $B_F$ that is a squared Mahalanobis distances, and  the primal and dual geodesics coincide. Therefore any right-angle triangle is a dually right-angle solution to the problem in Mahalanobis manifolds.
Thus we shall consider {\em asymmetric} Bregman divergences in the remainder since the only symmetric Bregman divergences are  squared Mahalanobis distances~\cite{BVD-2010}.

\subsection{Simultaneous dual orthogonality: A geometric interpretation}
The constraint $\gamma_{pq} \perp_q \gamma_{qr}^*$ is equivalent to
$(\theta(p)-\theta(q))^\top (\eta(r)-\eta(q))=0$.
Let $\theta_{pq}=\theta(p)-\theta(q)$, $\eta_r=\eta(r)$ and $\eta_q=\eta(q)$.
Then we have the following affine equation in $\eta_r$:
\begin{equation}
H^\eta(p,q): \theta_{pq}^\top \eta_r-\theta_{pq}^\top\eta_q=0.
\end{equation}
The locii $r$ of points satisfying the above equation is a $(D-1)$-dimensional $\nabla^*$-autoparallel submanifold $H^\eta(p,q)$
(i.e., a $(D-1)$-dimensional {\em $\eta$-flat}~\cite{IG-2016} or loosely speaking a ``$\nabla^*$-hyperplane'').

Similarly, the constraint $\gamma_{pq}^* \perp_q \gamma_{qr}$ is equivalent to
$(\eta(p)-\eta(q))^\top (\theta(r)-\theta(q))=0$.
Let $\eta_{pq}=\eta(p)-\eta(q)$, $\theta_r=\theta(r)$ and $\theta_q=\theta(q)$.
Then we have the following affine equation in $\theta_r$:
\begin{equation}
H^\theta(p,q): \eta_{pq}^\top \theta_r-\eta_{pq}^\top\theta_q=0.
\end{equation}
The locii $r$ of points satisfying the above equation is a $(D-1)$-dimensional $\nabla$-autoparallel submanifold $H^\theta(p,q)$
(i.e., a $(D-1)$-dimensional {\em $\theta$-flat}~\cite{IG-2016} or $\nabla$-hyperplane).

Notice that point $q$ ought to belong to both $H^\eta(p,q)$ and $H^\theta(p,q)$.
Thus to simultaneously satisfy the two dual geodesic orthogonality constraints, the point $r$ should belong to the intersection of a $\eta$-flat with a $\theta$-flat:
\begin{equation}
S(p,q)=H^\eta(p,q)\cap H^\theta(p,q).
\end{equation}

We should make sure  $S(p,q)$   belongs to the manifold $M$ when solving the equations using either the $\theta$- or $\eta$-coordinate system.

\begin{proposition}
The locii of points $r$ such that $\gamma_{pq} \perp_q \gamma_{qr}^*$ and $\gamma_{pq}^* \perp_q \gamma_{qr}$ is
the intersection of a $\theta$-flat with a $\eta$-flat restricted to the manifold: $S(p,q)=H^\eta(p,q)\cap H^\theta(p,q)$.
\end{proposition}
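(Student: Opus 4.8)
The plan is to assemble the proposition directly from the two characterizations already established in the surrounding text, so the proof is essentially a bookkeeping argument rather than a new computation. First I would recall the dual Pythagorean theorem (Theorem~\ref{thm:pytha} and its dual): the orthogonality $\gamma_{pq}\perp_q\gamma_{qr}^*$ holds if and only if $(\theta(p)-\theta(q))^\top(\eta(r)-\eta(q))=0$, and the orthogonality $\gamma_{pq}^*\perp_q\gamma_{qr}$ holds if and only if $(\eta(p)-\eta(q))^\top(\theta(r)-\theta(q))=0$. Each of these is a single affine equation in one of the two global coordinate charts; the first is affine in $\eta_r=\eta(r)$ and hence cuts out the $(D-1)$-dimensional $\eta$-flat $H^\eta(p,q)$, while the second is affine in $\theta_r=\theta(r)$ and cuts out the $(D-1)$-dimensional $\theta$-flat $H^\theta(p,q)$. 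These facts are exactly the content of the two displayed equations $H^\eta(p,q)$ and $H^\theta(p,q)$ above, so I would simply invoke them.

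Then the argument is: a point $r\in M$ satisfies \emph{both} dual orthogonality conditions at $q$ if and only if its $\eta$-coordinate satisfies the defining equation of $H^\eta(p,q)$ \emph{and} its $\theta$-coordinate satisfies the defining equation of $H^\theta(p,q)$, i.e. if and only if $r\in H^\eta(p,q)\cap H^\theta(p,q)$. Since both flats were defined as subsets of $M$ (we required the solutions in $\theta$- or $\eta$-coordinates to correspond to genuine points of the manifold), the intersection is automatically restricted to $M$, which gives precisely $S(p,q)=H^\eta(p,q)\cap H^\theta(p,q)$. I would also note for completeness that $q$ itself lies in both flats (plugging $r=q$ makes both equations read $0=0$), so $S(p,q)$ is nonempty; and that whenever the $\theta$-flat and $\eta$-flat are in general position their intersection is a $(D-2)$-dimensional submanifold which is generically neither a $\theta$-flat nor an $\eta$-flat, consistent with the remark recalled just before the proposition.

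Honestly, there is no substantive obstacle here: the real work was already done in proving the (dual) Pythagorean theorem and in translating each orthogonality constraint into an affine equation in the appropriate chart. The only thing to be careful about is the logical ``if and only if'' in both directions and the domain issue --- making sure that a solution of the linear equations genuinely lands in $M$ (i.e. that $\theta(r)\in\mathrm{dom}(F)$, equivalently $\eta(r)\in\mathrm{dom}(F^*)$), which is why the statement and proof carry the phrase ``restricted to the manifold.'' So the proof I would write is short: cite the dual Pythagorean equivalences, observe that each orthogonality condition is membership in one of the two flats, conclude that simultaneous satisfaction is membership in the intersection, and remark that $q\in S(p,q)$ so the locus is always at least a point.

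\begin{proof}
By the generalized Pythagorean theorem (Theorem~\ref{thm:pytha}), the primal geodesic $\gamma_{pq}$ is orthogonal to the dual geodesic $\gamma_{qr}^*$ at $q$ if and only if
\[
(\theta(p)-\theta(q))^\top(\eta(r)-\eta(q))=0,
\]
which, writing $\theta_{pq}=\theta(p)-\theta(q)$, $\eta_r=\eta(r)$ and $\eta_q=\eta(q)$, is exactly the affine equation $\theta_{pq}^\top\eta_r-\theta_{pq}^\top\eta_q=0$ defining the $\eta$-flat $H^\eta(p,q)$. Hence $\gamma_{pq}\perp_q\gamma_{qr}^*$ if and only if $r\in H^\eta(p,q)$.

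Dually, by the dual Pythagorean theorem, the dual geodesic $\gamma_{pq}^*$ is orthogonal to the primal geodesic $\gamma_{qr}$ at $q$ if and only if
\[
(\eta(p)-\eta(q))^\top(\theta(r)-\theta(q))=0,
\]
which, writing $\eta_{pq}=\eta(p)-\eta(q)$, $\theta_r=\theta(r)$ and $\theta_q=\theta(q)$, is exactly the affine equation $\eta_{pq}^\top\theta_r-\eta_{pq}^\top\theta_q=0$ defining the $\theta$-flat $H^\theta(p,q)$. Hence $\gamma_{pq}^*\perp_q\gamma_{qr}$ if and only if $r\in H^\theta(p,q)$.

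A point $r\in M$ therefore satisfies both orthogonality conditions at $q$ simultaneously if and only if $r\in H^\eta(p,q)$ and $r\in H^\theta(p,q)$, that is, if and only if $r\in H^\eta(p,q)\cap H^\theta(p,q)$. Since both $H^\eta(p,q)$ and $H^\theta(p,q)$ are defined as subsets of $M$ (the affine equations being considered together with the requirement that the corresponding $\theta$- or $\eta$-coordinates describe genuine points of $M$), this intersection is automatically the locus of such points restricted to $M$, i.e. $S(p,q)=H^\eta(p,q)\cap H^\theta(p,q)$. Finally, substituting $r=q$ makes both defining equations read $0=0$, so $q\in S(p,q)$ and the locus is nonempty.
\end{proof}
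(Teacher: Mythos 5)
Your proof is correct and follows essentially the same route as the paper: each dual orthogonality condition is translated via the (dual) Pythagorean equivalence into an affine equation defining $H^\eta(p,q)$ and $H^\theta(p,q)$ respectively, and the simultaneous locus is their intersection restricted to $M$. The paper likewise notes that $q$ itself lies in both flats; the only caveat worth adding is that $r=q$ is a degenerate solution (the geodesic $\gamma_{qr}$ collapses to a point), so nonemptiness of $S(p,q)$ does not by itself guarantee a nontrivial third vertex.
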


In general the intersection of a $\theta$-flat with a $\eta$-flat is neither a $\theta$-flat nor a $\eta$-flat.
In 2D, the submanifolds $H^\eta(p,q)$ and $H^\theta(p,q)$ can be interpreted as a dual geodesic and a primal geodesic, respectively.
Thus in 2D, $S_{pq}$ may not be simply connected, and the maximum number of points of the intersection of two geodesics upper bounds the number of solutions for $r$.
The next section illustrates how to build such triples of points for the Itakura-Saito manifold.

\subsection{Explicit construction in the Itakura-Saito manifold}\label{sec:doublyrightangleIS}

Consider the Itakura-Saito manifold described in~\S\ref{sec:IS}.
Let us exhibit some triple of points $(p,q,r)$  such that $\gamma_{pq} \perp_q \gamma_{qr}^*$ and $\gamma_{pq}^* \perp_q \gamma_{qr}$.
To avoid confusion, let us  write the 2D $\theta$- and $\eta$-coordinates of $p$ by $\theta_p^x$ and $\theta_p^y$, and  $\eta_p^x$ and $\eta_p^y$.
 
The second orthogonality constraint equation yields the equation $\eta_{pq}^\top (\theta_r-\theta_q)= \eta_{pq}^x\theta_r^x+\eta_{pq}^y\theta_r^y-\eta_{pq}^\top\theta_q = 0$ which can be rewritten as:
\begin{eqnarray}
\theta_r^y &=& -\frac{\eta_{pq}^x}{\eta_{pq}^y}\theta_r^x  + \frac{\eta_{pq}^\top\theta_q}{\eta_{pq}^y}= a\theta_r^x+b,
\end{eqnarray}
with $a= -\frac{\eta_{pq}^x}{\eta_{pq}^y}$ and $b=\frac{\eta_{pq}^\top\theta_q}{\eta_{pq}^y}$.

The first orthogonality constraint equation yields $(\theta_{pq})^\top (\eta_r-\eta_q)$ with $\eta_r^x=-\frac{1}{\theta_r^x}$  and $\eta_r^y=-\frac{1}{\theta_r^y}$. 
After multiplying both sides of the equation  by $\theta_r^x\theta_r^y$, we find 
\begin{eqnarray}
-\theta_{pq}^x \theta_r^y -\theta_{pq}^y\theta_r^x-\theta_{pq}^\top \eta_q \theta_r^x\theta_r^y=0
\end{eqnarray}

\begin{figure}
\centering
\includegraphics[width=0.3\textwidth]{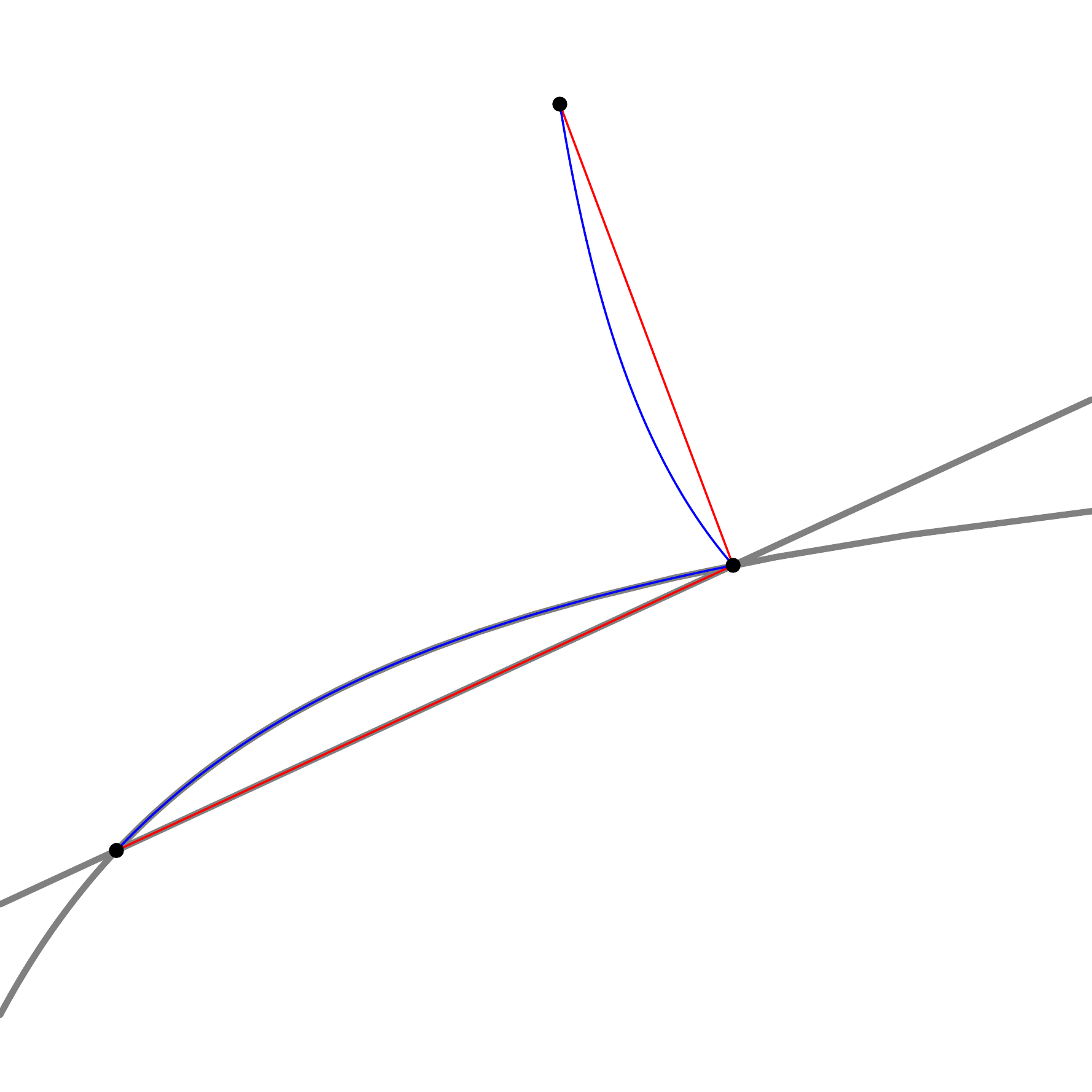}

\caption{The intersection of $H^\eta(p,q)$ and $H^\theta(p,q)$ (displayed in black) yields two points: point $q$ and the solution point $r$.\label{fig:constraints}}
\end{figure}

Letting $\theta_r^y=a\theta_r^x+b$, we get the following quadratic equation to solve:

\begin{eqnarray}
-\theta_{pq}^x (a\theta_r^x+b) -\theta_{pq}^y\theta_r^x-\theta_{pq}^\top \eta_q \theta_r^x (a\theta_r^x+b)=0.
\end{eqnarray}

We write the former equation into the following canonical form of quadratic equations:

\begin{eqnarray}
\underbrace{(-\theta_{pq}^\top \eta_q a)}_{A} (\theta_r^x)^2+
\underbrace{(-\theta_{pq}^xa-\theta_{pq}^y-b\theta_{pq}^\top\eta_q)}_{B} \theta_r^x
+\underbrace{(-\theta_{pq}^xb)}_{C}=0.
\end{eqnarray}

Then we solve the quadratic equation, and get the two solutions: 
Let $\Delta=B^2-4AC>0$ be the discriminant.
We have the two quadratic roots:
 ${\theta_r^x}=\frac{-B-\sqrt{\Delta}}{2A}$ and ${\theta_r^x}''=\frac{-B+\sqrt{\Delta}}{2A}$,
and we recover ${\theta_r^y}'=a{\theta_r^x}'+b$ and ${\theta_r^y}''=a{\theta_r^x}''+b$.
One of the two solutions $r'$ or $r''$ coincide with point $q$ (with coordinate $\theta(q)$), so the solution point $r$ is the remaining  distinct point.

Figure~\ref{fig:constraints} displays an example of a triple $(p,q,r)$ with doubly right-angle at $q$ with the pair of constraints   $H^\eta(p,q)$ and $H^\eta(p,q)$ passing through point $q$.

Let us give a numerical example. Set\\ 
$\theta(p)=(\theta_p^x,\theta_p^y)=(0.7273955397832663,	0.3279475469672596)$,\\
$\theta(q)=(\theta_q^x,\theta_q^y)=(0.46251884248040354,	0.3902872167636309)$.
Then we solve the quadratic equation and find the two solutions\\
$\theta(r)=(0.3065847355580658,	0.13822426240588664)$\\
and\\
$\theta(r'')=(0.4625188424804033,	0.39028721676363043)$.\\
Observe that $r''=q$ so that the solution is $r=r'$.
The triple $(p,q,r')$   holds simultaneously the dual Pythagorean theorems at point $q$.
 
Figure~\ref{fig:simultaneous} displays three examples of triples of points for which the  dual Pythagorean theorems hold
simultaneously. The triples of points displayed in Figure~\ref{fig:simultaneous} are from left to right:

\begin{enumerate}
\item
$\theta(p)=(0.9704854205553236, 1.4760141668100146)$,\\
$\theta(q)=(1.141690604206171, 0.43035569351200803)$ and\\
$\theta(r)=(0.2264761824188501 0.34444830042268043)$.

\item
$\theta(p)=(1.3163859900481611, 1.965380252548788)$,\\
$\theta(q)=(1.5136826962585432, 1.2440688670072433)$ and\\
$\theta(r)=(0.6359397574807304, 0.9494657726625966)$.

\item
$\theta(p)=(0.9511702030611633, 1.291145089053253)$,\\
$\theta(q)=(0.3277859642409383, 1.906447912395776)$ and\\
$\theta(r)=(0.1077217190919158, 0.14622448026891943)$.
 \end{enumerate}

\begin{figure}
\centering

\fbox{\includegraphics[width=0.3\textwidth]{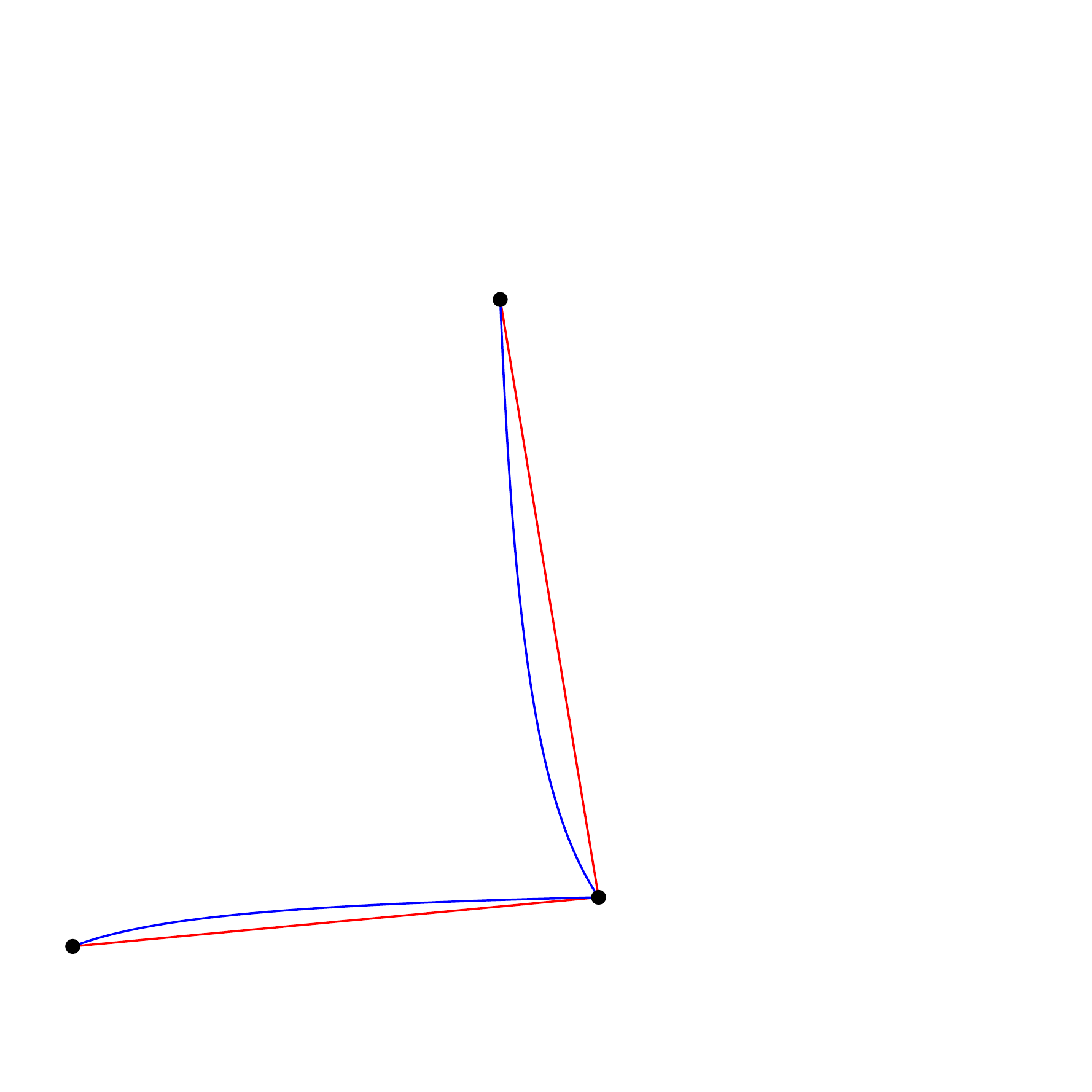}}
\fbox{\includegraphics[width=0.3\textwidth]{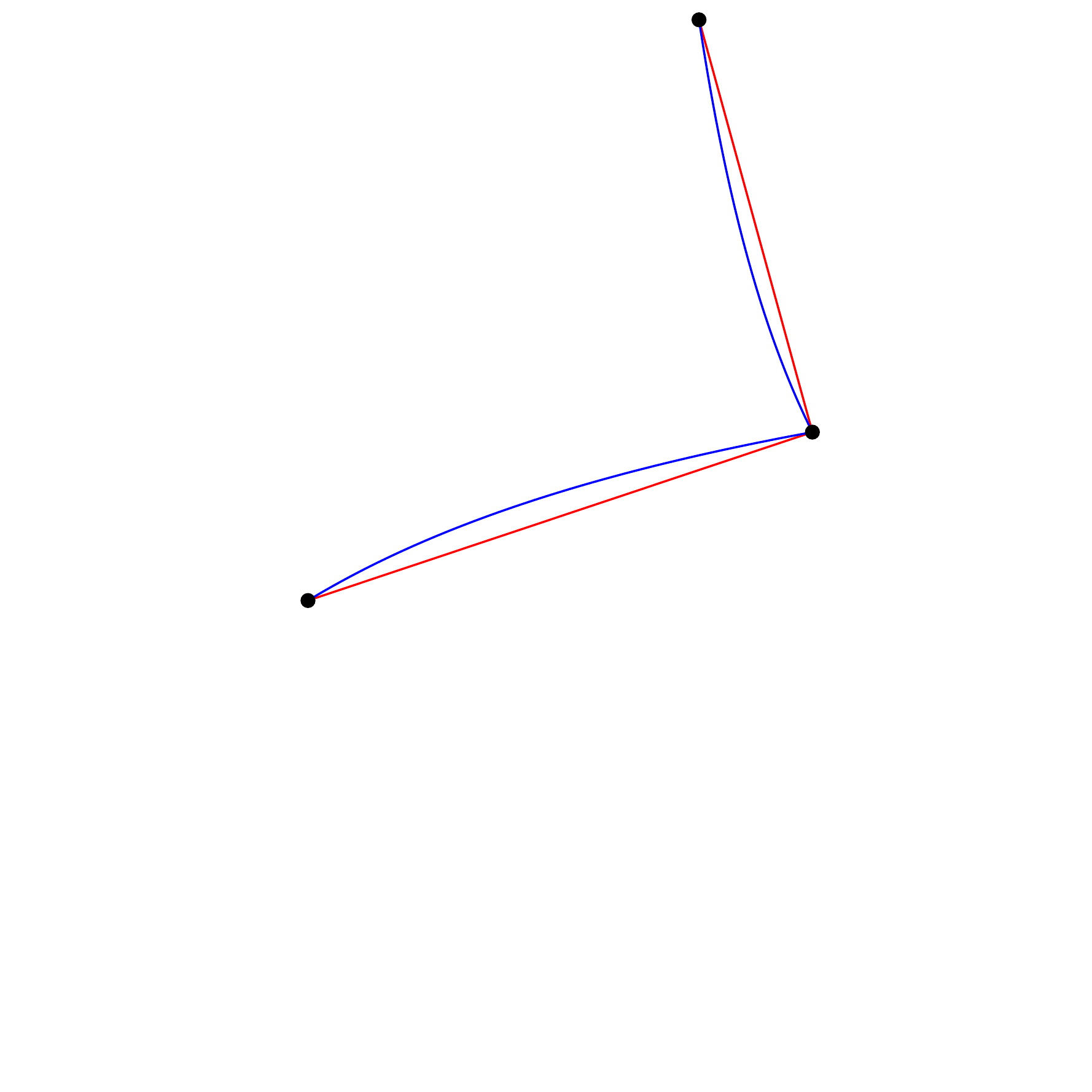}}
\fbox{\includegraphics[width=0.3\textwidth]{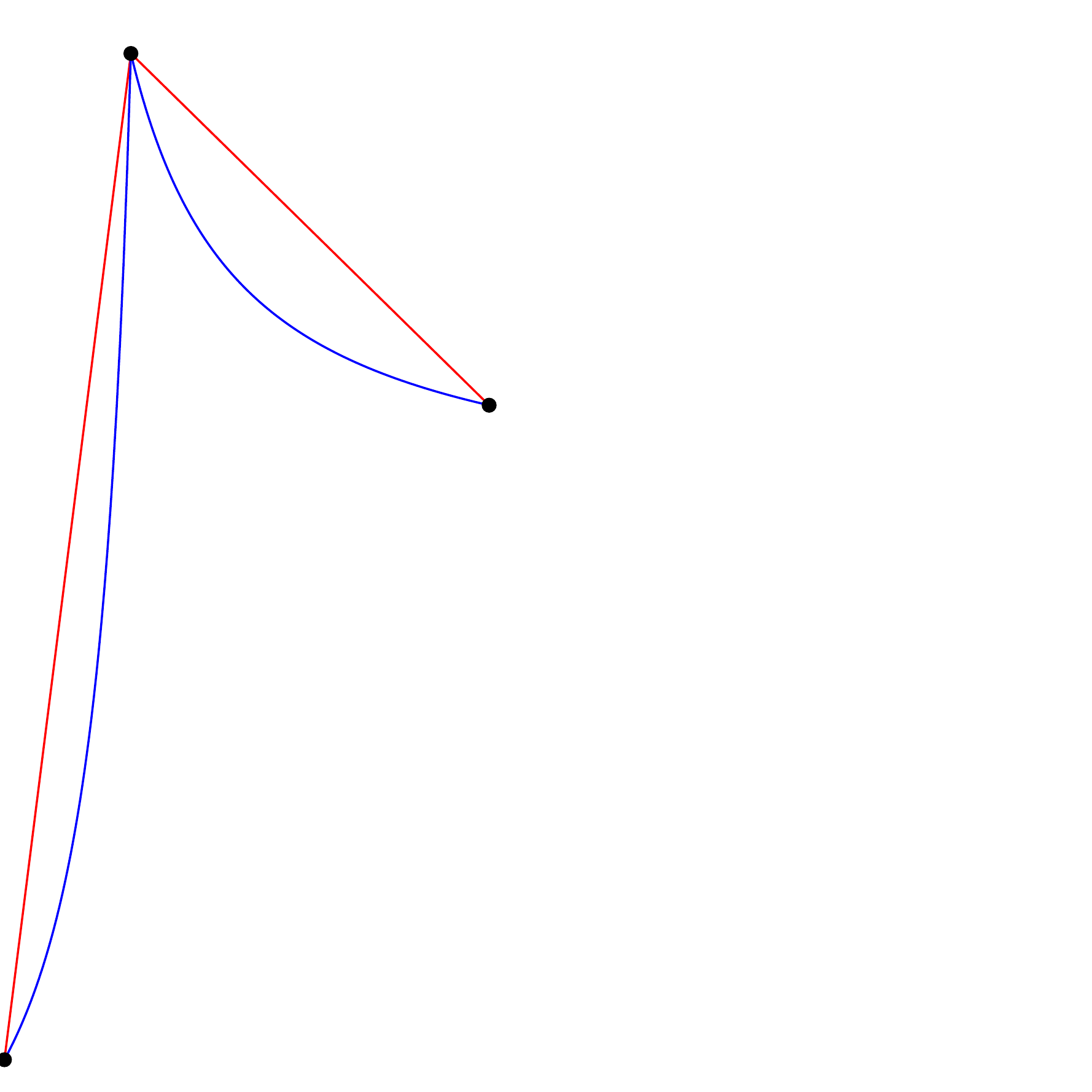}}

\caption{Three examples of triples of points $(p,q,r)$ visualized in the $\theta$-coordinate system for which the dual Pythagorean theorems hold simultaneously at $q$ for the Burg negentropy generator.
That is, we have both $\gamma_{pq} \perp_q \gamma_{qr}^*$ and $\gamma_{pq}^* \perp_q \gamma_{qr}$: 
The two pairs of (red,blue) geodesics form a right-angle at $q$: A ``doubly right-angle''.
\label{fig:simultaneous}}
\end{figure}

From these two pairs of dual right angle geodesic arcs at $q$, we can obtain four geodesic triangles by choosing either the primal or dual geodesic edge for the triangle edge $pq$: Namely, $\gamma_{pq}\gamma_{qr}\gamma_{rp}^*$, $\gamma_{pq}^*\gamma_{qr}\gamma_{rp}^*$ and
$\gamma_{pq}\gamma_{qr}^*\gamma_{rp}$, $\gamma_{pq}^*\gamma_{qr}^*\gamma_{rp}$.
These four triangles can be grouped into  two dual pairs of dual geodesic triangles which exhibit a {\em dually right angle} at vertex $q$:
$(\gamma_{pq}\gamma_{qr}\gamma_{rp}^*,\gamma_{pq}^*\gamma_{qr}^*\gamma_{rp})$
and $(\gamma_{pq}^*\gamma_{qr}\gamma_{rp}^*,\gamma_{pq}\gamma_{qr}^*\gamma_{rp})$.

Similarly, solving the dual orthogonality constraint at $q$ for the cubic generator $F(\theta)=\frac{1}{3}\sum_i\theta_i^3$ yields  a quadratic equation to solve.
However, when considering the extended Shannon negentropy generator $F(\theta)=\sum_{i=1}^D \theta^i\log\theta^i-\theta^i$, we get a nonlinear equation (with sum of logarithmic terms) to solve: In 1D, we can easily run numerical optimization to approximate a solution numerically.

\section{Conclusion}\label{sec:concl}
In a dually flat space~\cite{IG-2016} that we called a Bregman manifold in \S\ref{sec:BM}, geodesic triangles can either have angle excesses or angle defects like in arbitrary Riemannian geometry when the manifold is non self-dual (i.e., not of type Mahalanobis).
First, we explained in \S\ref{sec:doubleright} how to build geodesic $\nabla$-triangles with one, two or three right angles provided that the corresponding system of equations is feasible.
The system of equations is linear up to two right angles but non-linear when dealing with three right angles.
Second, we showed how to build triple of points $(p,q,r)$ such that the dual Pythagorean theorems hold simultaneously  at point $q$ yielding a dually right angle at $q$: two dual pairs of right-angle dual geodesics.
It turned out that the locii of such points $r$ for given points $p$ and $q$ is the intersection of a $\eta$-flat with a $\theta$-flat.
We reported the explicit construction of such triples for the Itakura-Saito manifold in \S\ref{sec:doublyrightangleIS}.  

In future work, we shall consider dually flat spaces for symmetric positive-definite matrices~\cite{BD-Matrix-2013,MIG-2013,IG-Amari2014} where the inner product is the trace of a matrix product.
We would also like to prove the following experimental observation: For the 2D Burg negentropy generator, the total sum $\alpha(T)$ of the interior angles of a geodesic $\nabla$-triangle (geodesic triangle with all primal edges)  plus the total sum $\beta=\alpha(T^*)$ of  the interior angles of a dual geodesic $\nabla^*$-triangle  (geodesic triangle with all dual geodesic edges)  sum up to $2\pi$.
For example, for $\theta(p)=(0.5,0.5)$, $\theta(q)=(0.75,0.75)$ and $\theta(r)=(0.95,0.25)$, we find that
$\alpha(T)=160.19318300825412^{o}$ (angle defect), $\beta=\alpha(T^*)=199.80681699174588^{o}$ (angle excess), and $\alpha+\beta=360.0^{o}$.
This property seems only to hold for the 2D Itakura-Saito divergence and not in higher dimensions.

We shall also consider an extension of this work  to study properties of geodesic convex $k$-gons instead of geodesic triangles (i.e., $3$-gons) in dually flat spaces ($2^k$ such geodesic $k$-gons). 
For example, the Lambert quadrilaterals~\cite{NonEucl-2012} (i.e., $4$-gons) have three  right angles and the remaining angle which is acute in hyperbolic geometry, obtuse in spherical geometry, and a right angle in Euclidean geometry.
In a dually flat space, we have $2^4=16$ types of quadrilaterals defining overall $4 \times 4=16$ interior angles.\footnote{A each quadrilateral vertex, we have $4$ geodesics defining $6$ interior angles between them.}
Figure~\ref{fig:quadrangle} displays all pairs of dual geodesics of some convex quadrilaterals in the $\theta$- and $\eta$-coordinate systems.

\begin{figure}
\centering

\begin{tabular}{|c|c|}\hline
$\theta$-coordinate system & $\eta$-coordinate system\\ \hline
\includegraphics[width=0.3\textwidth]{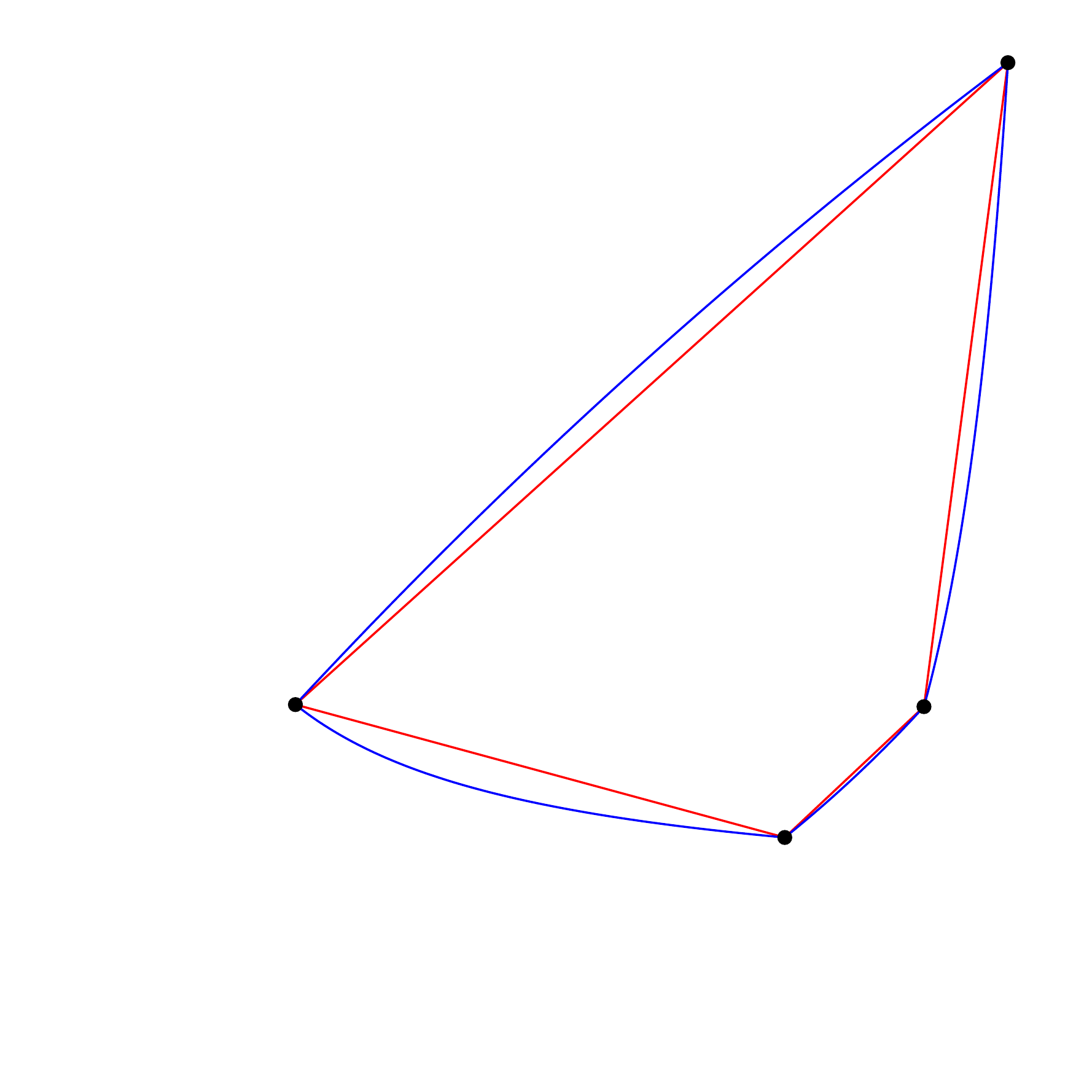} &
\includegraphics[width=0.3\textwidth]{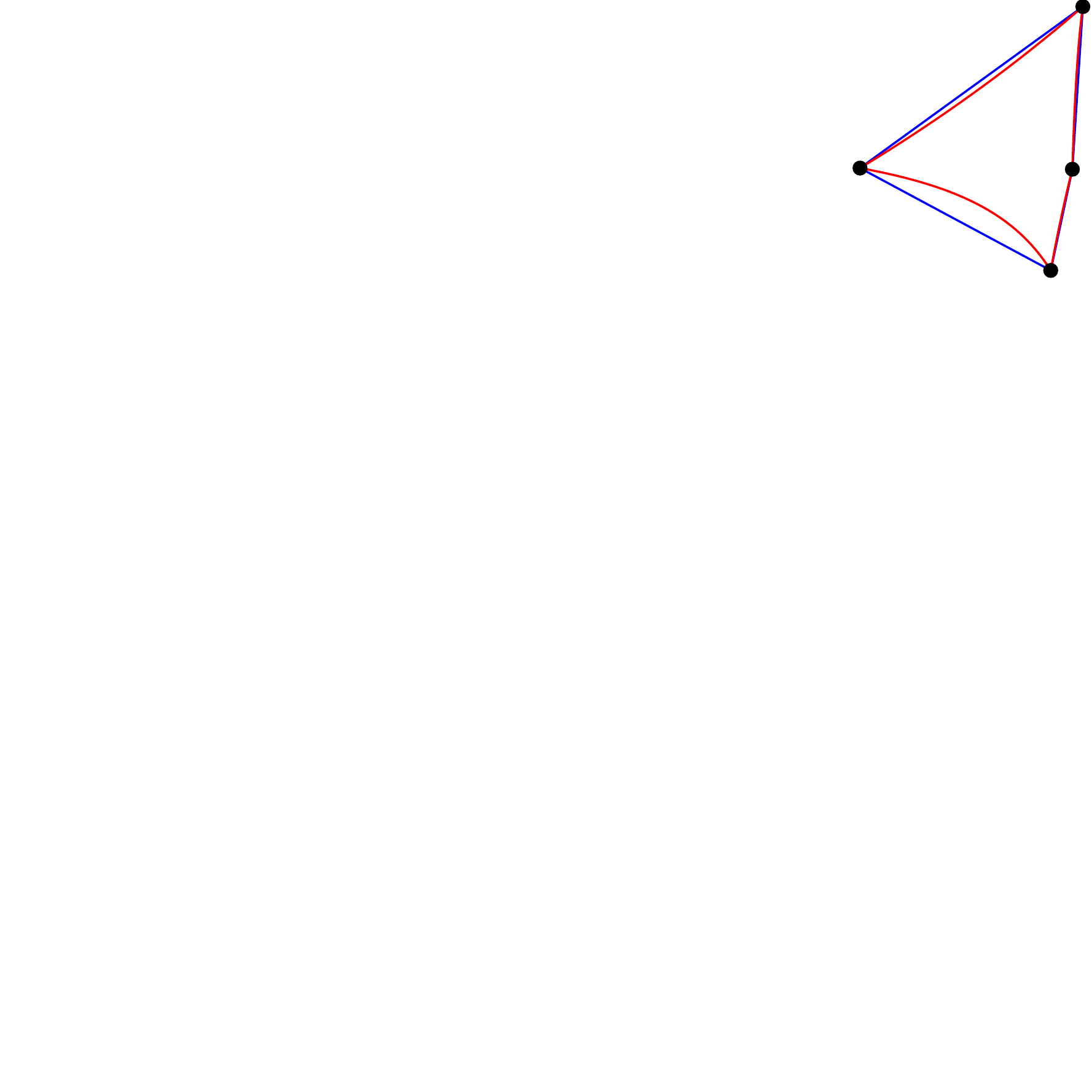} \\ \hline
\includegraphics[width=0.3\textwidth]{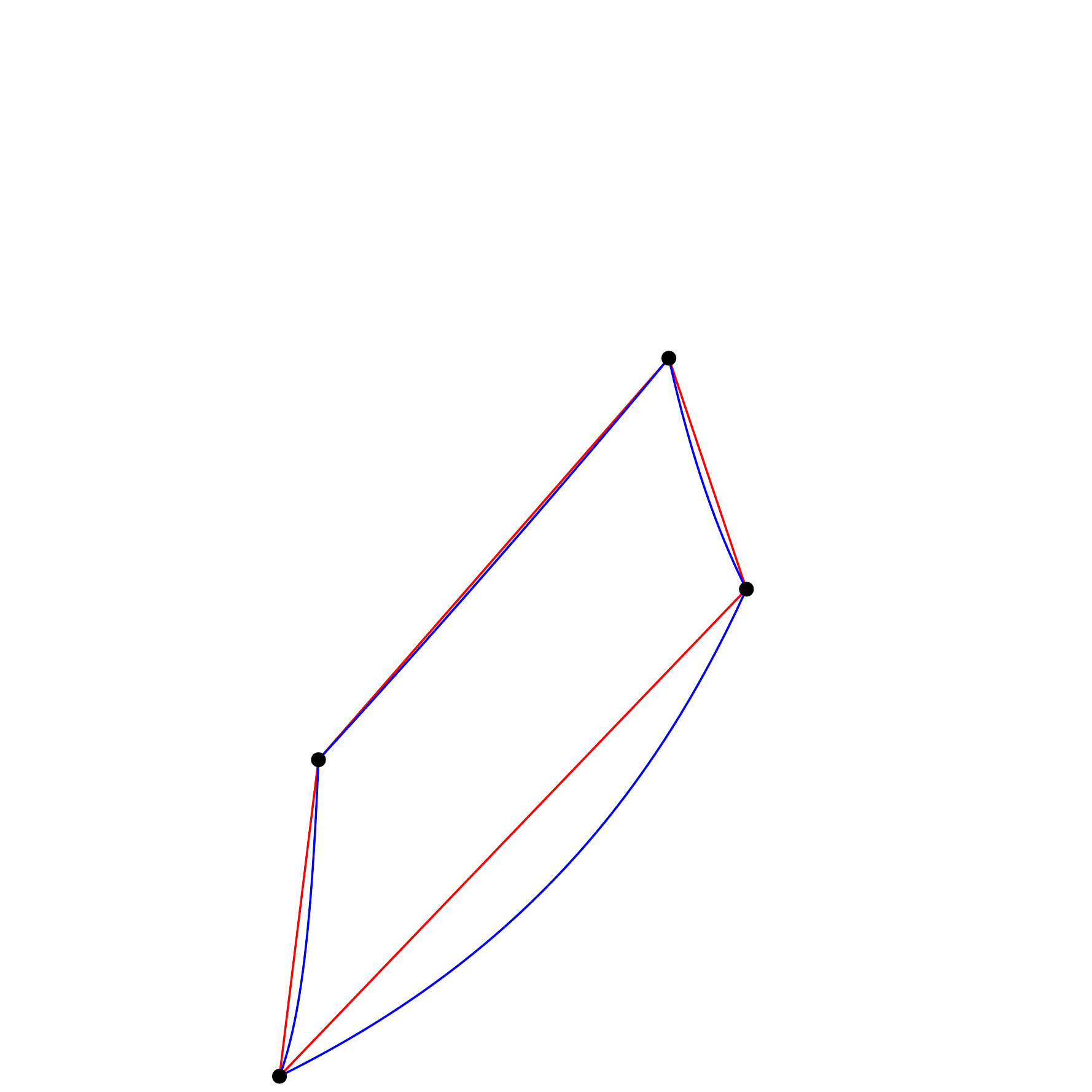} &
\includegraphics[width=0.3\textwidth]{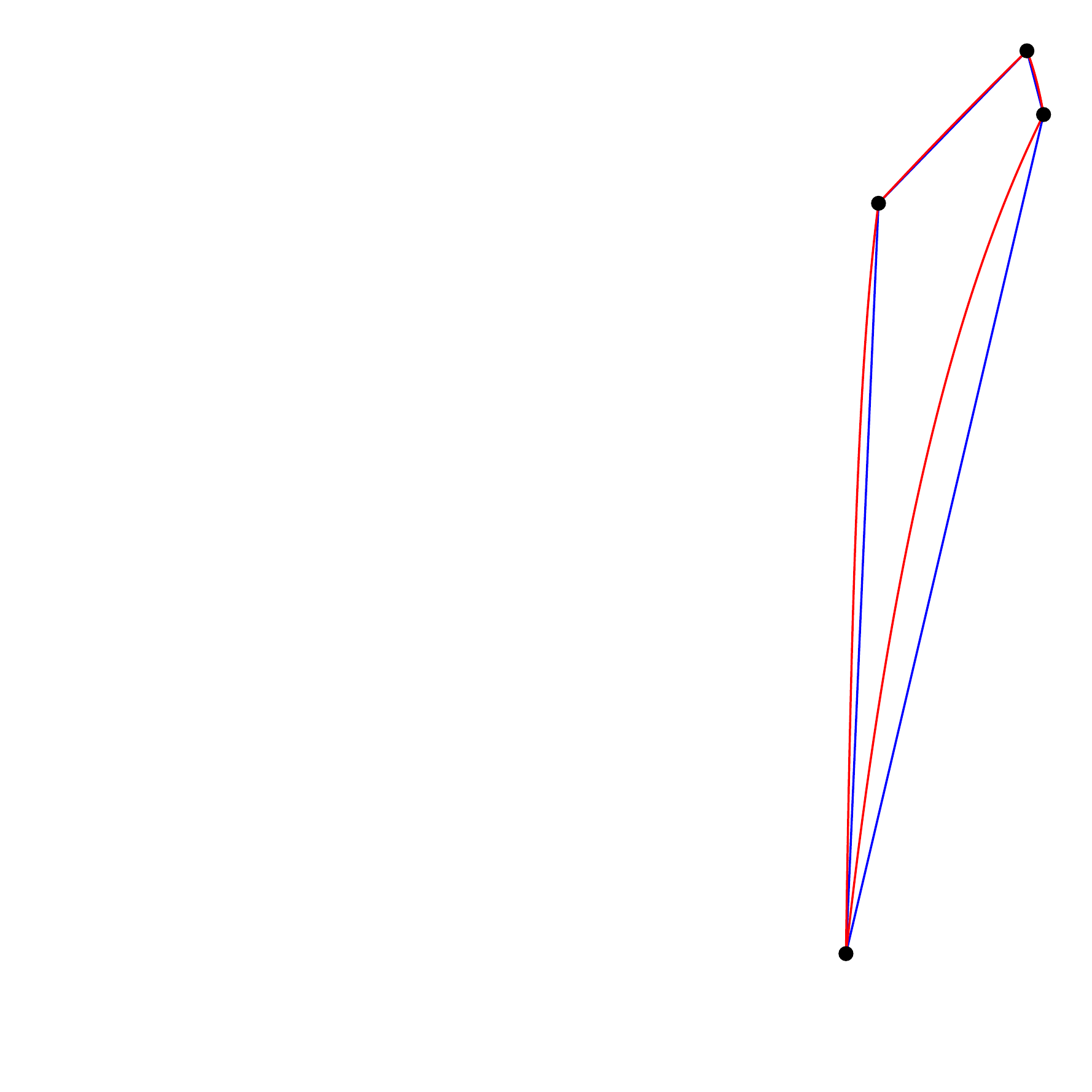} \\ \hline
\end{tabular}

\caption{Examples of convex quadrangles in the Itakura-Saito manifold.
\label{fig:quadrangle}}
\end{figure}

\noindent {Acknowledgments:}  Figures were programmed using \url{processing.org}

\bibliographystyle{plain}
\bibliography{GeodesicTriangleBIB}

\appendix

\section{Notations}\label{sec:notations}

\begin{tabular}{ll}
$F$ & Strictly convex and $C^3$ real-valued function\\
$F^*$ & Dual Legendre-Fenchel convex conjugate\\
$\theta(p)=(\theta^1(p),\ldots,\theta^D(p))$ & primal coordinates of point $p$\\
$\eta(p)=(\eta_1(p),\ldots,\eta_D(p))$ & dual coordinates of point $p$\\
$\theta_{ab}=\theta_a-\theta_b$ & notational shortcut \\
$\eta_{ab}=\eta_a-\eta_b$ & notational shortcut \\
$D_F(p:q)$ & Divergence between points\\
$B_F(\theta(p):\theta(q))$ & Bregman divergence\\
$A_F(\theta(p):\eta(q))$ & Fenchel-Young divergence\\
$(M,g,\nabla,\nabla^*)$ & Dually flat space (Bregman manifold)\\
$T_p$ & Tangent plane at $p$\\
$g_p(u,v)$ & inner product between two vectors $u$ and $v$ of  $T_p$\\
$[g_{ij}]=[g(e_i,e_j)]_{ij}=\nabla^2F(\theta)$ & Riemannian metric\\
$[{g^*}^{ij}]=[g^*({e^*}^i,{e^*}^j)]_{ij}=\nabla^2F^*(\eta)$ & dual Riemannian metric \\
$\prod_{p,q}(v)$ & primal parallel transport of $v\in T_p$ to $T_q$\\
$\prod_{p,q}^*(v)$ & dual parallel transport of $v\in T_p$ to $T_q$\\
%
$\gamma_{ab}(t)$ & Primal geodesic: $\theta(\gamma_{ab}(t))=(1-t)\theta(a)+t\theta(b)$\\
$\gamma_{ab}(t)^*$ & Dual geodesic: $\eta(\gamma_{ab}^*(t))=(1-t)\eta(a)+t\eta(b)$\\
$(v)_B$ & vector components in basis $B$, arranged in a $D$-tuple\\
$[v]_B$ & vector components in basis $B$, arranged in a $D$-dimensional column vector\\
$B_p=\{e_i=\partial_i=\frac{\partial}{\partial\theta^i}\}$ & natural basis at $T_p$   \\
$B_p^*=\{{e^*}^i=\partial^i=\frac{\partial}{\partial\eta_i}\}_i$ &    reciprocal basis  at $T_p$ so that $g(e_i,{e^*}^j)=\delta_{i}^j$\\
$v_{ab}=\frac{d}{\dt}\gamma_{ab}(0)=\dot\gamma_{ab}(0)$ & tangent vector of $\gamma_{ab}(t)$ at $a$ with contravariant components $\theta(b)-\theta(a)$\\
$v_{ab}^*=\frac{d}{\dt}\gamma_{ab}^*(0)=\dot\gamma_{ab}^*(0)$ & tangent vector of $\gamma_{ab}^*(t)$ at $a$ with covariant components $\eta(b)-\eta(a)$\\
$[v^i]_B$ & contravariant components of vector $v$,  $v^i=g(v,{e^*}^i)$\\
$[v_i]_B$ & covariant components of vector $v$ (meaning $[v]_{B^*}$),  $v_i=g(v,e_i)$\\
$g_p(u,v)$ &  inner product at $T_p$ of two vectors: $g_p(u,v)=u_i v^i=u^i v_i$\\
\end{tabular}

\end{document}